\DeclareMathOperator*{\esssup}{ess\,sup}
\DeclareMathOperator*{\essinf}{ess\,inf}
\newtheorem{theorem}{Theorem}
\newtheorem{definition}[theorem]{Definition}
\newtheorem{lemma}[theorem]{Lemma}
\newtheorem{proposition}[theorem]{Proposition}
\newtheorem{remark}[theorem]{Remark}
\newtheorem{assumption}{Assumption}
\newenvironment{proof}[1][Proof]{\noindent \textbf{#1.} }{\  \rule{0.5em}{0.5em}}
\begin{document}

\title{An ergodic BSDE approach to forward entropic risk measures:
representation and large-maturity behavior\thanks{%
We thank the Editor, the Associate Editor, and two referees for
their comments and suggestions on the paper. This work was presented
at the SIAM Conference on Financial Mathematics and Engineering,
Austin; the 9th World Congress of the Bachelier Finance Society,
New York; the 9th European Summer School in Financial Mathematics,
Pushkin; and the 5th Berlin Workshop on Mathematical Finance for Young Researchers, Belrin.
The authors would like to thank the audience for
helpful comments and suggestions.}}
\author{W. F. Chong\thanks{%
Department of Statistics and Actuarial Science, The University of
Hong Kong, Pokfulam Road, Hong Kong; email:
\texttt{alfredcwf@hku.hk}. Department of
Mathematics, King's College London, London WC2R 2LS, U.K.; email: \texttt{%
wing\_fung.chong@kcl.ac.uk} } \and Y. Hu\thanks{%
IRMAR, Universit{\'e} Rennes 1, Campus de Beaulieu, 35042 Rennes
Cedex,
France; email: \texttt{ying.hu@univ-rennes1.fr} } \and G. Liang\thanks{%
Department of Mathematics, King's College London, London WC2R 2LS,
U.K.;
email: \texttt{gechun.liang@kcl.ac.uk} } \and T. Zariphopoulou\thanks{%
Department of Mathematics and IROM, The University of Texas at
Austin, U.S.A. and the Oxford-Man Institute, University of Oxford,
U.K.; email: \texttt{zariphop@math.utexas.edu}}}
\date{First version:\ January 2016; This version: \today}
\maketitle

\begin{abstract}
Using elements from the theory of ergodic backward stochastic
differential equations (BSDE), we study the behavior of forward
entropic risk measures. We provide their general representation
results (via both BSDE and convex duality) and examine their
behavior for risk positions of long maturities. We show that forward
entropic risk measures converge to some constant exponentially fast.
We also compare them with their classical counterparts and derive a
parity result.
\end{abstract}

\section{Introduction}

Risk measures constitute one of the most active areas of research in
financial mathematics, for they provide a general axiomatic
framework to assess risks. Their universality and wide
applicability, together with the wealth of related interesting
mathematical questions, have led to
considerable theoretical and applied developments (see, among others, \cite%
{Artzner_1999, Follmer_2010,Frittelli_2002} with more references
therein, and \cite{Detlefsen_2005, Kloppel_2007, Riedel_2004} for
dynamic convex risk measures).

A number of popular risk measures are defined in relation to
investment opportunities in a given financial market like, for
example, VaR, CVaR, indifference prices, etc. Such measures are
however tied to both a horizon and a market model, and these choices
are made at initial time with limited, if any, flexibility to be
revised.

As a result, issues related to how the risk of upcoming positions of
arbitrary maturities can be assessed, model revision can be
implemented, time-consistency can be preserved, etc. arise. Some of
these questions were addressed by one of the authors and Zitkovic in
\cite{ZZ}, where an axiomatic construction of the so-called
\textquotedblleft maturity-independent" risk measures was proposed.

Herein we analyze an important subclass of maturity-independent risk
measures, the \textit{forward entropic} ones. They are constructed
via the forward exponential performance criteria (see Definition \ref%
{def_forward_risk_measure}) and yield the risk assessment of a
position by comparing the optimal investment, under these criteria,
with and without the risk position. Like the forward performance
processes, via which they are built, the forward entropic measures
are defined for \textit{all} times.

In this paper, we focus on a stochastic factor model in an
incomplete market, which consists of multi-assets, and their pricing
dynamics depend on correlated stochastic factors (see (\ref{stock})
and (\ref{factor})). Stochastic factors are frequently used to model
the dynamics of assets (see, for example, the review paper
\cite{Zar}). In the forward performance setting, the use of
stochastic factors is discussed in \cite{Nadtochiy-Tehranchi}, where
the multi-stock/multi-factor complete market case is solved. The
incomplete market case with a single stock/single factor was
examined in \cite{Nadtochiy-Z} and, more recently, in
\cite{Shkolnikov-Sircar-Z} for a model with slow and fast stochastic
factors.

Our contribution is threefold. Firstly, we provide a general
backward stochastic differential equation (BSDE) representation
result for forward entropic risk measures. We do so by building on a
recent work of two of the authors (\cite{LZ}), who developed a new
approach for the construction of homothetic (exponential, power and
logarithmic) forward performance processes using elements from the
theory of ergodic BSDE. This method bypasses a number of technical
difficulties associated with solving an underlying ill-posed
stochastic partial differential equation (SPDE) that the forward
process is expected to satisfy (see \cite{ElKaroui} and \cite{MZ3}
for the discussion on the corresponding SPDE).

For the exponential forward family we consider herein, the approach in \cite%
{LZ} yields the forward performance criterion in a factor
form (see (\ref{ExponentialForwardUtility}%
)), for which the representation is unique. Having this
representation result, we show that the risk measure satisfies a
BSDE\ whose driver, however, depends on the solution of the
aforementioned ergodic BSDE (see Theorem \ref{theorem}), and such a
solution can be linked to the \emph{volatility} of the forward
process (see Remark \ref{remark_1}). The fact that the risk measure
is computed via a BSDE is not surprising since we are dealing with
the valuation of a risk position, which is by nature a
\textquotedblleft backward" problem. However, the new element is the
dependence of the driver of this BSDE\ on the solution of another,
actually ergodic, BSDE.

We then show two applications of the BSDE representation result. By
using the convex property of the BSDE driver, we derive a convex
dual representation of forward entropic risk measures (see Theorem
\ref{theorem_dual}). More specifically, we show that the forward
entropic risk measure is the minimal expected value of the risk
position (subject to a penalty term) over all equivalent probability
measures. Such a penalty term, depending solely on the stochastic
factor process and the volatility of the forward performance
process, is the convex dual of the BSDE driver. As a consequence, we
obtain several properties of forward entropic risk measures, namely,
{anti-positivity}, {convexity} and {cash-translativity}. In a single
stock/single stochastic factor case, we further show that the
equivalent probability measures are actually equivalent martingale
measures (see section 6).

We then study the asymptotic behavior of forward entropic risk
measures when the maturity of the risk position is very long. For
risk positions that are deterministic functions of the stochastic
factor process, we show that their risk measure converges to a
constant, which is independent of the initial state of the
stochastic factors, and, furthermore, we prove that the convergence
is exponentially fast. As a consequence, we derive an explicit
exponential bound of the investor's hedging strategies. In
particular, when the maturity goes to infinity, we show that the
investor will not do any trading to hedge the underlying risks in
any finite time (see Theorem \ref{thm:long_term_price}).

Our third contribution is a derivation of a parity result between
the forward and the classical entropic risk measures (see
Proposition \ref{proposition14}). We show that the forward measure
can be constructed as the difference of two classical entropic
measures applied to the risk position, and to a normalizing factor
related to the solution of the ergodic BSDE\ for the forward
criterion.

We conclude with an example cast in the single stock/single
stochastic factor case. Using the ergodic BSDE\ approach, we derive
a closed form representation of the forward risk measure (see
(\ref{closed-form})) and its convex dual representation (see
(\ref{convex_formula })). We
also derive a representation of the classical entropic risk measure (see (%
\ref{closed-form-backward})) and, in turn, compute numerically the
long-term limits of the two measures for specific risk positions.

The paper is organized as follows. In section 2, we introduce the
stochastic factor model and provide background results on
exponential forward performance processes. In section 3, we provide
general representation results of forward entropic risk measures
using both BSDE and convex duality. We also study their behavior for
risk positions of long maturities. In section 4, we present the
proofs of the main results and, in section 5, we derive the parity
result. We conclude in section 6 with an example.


\section{The stochastic factor market model}

Let $W$ be a $d$-dimensional Brownian motion on a probability space
$(\Omega ,\mathcal{F},\mathbb{P})$. Denote by $\mathbb{F}=\{
\mathcal{F}_{t}\}_{t\geq 0}$ the augmented filtration generated by
$W$. We consider a market of a risk-free bond offering zero interest
rate and $n$ risky stocks, with $n\leq d$. The stock price processes
$S_{t}^{i}$, $t\geq 0$, solve, for $i=1,\ldots ,n$,
\begin{equation}
\frac{dS_{t}^{i}}{S_{t}^{i}}=b^{i}(V_{t})dt+\sigma
^{i}(V_{t})dW_{t}\text{.} \label{stock}
\end{equation}%
The $d$-dimensional stochastic process $V$ models the stochastic
factors affecting the coefficients of the stock prices, and solves
\begin{equation}
dV_{t}=\eta (V_{t})dt+\kappa dW_{t}\text{.}  \label{factor}
\end{equation}

We introduce the following model assumptions. Throughout, we will be
using the superscript $A^{tr}$ to denote the transpose of matrix
$A$.

\begin{assumption}
\label{assumption1} i) The drift and volatility coefficients,
$b^{i}(v)\in \mathbb{R}$ and $\sigma ^{i}(v)\in \mathbb{R}^{1\times
d}$ are uniformly bounded in $v\in \mathbb{R}^d$.

ii)\ The volatility matrix, defined as $\sigma (v)=(\sigma
^{1}(v),\ldots ,\sigma ^{n}(v))^{tr}$, has full row rank $n.$

iii)\ The market price of risk, defined as
\begin{equation}
\theta (v):=\sigma (v)^{tr}[\sigma (v)\sigma (v)^{tr}]^{-1}b(v),
\label{market-price-risk}
\end{equation}%
$v\in \mathbb{R}^{d}$, is uniformly bounded and Lipschitz
continuous.
\end{assumption}

Note that $\theta (v)$ solves $\sigma (v)\theta (v)=b(v)$, $v\in \mathbb{R}%
^{d},$ which admits a solution because of (ii) above.

\begin{assumption}
\label{assumption2} i) The drift coefficients $\eta (v)\in
\mathbb{R}^{d}$ satisfy a dissipative condition, namely, there
exists a large enough constant $C_{\eta }>0$ such that, for
$v,\bar{v}\in \mathbb{R}^{d}$,
\begin{equation*}
(\eta (v)-\eta (\bar{v}))^{tr}(v-\bar{v})\leq -C_{\eta }|v-\bar{v}|^{2}\text{%
.}
\end{equation*}

ii) The volatility matrix $\kappa \in \mathbb{R}^{d\times d}$ is
positive definite and normalized to $|\kappa |=1$.
\end{assumption}

The \textquotedblleft large enough" property will be quantified in
the
sequel when it is assumed that $C_{\eta }>C_{v}>0,$ where the constant $%
C_{v} $ appears in the properties of the driver of an upcoming
ergodic BSDE (see inequality (\ref{driver0})).

Under Assumption 2, the stochastic factor process $V$ admits a
unique invariant measure and it is thus ergodic. Moreover, any two
paths will converge to each other exponentially fast.

In this market environment, an investor trades dynamically among the
risk-free bond and the risky assets. Let $\tilde{\pi}=(\tilde{\pi}%
^{1},\cdots ,\tilde{\pi}^{n})^{tr}$ denote the (discounted by the
bond) amounts of her wealth in the stocks, which are taken to be
self-financing. Then, the (discounted by the bond) wealth process
satisfies
\begin{equation*}
dX_{t}^{\tilde{\pi}}=\sum_{i=1}^{n}\frac{\tilde{\pi}_{t}^{i}}{S_{t}^{i}}%
dS_{t}^{i}=\tilde{\pi}_{t}^{tr}\sigma (V_{t})\left( \theta
(V_{t})dt+dW_{t}\right) ,
\end{equation*}%
with $X_{0}=x\in \mathbb{R}.$ As in \cite{LZ}, we work with the
investment strategies rescaled by the volatility,
\begin{equation*}
\pi _{t}^{tr}:=\tilde{\pi}_{t}^{tr}\sigma (V_{t}).
\end{equation*}%
Then, the wealth process satisfies
\begin{equation}
dX_{t}^{\pi }=\pi _{t}^{tr}\left( \theta (V_{t})dt+dW_{t}\right) .
\label{wealth}
\end{equation}%
For any $t\geq 0$, we denote by $\mathcal{A}_{[0,t]}$ the set of the
admissible investment strategies in $[0,t],$ defined as
\begin{equation*}
\mathcal{A}_{[0,t]}=\{ \pi \in \mathcal{L}_{BMO}^{2}[0,t]:\  \pi
_{s}\in \Pi \  \text{for}\ s\in \lbrack 0,t]\} \text{,}
\end{equation*}%
where $\Pi$ is a closed and convex subset in $\mathbb{R}^{d}$ also
including the origin $0$, and
\begin{eqnarray*}
&&\mathcal{L}_{BMO}^{2}[0,t]=\left \{ \pi _{s},\text{ }s\in \left[
0,t\right] :\pi \  \text{is}\  \mathbb{F}\text{-progressively\
measurable and}\right.\\
&&\esssup_{\tau }E_{\mathbb{P}}\left[\left. \int_{\tau }^{t}|\pi
_{s}|^{2}ds\right\vert%
\mathcal{F}_{\tau }\right] <\infty, \  \text{for any}\  \mathbb{F}\text{%
-stopping time}\  \tau \in \lbrack 0,t]\} \text{.}
\end{eqnarray*}%
The set of admissible investment strategies for \textit{all} $t\geq
0$ is, in turn, defined as $\mathcal{A}=\cup _{t\geq
0}\mathcal{A}_{[0,t]}$.\\

The investor has an exponential forward performance criterion for
her admissible investment strategies. For the reader's convenience,
we start with some background results on the forward performance
criterion. We
first recall its definition (see \cite%
{MZ0}-\cite{MZ2}) and, in turn, focus on the exponential class. We
then recall its ergodic BSDE representation, established in
\cite{LZ}.

\begin{definition}
\label{def:forward_performance} \label{def} Let $\mathbb{D=R}\times
\left[ 0,\infty \right)$. A process $U\left( x,t\right) $, $\left(
x,t\right) \in \mathbb{D}$, is a forward performance process if:

i) for each $x\in \mathbb{R}$, $U\left( x,t\right) $ is $\mathbb{F}$%
-progressively measurable,

ii) for each $t\geq 0$, the mapping $x\mapsto U(x,t)$ is strictly
increasing and strictly concave,

iii) for all $\pi \in \mathcal{A}$ and $0\leq t\leq s$,
\begin{equation*}
U( X_{t}^{\pi },t) \geq E_{\mathbb{P}}\left[ U(X_{s}^{\pi },s)|%
\mathcal{F}_{t}\right] ,
\end{equation*}%
and there exists an optimal $\pi ^{\ast }\in \mathcal{A}$ such that,
\begin{equation*}
U( X_{t}^{\pi ^{\ast }},t) =E_{\mathbb{P}}\left[ U(X_{s}^{\pi ^{\ast
}},s)|\mathcal{F}_{t}\right] ,
\end{equation*}%
with $X^{\pi },X^{\pi ^{\ast }}$solving (\ref{wealth}).
\end{definition}

Throughout the paper, we work with \emph{Markovian exponential
forward criteria} that are appropriate functions of the stochastic
factor process $V,$ namely,
\begin{equation}
U\left( x,t\right)=-e^{-\gamma x+f(V_{t},t)}\text{,}
\label{exponential-forward-general}
\end{equation}%
for $\left( x,t\right) \in \mathbb{D},$ with $\gamma >0$, and the function $%
f:\mathbb{R}^{d}\times \left[ 0,\infty \right) \rightarrow
\mathbb{R}$ to be determined. The main idea is to use the Markovian
solution of an ergodic BSDE to construct the function $f$. We refer
to \cite{Pham, HU1, HU2, Hu11, LZ} for recent developments of
ergodic BSDE. In {Proposition 4.1} of \cite{LZ}, the following
result was proved.

\begin{proposition}
\label{prop:ergoic_exist} Suppose that Assumptions \ref{assumption1} and \ref%
{assumption2} hold. Then, the ergodic BSDE
\begin{equation}
dY_{t}=(-F(V_{t},Z_{t})+\lambda )dt+Z_{t}^{tr}dW_{t},
\label{EQBSDE2}
\end{equation}%
where the driver $F:\mathbb{R}^{d}\times \mathbb{R}^{d}\rightarrow
\mathbb{R}$ is defined as
\begin{equation}
F(v,z):=\frac{1}{2}\gamma ^{2}dist^{2}\left \{ \Pi ,\frac{z+\theta (v)}{%
\gamma }\right \} -\frac{1}{2}|z+\theta
(v)|^{2}+\frac{1}{2}|z|^{2}\text{,} \label{driver}
\end{equation}%
with $\theta \left( \cdot \right) $ as in (\ref{market-price-risk}),
admits a unique Markovian solution $(Y_{t},Z_{t},\lambda ),$\ $t\geq
0$. Specifically, there
exist a unique $\lambda \in \mathbb{R},$ and functions $y:\mathbb{R}%
^{d}\rightarrow \mathbb{R}$ and $z:\mathbb{R}^{d}\rightarrow
\mathbb{R}^{d}$ such that
\begin{equation}
\left( Y_{t},Z_{t}\right) =\left( y\left( V_{t}\right) ,z\left(
V_{t}\right) \right) .  \label{Y-Z}
\end{equation}%
The function $y(\cdot )$ has at most linear
growth with $y(0)=0$, and $z(\cdot )$ is bounded with $|z(\cdot )|\leq \frac{C_{v}}{%
C_{\eta }-C_{v}}$, where $C_{\eta }$ and $C_{v}$ are as in Assumption \ref%
{assumption2} and equality (\ref{driver0}), respectively. Moreover,
\begin{equation}
|\nabla y(v)|\leq \frac{C_{v}}{C_{\eta }-C_{v}}\text{.}
\label{gradient_for_y}
\end{equation}
\end{proposition}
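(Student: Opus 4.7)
I would approach this through the infinite-horizon discounted BSDE method of Fuhrman--Hu--Tessitore. For each $\alpha>0$, first solve the Markovian infinite-horizon BSDE
\begin{equation*}
Y^{\alpha,v}_t = Y^{\alpha,v}_T - \int_t^T \bigl(F(V^v_s, Z^{\alpha,v}_s) - \alpha Y^{\alpha,v}_s\bigr)ds - \int_t^T (Z^{\alpha,v}_s)^{tr}dW_s,
\end{equation*}
for all $0\le t\le T$, with $V^v_0=v$. Because $0\in\Pi$ and $\theta$ is bounded, $F(v,0)$ is uniformly bounded in $v$; combined with at most quadratic growth in $z$, Kobylanski/Briand--Hu quadratic BSDE theory yields a unique bounded Markovian solution $(Y^{\alpha,v}_t, Z^{\alpha,v}_t)=(\phi^{\alpha}(V^v_t), \psi^{\alpha}(V^v_t))$ with $\|\alpha \phi^\alpha\|_\infty$ controlled uniformly in $\alpha$ and $\psi^\alpha$ in BMO.

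The heart of the argument is a uniform (in $\alpha$) Lipschitz estimate
\begin{equation*}
|\phi^\alpha(v)-\phi^\alpha(\bar v)|\le \frac{C_v}{C_\eta-C_v}\,|v-\bar v|,
\end{equation*}
from which (\ref{gradient_for_y}) and the bound $|z|\le C_v/(C_\eta-C_v)$ ultimately descend, the latter via the identification $\psi^\alpha(v)=\nabla \phi^\alpha(v)\,\kappa$ together with $|\kappa|=1$. This is where I expect the main obstacle to lie. The argument compares two state paths starting from $v$ and $\bar v$: by Assumption \ref{assumption2}(i) and the constancy of $\kappa$, the noise cancels in the difference and $|V^v_t-V^{\bar v}_t|\le e^{-C_\eta t}|v-\bar v|$. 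Linearizing $F$ between $(V^v,Z^{\alpha,v})$ and $(V^{\bar v},Z^{\alpha,\bar v})$ and representing $Y^{\alpha,v}_0-Y^{\alpha,\bar v}_0$ via the resulting linear BSDE, the Lipschitz constant $C_v$ of $F$ in $v$ (valid on the uniform bounded range already secured for $\psi^\alpha$) combined with $\int_0^\infty C_v e^{-C_\eta t}\,dt = C_v/C_\eta<1$ delivers, after iteration, the geometric bound $C_v/(C_\eta-C_v)$. This is precisely why the ``large enough'' hypothesis $C_\eta>C_v$ matters.

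Given these estimates, the passage $\alpha\to 0$ is soft. Set $\bar \phi^\alpha(v):=\phi^\alpha(v)-\phi^\alpha(0)$; the family $\{\bar \phi^\alpha\}_{\alpha>0}$ is equi-Lipschitz with $\bar \phi^\alpha(0)=0$, hence precompact on compacts by Arzel\`a--Ascoli, and $\{\alpha \phi^\alpha(0)\}_{\alpha>0}$ is bounded. Along a subsequence, $\alpha \phi^\alpha(0)\to\lambda$ and $\bar \phi^\alpha\to y$ locally uniformly, while $\psi^\alpha$ converges along a further subsequence thanks to its uniform bound. Passing to the limit in the BSDE on finite time windows furnishes a Markovian solution $(y(V_t),z(V_t),\lambda)$ of (\ref{EQBSDE2}) with $y$ of at most linear growth, $y(0)=0$, and the stated bounds on $z$ and $\nabla y$. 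For uniqueness of $\lambda$, if $(y',z',\lambda')$ is another such Markovian solution, dividing the ergodic BSDE by $T$ and invoking the ergodic theorem for $V$ on both solutions forces $\lambda=\lambda'$; a Girsanov/coupling argument on the difference $y-y'$, combined with the normalization $y(0)=y'(0)=0$, then yields $y=y'$ and hence $z=z'$.
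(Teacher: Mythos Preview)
The paper does not prove this proposition; it cites Proposition~4.1 of \cite{LZ}, and Remark~\ref{remark_1} describes the method there as a ``vanishing discount rate'' construction---exactly the programme you sketch. So your overall strategy matches.

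There is, however, a genuine circularity in your Lipschitz step. In your first paragraph you obtain $\psi^\alpha$ only in BMO from quadratic BSDE theory; in the second you invoke ``the uniform bounded range already secured for $\psi^\alpha$'' to justify taking $C_v$ as the $v$-Lipschitz constant of $F$. But BMO is not a pointwise bound, and inequality~(\ref{driver0}) gives $|F(v_1,z)-F(v_2,z)|\le C_v(1+|z|)\,|v_1-v_2|$, so the relevant constant is $C_v(1+|z|)$, not $C_v$. Your integral $\int_0^\infty C_v e^{-C_\eta t}\,dt=C_v/C_\eta$ is therefore computed with the wrong constant, and the ``iteration'' you allude to cannot be started without already knowing a uniform bound on $\psi^\alpha$---which is precisely what you are after.

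The clean resolution, which the paper itself uses in Appendix~A.2 (Lemma~\ref{lemma}) for a closely related BSDE, is a truncation. Fix $K:=C_v/(C_\eta-C_v)$, set $q(z):=\min(|z|,K)\frac{z}{|z|}\mathbf{1}_{\{z\neq 0\}}$, and solve the discounted BSDE with driver $F(\cdot,q(\cdot))$. This driver is globally Lipschitz in $z$ and, by (\ref{driver0}), has $v$-Lipschitz constant $C_v(1+K)$. The coupling estimate $|V^v_t-V^{\bar v}_t|\le e^{-C_\eta t}|v-\bar v|$ and a Girsanov linearisation in $z$ then give
\[
|\phi^\alpha(v)-\phi^\alpha(\bar v)|\le \frac{C_v(1+K)}{C_\eta}\,|v-\bar v|=K\,|v-\bar v|,
\]
since $C_v(1+K)/C_\eta=C_v\cdot\frac{C_\eta}{C_\eta-C_v}\cdot\frac{1}{C_\eta}=K$. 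Using $\psi^\alpha=\kappa^{tr}\nabla\phi^\alpha$ and $|\kappa|=1$ yields $|\psi^\alpha|\le K$, so the truncation never binds, the truncated solution solves the original equation, and the bound $C_v/(C_\eta-C_v)$ in (\ref{gradient_for_y}) follows. With this in hand, your Arzel\`a--Ascoli passage $\alpha\downarrow 0$ and the uniqueness argument for $\lambda$ go through as written.
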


The solution $y(\cdot)$ is unique up to a constant (i.e.
$y(\cdot)+C$ for any constant $C$ is also a solution), and to avoid
this ambiguity, we fix without loss of generality its value at $0$
as $y(0)=0$.

We stress that while there exists a unique solution in factor form,
the ergodic BSDE (\ref{EQBSDE2}) admits \textit{multiple
non-Markovian} solutions, which are however not considered herein.\\

The next result relates the solution of the ergodic BSDE
(\ref{EQBSDE2}) to the exponential forward performance process (\ref%
{exponential-forward-general}) and its associated optimal policy.
For its proof, see Theorem 4.2 of \cite{LZ}.

\begin{proposition}
Suppose that Assumptions \ref{assumption1} and \ref{assumption2}
hold, and let $(Y,Z,\lambda )$ be the unique Markovian solution to
(\ref{EQBSDE2}). Then, the process $U(x,t),$ $\left( x,t\right) \in
\mathbb{D},$ given by
\begin{equation}
U(x,t)=-e^{-\gamma x+}{}^{Y_{t}-\lambda t}\text{,}
\label{ExponentialForwardUtility}
\end{equation}%
is an exponential forward performance process, and the associated
optimal strategy
\begin{equation}
\pi _{t}^{\ast }=\text{\textit{Proj}}_{\Pi }\left( \frac{\theta (V_{t})}{%
\gamma }+\frac{Z_{t}}{\gamma }\right) \text{.}
\label{eq:forward_strategy}
\end{equation}
\end{proposition}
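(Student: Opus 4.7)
The plan is to verify the three conditions of Definition~\ref{def} directly for $U(x,t)=-e^{-\gamma x+Y_{t}-\lambda t}$, where $(Y,Z,\lambda)$ is the Markovian solution of the ergodic BSDE (\ref{EQBSDE2}) given by Proposition~\ref{prop:ergoic_exist}. Conditions (i) and (ii) are immediate: $Y_{t}=y(V_{t})$ is $\mathbb{F}$-progressively measurable, and $x\mapsto-e^{-\gamma x+Y_{t}-\lambda t}$ is strictly increasing and strictly concave since $\gamma>0$. All the real work lies in (iii), which I would establish by applying It\^o's formula to $U(X_{t}^{\pi},t)$ and showing that the resulting drift is non-positive for every admissible $\pi$ and vanishes precisely when $\pi=\pi^{\ast}$.

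Writing $G_{t}:=-\gamma X_{t}^{\pi}+Y_{t}-\lambda t$ and combining (\ref{wealth}) with (\ref{EQBSDE2}), I get
\[
dG_{t}=\bigl(-\gamma\pi_{t}^{tr}\theta(V_{t})-F(V_{t},Z_{t})\bigr)\,dt+(Z_{t}-\gamma\pi_{t})^{tr}dW_{t}.
\]
Since $U(X_{t}^{\pi},t)=-e^{G_{t}}<0$, It\^o yields
\[
\frac{dU(X_{t}^{\pi},t)}{U(X_{t}^{\pi},t)}=\Bigl[-\gamma\pi_{t}^{tr}\theta(V_{t})-F(V_{t},Z_{t})+\tfrac{1}{2}|Z_{t}-\gamma\pi_{t}|^{2}\Bigr]dt+(Z_{t}-\gamma\pi_{t})^{tr}dW_{t}.
\]
The key algebraic manipulation is to complete the square in $\pi_{t}$ and substitute the definition (\ref{driver}) of $F$; the bracket collapses to
\[
\tfrac{\gamma^{2}}{2}\Bigl|\pi_{t}-\tfrac{\theta(V_{t})+Z_{t}}{\gamma}\Bigr|^{2}-\tfrac{\gamma^{2}}{2}\mathrm{dist}^{2}\!\Bigl(\Pi,\tfrac{\theta(V_{t})+Z_{t}}{\gamma}\Bigr)\geq0
\]
for every $\pi_{t}\in\Pi$, with equality exactly when $\pi_{t}=\mathrm{Proj}_{\Pi}\bigl((\theta(V_{t})+Z_{t})/\gamma\bigr)=\pi_{t}^{\ast}$. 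Because $U<0$, this makes $U(X_{t}^{\pi},t)$ a local supermartingale and $U(X_{t}^{\pi^{\ast}},t)$ a local martingale, with diffusion coefficient $U(X_{t}^{\pi},t)(Z_{t}-\gamma\pi_{t})$.

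The main obstacle is upgrading these local statements into genuine supermartingale/martingale relations and verifying $\pi^{\ast}\in\mathcal{A}$. Admissibility of $\pi^{\ast}$ is straightforward: the boundedness of $\theta$ from Assumption~\ref{assumption1}(iii), the bound $|z(\cdot)|\leq C_{v}/(C_{\eta}-C_{v})$ from Proposition~\ref{prop:ergoic_exist}, and convexity of $\Pi\ni0$ together imply $\pi^{\ast}$ is uniformly bounded and hence lies in $\mathcal{L}^{2}_{BMO}[0,t]$ for every $t\geq0$. The harder point is uniform integrability of $\{U(X_{\tau}^{\pi},\tau):\tau\text{ stopping time in }[0,s]\}$; for this I would exploit the BMO admissibility of $\pi$, the boundedness of $\theta$ and $Z$, and the at-most-linear growth of $y(\cdot)$ to derive exponential moment estimates on $e^{-\gamma X^{\pi}}$ via the John-Nirenberg inequality for BMO martingales. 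A standard localization via exit times from balls in $\mathbb{R}^{d}$, followed by passage to the limit, then converts the local supermartingale (resp.\ martingale) property into the required conditional inequality (resp.\ equality) of Definition~\ref{def}(iii).
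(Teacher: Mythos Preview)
Your computation via It\^o's formula and completion of squares is correct and reproduces exactly the pointwise drift inequality that drives the argument (the paper itself defers the proof to \cite{LZ}, but the identical technique is displayed in its proof of Theorem~\ref{theorem}). The substantive difference lies in how the (super)martingale property is extracted. Rather than obtaining a local supermartingale and then upgrading, the paper's method computes the conditional expectation directly via a change of measure: since $N_\cdot:=\int_0^\cdot (Z_u-\gamma\pi_u)^{tr}dW_u$ is a \emph{BMO} martingale (boundedness of $Z$ and $\pi\in\mathcal{L}^2_{BMO}$), the stochastic exponential $\mathcal{E}(N)$ is uniformly integrable and defines $\mathbb{Q}^\pi\sim\mathbb{P}$, giving
\[
E_{\mathbb{P}}\bigl[U(X_s^\pi,s)\,\big|\,\mathcal{F}_r\bigr]=U(X_r^\pi,r)\cdot E_{\mathbb{Q}^\pi}\Bigl[\exp\Bigl(\int_r^s\tfrac{\gamma^2}{2}\bigl|\pi_u-\tfrac{\theta(V_u)+Z_u}{\gamma}\bigr|^2-\tfrac{\gamma^2}{2}dist^2\bigl\{\Pi,\tfrac{\theta(V_u)+Z_u}{\gamma}\bigr\}\,du\Bigr)\,\Big|\,\mathcal{F}_r\Bigr].
\]
The integrand is precisely your nonnegative bracket, so the inner expectation is at least $1$; since $U<0$ the supermartingale inequality follows at once, with equality for $\pi=\pi^\ast$.

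Your proposed upgrade step has a potential gap. Uniform integrability of $\{U(X_\tau^\pi,\tau)\}_\tau$ requires controlling $e^{Y_\tau}=e^{y(V_\tau)}$, and the mere linear growth of $y$ forces you to need \emph{exponential} moments of $|V_\tau|$ uniformly over stopping times $\tau\leq s$; the paper only establishes polynomial moments (Proposition~\ref{prop:forward_estimate}). The John--Nirenberg inequality helps with the wealth component $e^{-\gamma X^\pi}$ (and even there only for small enough exponents relative to the BMO norm of $\pi$, which is not restricted in $\mathcal{A}$), but says nothing about the factor component. The change-of-measure argument bypasses this entirely: $U(X_r^\pi,r)$ is pulled out as an $\mathcal{F}_r$-measurable, almost surely finite factor, and one never needs an unconditional moment bound on $U(X_s^\pi,s)$, since the conditional expectation of a nonpositive random variable is always well defined.
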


\begin{remark}\label{remark_1} In \cite{LZ}, the solution pair $(Y,Z)$ is constructed
by a ``vanishing discount rate" argument, i.e. $(Y,Z)$ are the
limiting processes of the solution to an infinite horizon BSDE with
a discount factor $\rho$ as $\rho\downarrow 0$.

The solution $Z$ can be regarded as the volatility of the forward
performance process. To see this, an application of It\^o's formula
to $e^{-\gamma x+Y_t-\lambda t}$ yields that
\begin{equation*}
\frac{dU(x,t)}{U(x,t)}=\left(-F(V_t,Z_t)+\frac12|Z_t|^2\right)dt+(Z_t)^{tr}dW_t.
\end{equation*}
Furthermore, the solution $\lambda$ can be interpreted as the long
term growth rate of a classical utility maximization problem (see
Proposition 3.3 in \cite{LZ}).
\end{remark}

\section{Forward entropic risk measures and ergodic BSDE}

We now study the behavior of forward entropic risk measures in
relation to the exponential forward performance process
(\ref{ExponentialForwardUtility}). Forward entropic risk measures
were first introduced in \cite{ZZ} to assess risk positions with
arbitrary maturities.

We introduce the space of candidate risk positions,%
\begin{equation}
\mathcal{L=\cup }_{T\geq 0}\mathcal{L}^{\infty }\left( \mathcal{F}%
_{T}\right), \label{L-space}
\end{equation}%
where $\mathcal{L}^{\infty }(\mathcal{F}_{T})$ is the space of
uniformly bounded $\mathcal{F}_{T}$-measurable random variables. To
simplify the presentation, we assume without loss of generality that
the generic risk position is introduced at time $t=0,$ and also
remind the reader that $\mathbb{D}=\mathbb{R\times }\left[ 0,\infty
\right) .$

\begin{definition}
\label{def_forward_risk_measure} i) Consider the forward exponential
performance process $U\left( x,t\right) =-e^{-\gamma
x+}{}^{Y_{t}-\lambda
t}, $ $\left( x,t\right) \in \mathbb{D}$ (cf. (\ref%
{ExponentialForwardUtility})).

Let
$T>0$ be arbitrary and consider a risk position $\xi _{T}\in
\mathcal{L}^{\infty }\left( \mathcal{F}_{T}\right)$. Its
($T$-normalized) forward entropic risk measure, denoted by $\rho
_{t}(\xi _{T};T)\in\mathcal{F}_t$, is defined by
\begin{equation}\label{def_1}
{u}^{\xi_T}(x+\rho_t(\xi_T;T),t)=U(x,t),
\end{equation}
for all $\left( x,t\right) \in \mathbb{R}\times[0,T],$ where
\begin{equation}
{u}^{\xi_T}(x,t):=\esssup_{\pi \in
\mathcal{A}_{[t,T]}}E_{\mathbb{P}}\left[ \left. U(x+\int_{t}^{T}\pi
_{u}^{tr}\left( \theta (V_{u})du+dW_{u}\right) +\xi _{T},T)\right
\vert \mathcal{F}_{t}\right],
\end{equation}%
provided the above essential supremum exists.

ii) Let the risk position $\xi \in \mathcal{L}$. Define $T_{\xi
}=\inf \{T\geq 0:\xi \in \mathcal{F}_{T}\}.$ Then, the forward
entropic risk measure of $\xi $ is defined, for $t\in \left[
0,T_{\xi }\right] ,$ as
\begin{equation}
\rho _{t}(\xi ):=\rho _{t}(\xi;T_{\xi }).
\label{maturity-independent}
\end{equation}
Hence, for $\xi_T\in \mathcal{L}^{\infty}(\mathcal{F}_T)$, we have $%
\rho_t(\xi_T)=\rho_t(\xi_T;T)$.
\end{definition}

We stress that the performance criterion entering in Definition
\ref{def_forward_risk_measure} is defined for \textit{all }$T>0,$
and thus one can assess risk positions with \emph{arbitrary
maturities}. This is not, however, the case in the classical
framework. In the latter, the performance criterion is built on a
(static) utility function defined for a \emph{single} time $T$, and,
as a result, one can only assess risk positions at that time (see
Definition \ref{def_classical_risk_measure} in section
\ref{section:classical}).

\subsection{BSDE representation of forward entropic risk measures}

We are now ready to provide one of the main results herein, which is
the representation of forward entropic risk measures using the
solutions of associated BSDE and ergodic BSDE. The main idea is to
express the forward entropic risk measure process as the solution of
a traditional BSDE whose driver, however, {depends} on the
\emph{volatility} of the forward performance process, i.e.
the solution $Z_t$ of the ergodic BSDE\ (\ref%
{EQBSDE2}). This dependence follows from the fact that equation
(\ref{EQBSDE2})\ was used to construct the exponential forward
performance process (\ref{ExponentialForwardUtility}) that appears
in (\ref{def_1}) in Definition \ref{def_forward_risk_measure}.

\begin{theorem}
\label{theorem} Consider a risk position $\xi _{T}\in \mathcal{L}^{\infty}(%
\mathcal{F}_{T}),$ with its maturity $T>0$ being arbitrary. Suppose
that Assumptions \ref{assumption1} and \ref{assumption2} hold.
Consider, for $t\in \left[ 0,T\right] ,$ the BSDE
\begin{equation}
Y_{t}^{\xi_T}=\xi_{T}+\int_{t}^{T}G(V_{u},Z_{u},Z_{u}^{\xi_T
})du-\int_{t}^{T}(Z_{u}^{\xi_T})^{tr}dW_{u}\text{,}
\label{BSDERepresentation}
\end{equation}%
where the driver $G:\mathbb{R}^{d}\times \mathbb{R}^{d}\times \mathbb{R}%
^{d}\rightarrow \mathbb{R}$ is defined as
\begin{equation}
G(v,{z},\bar{z}):=\frac{1}{\gamma }\left( F(v,{z}+\gamma \bar{z})-F(v,{z}%
)\right) \text{,}  \label{driver_2}
\end{equation}%
with $F(\cdot ,\cdot )$ given by (\ref{driver}). Then the following
assertions hold:

i) The BSDE (\ref{BSDERepresentation}) has a unique solution $\left(
Y_{t}^{\xi_T},Z_{t}^{\xi_T}\right) ,$ $t\in \left[ 0,T\right] ,$ with $%
Y^{\xi_T}$ being uniformly bounded and $Z^{\xi_T}\in \mathcal{L}%
_{BMO}^{2}[0,T]$.

ii) The forward entropic risk measure of $\xi _{T}$ is given, for
$t\in \left[ 0,T\right] ,$ by
\begin{equation}
\rho _{t}(\xi _{T})=Y_{t}^{-\xi_T}\text{.}  \label{representation}
\end{equation}
\end{theorem}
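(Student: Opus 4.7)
The plan is to prove (i) via standard quadratic BSDE theory and (ii) by the martingale optimality principle, which reduces the forward optimization to the BSDE in (i) after an affine change of variables.

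For (i), I would first rewrite $F$ in simplified form. Using $\tfrac{1}{2}|z+\theta|^{2}-\tfrac{1}{2}|z|^{2}=z^{tr}\theta+\tfrac{1}{2}|\theta|^{2}$, one obtains
$$F(v,z)=\tfrac{\gamma^{2}}{2}\mathrm{dist}^{2}\!\bigl(\Pi,(z+\theta(v))/\gamma\bigr)-z^{tr}\theta(v)-\tfrac{1}{2}|\theta(v)|^{2}.$$
Substituting the bounded process $Z_{t}=z(V_{t})$ from Proposition \ref{prop:ergoic_exist} and using Assumption \ref{assumption1} on $\theta$, together with the 1-Lipschitz property of $\mathrm{dist}(\Pi,\cdot)$ (and $0\in\Pi$), one checks that the random driver $(\omega,t,\bar z)\mapsto G(V_{t},Z_{t},\bar z)$ is continuous and convex in $\bar z$ and satisfies $|G|\le C(1+|\bar z|^{2})$ uniformly in $(\omega,t)$. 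Since $\xi_{T}$ is bounded, Kobylanski's existence/uniqueness theorem for quadratic BSDEs with bounded terminal data yields $(Y^{\xi_{T}},Z^{\xi_{T}})$ with $Y^{\xi_{T}}$ uniformly bounded, and $Z^{\xi_{T}}\in\mathcal L^{2}_{BMO}[0,T]$ follows by applying It\^o's formula to $e^{\beta Y^{\xi_{T}}}$ for $\beta$ sufficiently large.

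For (ii), I would identify $u^{\xi_{T}}$ via the exponential ansatz $u^{\xi_{T}}(x,t)=-e^{-\gamma x+\widetilde Y_{t}}$ with terminal condition $\widetilde Y_{T}=-\gamma\xi_{T}+Y_{T}-\lambda T$ inherited from $u^{\xi_{T}}(x,T)=U(x+\xi_{T},T)$. Writing $d\widetilde Y_{t}=-\alpha_{t}dt+\widetilde Z_{t}^{tr}dW_{t}$ and applying It\^o to $R_{s}^{\pi}:=-\exp(-\gamma X_{s}^{\pi}+\widetilde Y_{s})$, the martingale optimality principle (supermartingale for every $\pi\in\mathcal A_{[t,T]}$, martingale at the optimizer) forces, after completing the square,
$$\alpha_{s}=\sup_{\pi\in\Pi}\Bigl\{\gamma\pi^{tr}\theta(V_{s})-\tfrac{1}{2}|\widetilde Z_{s}-\gamma\pi|^{2}\Bigr\}=-F(V_{s},\widetilde Z_{s}),$$
with optimizer $\pi_{s}^{*}=\mathrm{Proj}_{\Pi}\!\bigl((\theta(V_{s})+\widetilde Z_{s})/\gamma\bigr)$. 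To connect $\widetilde Y$ with $Y^{-\xi_{T}}$, set $\widehat Y_{t}:=(\widetilde Y_{t}-Y_{t}+\lambda t)/\gamma$ and $\widehat Z_{t}:=(\widetilde Z_{t}-Z_{t})/\gamma$. Subtracting the ergodic BSDE (\ref{EQBSDE2}) from the equation for $\widetilde Y$ yields
$$d\widehat Y_{t}=-\tfrac{1}{\gamma}\bigl(F(V_{t},Z_{t}+\gamma\widehat Z_{t})-F(V_{t},Z_{t})\bigr)dt+\widehat Z_{t}^{tr}dW_{t}=-G(V_{t},Z_{t},\widehat Z_{t})dt+\widehat Z_{t}^{tr}dW_{t},$$
together with $\widehat Y_{T}=-\xi_{T}$, which is precisely (\ref{BSDERepresentation}) with terminal $-\xi_{T}$. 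By uniqueness from (i), $\widehat Y_{t}=Y_{t}^{-\xi_{T}}$, and the defining relation $u^{\xi_{T}}(x+\rho_{t}(\xi_{T};T),t)=U(x,t)$ reduces to $-\gamma\rho_{t}(\xi_{T};T)+\widetilde Y_{t}=Y_{t}-\lambda t$, giving $\rho_{t}(\xi_{T})=\widehat Y_{t}=Y_{t}^{-\xi_{T}}$.

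The main obstacle is the verification step in (ii): one must show that $R^{\pi}$ is a genuine (not merely local) supermartingale for every admissible $\pi\in\mathcal A_{[t,T]}$ and a true martingale under the candidate $\pi^{*}$, so that the candidate indeed coincides with the essential supremum defining $u^{\xi_{T}}$. This requires $\pi^{*}\in\mathcal A_{[t,T]}$, which follows from $0\in\Pi$ (giving the pointwise bound $|\pi^{*}|\le|(\theta(V)+\widetilde Z)/\gamma|$) combined with the BMO property of $\widetilde Z$ (equivalently $Z^{-\xi_{T}}$) established in (i); uniform integrability along stopping times of the stochastic exponentials generated by $\int(\widetilde Z-\gamma\pi)^{tr}dW$ then follows from the John--Nirenberg inequality for BMO martingales, closing the verification and hence the theorem.
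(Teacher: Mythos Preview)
Your proposal is correct and follows essentially the same approach as the paper: part (i) via Kobylanski's quadratic BSDE theory (the paper phrases the growth condition as a local Lipschitz bound rather than a quadratic bound, but the ingredients and references are the same), and part (ii) via the martingale optimality principle applied to $R_s^\pi=-\exp(-\gamma X_s^\pi+\widetilde Y_s)$, with the supermartingale/martingale verification handled through the BMO property and the associated stochastic exponential. The only organizational difference is that the paper defines $R_s^\pi$ directly with exponent $Y_s-\lambda s+\gamma Y_s^{-\xi_T}$ and verifies, whereas you first derive the $\widetilde Y$-BSDE heuristically and then identify $\widehat Y$ with $Y^{-\xi_T}$ via the change of variables; since the terminal datum $\widetilde Y_T=-\gamma\xi_T+Y_T-\lambda T$ is unbounded (only linear growth through $Y_T$), it is cleaner to build $\widetilde Y$ \emph{from} $Y^{-\xi_T}$ rather than solve for it independently, but your final identification and verification are the same as the paper's.
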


From the above representation, we readily obtain the
time-consistency property: for any $0\leq t\leq s\leq T<\infty$,
\begin{equation*}
\rho _{t}(\xi _{T})=Y_{t}^{-\xi_T}=Y_{t}^{Y_{s}^{-\xi_T
}}=Y_{t}^{\rho _{s}(\xi _{T})}=\rho _{t}(-\rho _{s}(\xi _{T})).
\end{equation*}
Therefore, the forward entropic risk measure is well defined over
all time horizons.

The fact that the forward entropic risk measure\textit{\ }is
obtained via a BSDE\ (cf. (\ref{BSDERepresentation})) should
\textit{not} suggest that it does not differ from its classical
counterpart, which is also given as a solution of a BSDE (see
(\ref{BSDE_11}) herein).

Firstly, it is natural to expect that the risk measure will be
represented by the solution of a BSDE,\ since the pricing condition
(\ref{def_1}) is, by nature, set \textquotedblleft backwards" in
time. However, the classical entropic risk measure is defined only
on $\left[ 0,T\right]$ for a \emph{single} maturity $T$, because the
associated traditional exponential value function is only defined on
$t\in\left[ 0,T\right] $. In contrast, the forward entropic risk
measure\textit{\ }is defined for \textit{all} maturities $T\geq 0,$
for the
associated forward process $U\left( x,t\right) $ (cf. (\ref%
{ExponentialForwardUtility})) is defined for all $t\geq 0$.

Secondly, the BSDE\ (\ref{BSDERepresentation})\ for the forward risk
measure differs from the one for the classical entropic measure,
because its driver depends on the {volatility process} $Z_{t},$
which solves the ergodic BSDE that yields the exponential forward
criterion.

\begin{remark}\label{remark_strategy}
In addition to forward entropic measures, one can define the
\textquotedblleft \textit{hedging strategies}" associated with the
risk position $\xi _{T}$. As in the classical case, they are defined
as the difference of the optimal strategies for ${u}^{\xi_T}(x,t)$
and $U(x,t)$ appearing in Definition \ref{def_forward_risk_measure}.
We deduce that the related hedging strategy, denoted by $\alpha
_{t,T},t\in \lbrack 0,T],$ is given by (cf. (\ref{aux-optimal}) and
(\ref{eq:forward_strategy}))
\begin{equation}
\alpha _{t,T}={\pi }_{t}^{\ast ,\xi_T}-{\pi }_{t}^{\ast }
\label{hedging-strategy}
\end{equation}%
\begin{equation*}
=\text{\textit{Proj}}_{\Pi }\left( Z_{t}^{-\xi_T}+\frac{Z_{t}+\theta (V_{t})%
}{\gamma }\right) -\text{\textit{Proj}}_{\Pi }\left(
\frac{Z_{t}+\theta (V_{t})}{\gamma }\right) .
\end{equation*}

Observe that the first term naturally depends on the maturity of the
risk position, while the second is independent of it and defined for
all times. This is not the case in the classical setting, where both
terms depend on the investment horizon.
\end{remark}

\subsection{Convex dual representation of forward entropic risk measures}

Our second main result is a dual representation of the forward
entropic risk measure $\rho_t(\xi_T)$ via the BSDE
(\ref{BSDERepresentation}). Firstly, we observe that since $\Pi$ is
convex, and a distance function to a convex set is also convex, it
follows that the driver $G(v,z,\bar{z})$, defined in
(\ref{driver_2}), is convex in $\bar{z}$, for any
$(v,z)\in\mathbb{R}^d\times\mathbb{R}^d$. We can therefore introduce
the convex dual of $G(v,z,\bar{z})$,
\begin{equation}\label{dual}
{G}^*(v,z,q):=\sup_{\bar{z}\in\mathbb{R}^d}\left(\bar{z}^{tr}q-G(v,z,\bar{z})\right),
\end{equation}
for $q\in\mathbb{R}^d$. Note that $G^*$ is valued in
$\mathbb{R}\cup\{\infty\}$.

Then, the Fenchel-Moreau theorem yields that
\begin{equation}\label{FM_theorem}
G(v,z,\bar{z})=\sup_{q\in\mathbb{R}^d}(\bar{z}^{tr}q-G^{*}(v,z,q)),
\end{equation}
{for} $\bar{z}\in\mathbb{R}^d$. Moreover, $q^*\in\partial
G_{\bar{z}}(v,z,\bar{z})$, which is the subdifferential of
$\bar{z}\mapsto G(v,z,\bar{z})$ at $\bar{z}\in\mathbb{R}^d$,
achieves the supremum in (\ref{FM_theorem}),
\begin{equation}\label{FM_theorem_2}
G(v,z,\bar{z})=\bar{z}^{tr}q^*-G^{*}(v,z,q^*).
\end{equation}

With a slight abuse of notation, for any
$q\in\mathcal{L}^2_{BMO}[0,T]$, its stochastic exponential
$\mathcal{E}(\int_0^{\cdot}q^{tr}_sdW_s)$ is a uniformly integrable
martingale, since $\int_0^{\cdot}q^{tr}_sdW_s$ is a
\emph{BMO}-martingale. We then define on $\mathcal{F}_T$ a
probability measure $\mathbb{Q}^{q}$ by
$\frac{d\mathbb{Q}^{q}}{d\mathbb{P}}=\mathcal{E}(\int_0^{\cdot}q^{tr}_sdW_s)_T$,
and introduce the admissible set
\begin{equation*}
\mathcal{A}^*_{[0,T]}=\left\{ q \in \mathcal{L}_{BMO}^{2}[0,T]:
E_{\mathbb{Q}^q}\left[\int_0^T|G^*(V_s,Z_s,q_s)|ds\right]<\infty\right\}
\text{,}
\end{equation*}%
where $V$ is the stochastic factor process given in (\ref{factor}),
and $Z$ appears in the unique Markovian solution of the BSDE
(\ref{EQBSDE2}).

\begin{theorem}\label{theorem_dual}
Suppose that Assumptions \ref{assumption1} and \ref{assumption2}
hold. Let $\xi _{T}\in \mathcal{L}^{\infty}(%
\mathcal{F}_{T})$ be a risk position with its maturity $T>0$ being
arbitrary. Then, the following assertions hold:

i) The forward entropic risk measure $\rho_t(\xi_T)$ admits the
following convex dual representation
\begin{equation}\label{dual_formula}
\rho_t(\xi_T)=-\essinf_{q\in\mathcal{A}^*_{[t,T]}}E_{\mathbb{Q}^q}
\left[\left.\xi_T+\int_t^{T}{G}^*(V_s,Z_s,q_s)ds\right\vert\mathcal{F}_t\right],
\end{equation}
where
${G}^*:\mathbb{R}^{d}\times\mathbb{R}^{d}\times\mathbb{R}^{d}\rightarrow\mathbb{R}\cup\{\infty\}$
is the convex dual of $G$ in (\ref{dual}).

ii) There exists an optimal density process
$q^*\in\mathcal{A}^*_{[t,T]}$ such that
\begin{equation}\label{dual_formula_2}
\rho_t(\xi_T)=-E_{\mathbb{Q}^{q^*}}
\left[\left.\xi_T+\int_t^{T}{G}^*(V_s,Z_s,q_s^*)ds\right\vert\mathcal{F}_t\right].
\end{equation}
\end{theorem}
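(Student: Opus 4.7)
The plan is to start from the BSDE representation $\rho_t(\xi_T)=Y_t^{-\xi_T}$ of Theorem~\ref{theorem} and transport the BSDE under a generic measure change $\mathbb{P}\to\mathbb{Q}^q$, then use Fenchel--Young on the driver to obtain the inequality in one direction and exhibit an explicit maximizer via the subdifferential of $G$ for the other.

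\textbf{Step 1: BSDE under $\mathbb{Q}^q$.} Fix $q\in\mathcal{A}^{*}_{[t,T]}$. Because $q\in\mathcal{L}_{BMO}^{2}[0,T]$, $\mathcal{E}(\int_0^{\cdot}q_s^{tr}dW_s)$ is a uniformly integrable martingale, and by Girsanov $W^q_\cdot:=W_\cdot-\int_0^{\cdot}q_s\,ds$ is a $\mathbb{Q}^q$-Brownian motion. Rewriting \eqref{BSDERepresentation} with terminal $-\xi_T$ in terms of $W^q$ gives
\begin{equation*}
Y_t^{-\xi_T}=-\xi_T+\int_t^{T}\bigl[G(V_u,Z_u,Z_u^{-\xi_T})-(Z_u^{-\xi_T})^{tr}q_u\bigr]du-\int_t^{T}(Z_u^{-\xi_T})^{tr}dW_u^q.
\end{equation*}
Since $Z^{-\xi_T}\in\mathcal{L}^2_{BMO}[0,T]$ under $\mathbb{P}$ by Theorem~\ref{theorem}, the Kazamaki/Garnett--type BMO stability result (BMO under $\mathbb{P}$ is BMO under $\mathbb{Q}^q$ when $q$ is BMO) ensures that $\int (Z^{-\xi_T})^{tr}dW^q$ is a true $\mathbb{Q}^q$-martingale, so conditional expectation yields
\begin{equation*}
Y_t^{-\xi_T}=E_{\mathbb{Q}^q}\!\left[-\xi_T+\int_t^{T}\bigl(G(V_u,Z_u,Z_u^{-\xi_T})-(Z_u^{-\xi_T})^{tr}q_u\bigr)du\,\bigg|\,\mathcal{F}_t\right].
\end{equation*}

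\textbf{Step 2: Lower bound via Fenchel--Young.} The Fenchel--Young inequality applied to $\bar z\mapsto G(v,z,\bar z)$ gives $G(v,z,\bar z)-\bar z^{tr}q\geq -G^{*}(v,z,q)$ for every $q\in\mathbb{R}^d$. Substituting $\bar z=Z_u^{-\xi_T}$ and using the integrability condition built into $\mathcal{A}^{*}_{[t,T]}$, the displayed conditional expectation is dominated below by $-E_{\mathbb{Q}^q}[\xi_T+\int_t^{T}G^{*}(V_s,Z_s,q_s)ds\mid\mathcal{F}_t]$. Since this holds for every admissible $q$,
\begin{equation*}
\rho_t(\xi_T)\geq -\essinf_{q\in\mathcal{A}^{*}_{[t,T]}}E_{\mathbb{Q}^q}\!\left[\xi_T+\int_t^{T}G^{*}(V_s,Z_s,q_s)ds\,\bigg|\,\mathcal{F}_t\right].
\end{equation*}

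\textbf{Step 3: Attainment via the subdifferential.} For the reverse inequality and part~(ii), define $q^*_u\in\partial_{\bar z}G(V_u,Z_u,Z_u^{-\xi_T})$. From \eqref{driver_2} one has $\partial_{\bar z}G(v,z,\bar z)=\nabla_{z}F(v,z+\gamma\bar z)$, and a direct computation shows that $\nabla_z F(v,w)=w-\theta(v)-\gamma\,\mathrm{Proj}_{\Pi}((w+\theta(v))/\gamma)$ modulo bounded, Lipschitz terms controlled by Assumption~\ref{assumption1}. Hence $q^*$ is bounded by an affine function of $Z_u+\gamma Z_u^{-\xi_T}$; since both $Z$ and $Z^{-\xi_T}$ lie in $\mathcal{L}^2_{BMO}[0,T]$, so does $q^*$. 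The growth bound on $G^*$ (which is no worse than quadratic in $q$ by Legendre conjugation and the quadratic structure of $G$) combined with BMO and the energy inequality then gives $E_{\mathbb{Q}^{q^*}}[\int_t^T|G^*(V_s,Z_s,q^*_s)|ds]<\infty$, so $q^*\in\mathcal{A}^{*}_{[t,T]}$. By \eqref{FM_theorem_2}, the Fenchel--Young inequality in Step~2 becomes an equality pathwise at $q=q^*$, so
\begin{equation*}
\rho_t(\xi_T)=Y_t^{-\xi_T}=-E_{\mathbb{Q}^{q^*}}\!\left[\xi_T+\int_t^{T}G^{*}(V_s,Z_s,q^*_s)ds\,\bigg|\,\mathcal{F}_t\right],
\end{equation*}
which proves both the duality formula \eqref{dual_formula} and the attainment \eqref{dual_formula_2}.

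\textbf{Expected main obstacle.} The conceptual argument is routine BSDE duality; the work lies in the admissibility verification in Step~3. Specifically, establishing the BMO property of $q^*$ demands an explicit growth estimate on the selection from $\partial_{\bar z}G$ in terms of $(Z,Z^{-\xi_T})$, and the $\mathbb{Q}^{q^*}$-integrability of $G^*(V,Z,q^*)$ requires invoking BMO stability under change of measure (so that $Z^{-\xi_T}$, $Z$ and $q^*$ remain BMO under $\mathbb{Q}^{q^*}$) together with the quadratic growth of $G^*$. Once these technical points are secured, the Girsanov rewriting and Fenchel--Moreau duality close the proof.
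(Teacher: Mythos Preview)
Your proposal is correct and follows essentially the same route as the paper: rewrite the BSDE for $Y^{-\xi_T}$ under $\mathbb{Q}^q$ via Girsanov, apply Fenchel--Young to the driver for the inequality, and attain equality at $q^*\in\partial_{\bar z}G(V,Z,Z^{-\xi_T})$, with BMO stability under equivalent change of measure used to justify martingality and the $\mathbb{Q}^{q^*}$-integrability of $G^*$.

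The only tactical difference is in the admissibility check for $q^*$. The paper first proves a preparatory lemma giving two-sided quadratic bounds on $G$ and the lower bound $G^*(v,z,q)\geq |q|^2/(4\gamma)-\tfrac{2}{\gamma}(|z|^2+|\theta(v)|^2)$; it then plugs the Fenchel equality $G=(Z^{-\xi_T})^{tr}q^*-G^*$ into these bounds to deduce $|q^*|^2\leq C(|Z^{-\xi_T}|^2+|Z|^2+|\theta(V)|^2)$, hence $q^*\in\mathcal{L}^2_{BMO}$. You instead differentiate $F$ explicitly (indeed $\nabla_zF(v,w)=w-\gamma\,\mathrm{Proj}_{\Pi}\bigl((w+\theta(v))/\gamma\bigr)$, using that $\mathrm{dist}^2(\cdot,\Pi)$ is $C^1$), which together with $0\in\Pi$ and the boundedness of $\theta$ gives the linear growth of $q^*$ in $Z+\gamma Z^{-\xi_T}$ directly. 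Both arguments yield the same BMO conclusion; the paper's explicit bounds on $G$ and $G^*$ are also what you need to make your final integrability claim for $G^*(V,Z,q^*)$ fully rigorous, so it would be worth stating them rather than invoking ``the quadratic structure of $G$'' in passing.
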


We note that the penalty term $G^*$ depends solely on the stochastic
factor process $V_s$ and the volatility process $Z_s$ of the forward
performance process. In section 6, we provide a concrete example to
further illustrate the structure of the penalty term $G^*$.

\begin{remark} From the convex dual representation
(\ref{dual_formula}), we easily deduce the following properties of
the forward entropic risk measure $\rho_t(\xi_T)$.

i) Anti-positivity: $\rho_t\left( \xi_T \right) \leq 0$ for any
$\xi_T\in\mathcal{L}^{\infty}(\mathcal{F}_T)$ with $\xi_T \geq 0,$
which follows from the nonnegativity of $G^*$ (cf.
(\ref{bound_G_hat})).

ii) Convexity: $\rho_t\left( \alpha \xi_T +(1-\alpha \right)
\bar{\xi}_T)\leq \alpha \rho_t ( \xi_T )+(1-\alpha) \rho_t
(\bar{\xi}_T)$, for any $\alpha\in[0,1]$ and
$\xi_T,\bar{\xi}_T\in\mathcal{L}^{\infty}(\mathcal{F}_T)$.

iii) Cash-translativity: $\rho_t \left( \xi_T -m\right) =\rho_t
\left( \xi_T \right) +m$, for any $m\in \mathbb{R}$ and
$\xi_T\in\mathcal{L}^{\infty}(\mathcal{F}_T)$.
\end{remark}

\subsection{Long-maturity behavior of forward entropic risk measures}

\label{section_4}

We study the behavior of forward entropic risk measures when the
maturity of the risk position is long. We focus on European-type
positions written only on the stochastic factor process $V$.
Specifically, with $T>0$ being arbitrary, we consider risk positions
represented as
\begin{equation}
\xi _{T}=-g\left( V_{T}\right) ,  \label{payoff-european}
\end{equation}%
with $g:\mathbb{R}^{d}\rightarrow \mathbb{R}$ being a uniformly
bounded and Lipschitz continuous function with Lipschitz constant
$C_{g}.$

From Theorem \ref{theorem}, we have the representation $\rho
_{t}\left( \xi _{T}\right) =Y_{t}^{-\xi_T}$ for $t\in[0,T]$.
Furthermore, by the Markovian assumption on the risk position
$\xi_T$, we have that
\begin{equation*}
(Y_t^{-\xi_T},Z_t^{-\xi_T})=(y^{T,g}(V_t,t),z^{T,g}(V_t,t)),
\end{equation*}
for some measurable functions
$(y^{T,g}(\cdot,\cdot),z^{T,g}(\cdot,\cdot))$.

We show that as $T\uparrow \infty ,$ the forward entropic risk
measure, from the one hand, has an exponential decay with respect to
its maturity, and from the other, tends to a constant, which is
independent of the initial state of the stochastic factor process
$V^v_0=v$. As a consequence, we derive an explicit exponential bound
of the investor's optimal investment strategy. In particular, we
show that the investor will not do any trading to hedge the
underlying risks in any finite time when the maturity goes to
infinity.

\begin{theorem}
\label{thm:long_term_price} Suppose that Assumptions \ref{assumption1} and %
\ref{assumption2} hold. Consider a risk position $\xi _{T}$ as in
(\ref{payoff-european}) with $T$ arbitrary. Then, the following
assertions hold:

i) There exists a constant $L^g\in \mathbb{R}$, independent of
$V_0^{v}=v$, such that
\begin{equation}
\lim_{T\uparrow \infty }\rho _{0}(\xi _{T})=\lim_{T\uparrow \infty
}y^{T,g}(v,0)=L^g \text{.}  \label{limit-Markovian}
\end{equation}%
%
%
%
%
Moreover, for any $T>0,$ \
\begin{equation}
|y^{T,g}(v,0)-L^g|\leq C(1+|v|^{2})e^{-\hat{C}_{\eta }T}\text{,}
\label{convergence_rate}
\end{equation}%
with the constant $\hat{C}_{\eta }$ given in Proposition \ref%
{prop:forward_estimate} (see Appendix A.1).

ii) The hedging strategy satisfies, for any $T>0$ and $s\in [0,T)$,
\begin{equation}\label{exponential_bound}
E_{\mathbb{P}}\left[ \int_{0}^{s}|\alpha _{t,T}|^{2}dt\right] \leq C
(1+|v|^4) e^{-2\hat{C}_{\eta}(T-s)}.
\end{equation}
Therefore, for any $s\in[0,T)$,
\begin{equation}
\lim_{T\uparrow \infty }E_{\mathbb{P}}\left[ \int_{0}^{s}|\alpha
_{t,T}|^{2}dt\right] =0.  \label{hedging_strategy}
\end{equation}
\end{theorem}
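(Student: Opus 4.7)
The plan is to exploit the Markovian form of the BSDE (\ref{BSDERepresentation}) together with the exponential ergodicity of $V$ coming from Assumption \ref{assumption2}. Since $\xi_T = -g(V_T)$, standard Markovian BSDE theory supplies the representation $(Y_t^{-\xi_T}, Z_t^{-\xi_T}) = (y^{T,g}(V_t,t), z^{T,g}(V_t,t))$, and Theorem \ref{theorem}(i) gives uniform boundedness of $y^{T,g}$. The quantitative engine driving both (i) and (ii) is a gradient-type bound of the form $|\nabla_v y^{T,g}(v,t)| \leq C(1+|v|)\, e^{-\hat C_\eta(T-t)}$, which is the content of Proposition \ref{prop:forward_estimate} in Appendix A.1. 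The rate $\hat C_\eta>0$ is the effective dissipativity of $V$ left over after absorbing the Lipschitz constant of $\bar z\mapsto G(\cdot,\cdot,\bar z)$, which is finite thanks to the boundedness of $\theta$ and the structural form of $F$, and strictly positive because of the standing assumption $C_\eta>C_v$ (cf.\ inequality (\ref{driver0})).

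For part (i), I would first use that $G(v,z,0)=0$ together with the uniform Lipschitz continuity of $\bar z\mapsto G(v,z,\bar z)$ on the $\mathrm{BMO}$-bounded set where $Z^{-\xi_T}$ lives to linearize the driver as $G(V_s,Z_s,Z_s^{-\xi_T}) = \tilde q_s^{tr} Z_s^{-\xi_T}$ for some bounded adapted $\tilde q$. Under the equivalent probability $\tilde{\mathbb P}$ with density $\mathcal{E}(\int_0^{\cdot} \tilde q_s^{tr}dW_s)_T$, the BSDE collapses to
\begin{equation*}
y^{T,g}(v,0) \;=\; E_{\tilde{\mathbb P}}\bigl[\,g(V_T^v)\,\bigr].
\end{equation*}
Because $\kappa\tilde q_s$ is bounded, $V$ remains dissipative with the same constant $C_\eta$ under $\tilde{\mathbb P}$, and a synchronous coupling of $V^v,V^{\bar v}$ together with the Lipschitz continuity of $g$ yields the spatial flattening $|y^{T,g}(v,0)-y^{T,g}(\bar v,0)|\leq C(1+|v|^2+|\bar v|^2)e^{-\hat C_\eta T}$. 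Combining this with a Cauchy-in-$T$ argument applied to the bounded sequence $y^{T,g}(0,0)$, which uses time-homogeneity of the coefficients so that $y^{T+s,g}(v,0)$ can be viewed as the solution of the BSDE on $[0,s]$ with terminal condition $y^{T,g}(V_s,0)$, then identifies the $v$-independent limit $L^g$ and delivers the rate (\ref{convergence_rate}).

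For part (ii), the $1$-Lipschitz property of the projection together with (\ref{hedging-strategy}) gives $|\alpha_{t,T}|\leq|Z_t^{-\xi_T}|=|z^{T,g}(V_t,t)|$. Using the Markov identification $z^{T,g}(v,t)=\kappa^{tr}\nabla_v y^{T,g}(v,t)$ obtained from It\^o's formula applied to $y^{T,g}(V_t,t)$ and the gradient estimate above, one obtains
\begin{equation*}
E_{\mathbb P}\!\left[\int_0^s|\alpha_{t,T}|^2\,dt\right] \;\leq\; C\!\int_0^s e^{-2\hat C_\eta(T-t)}\bigl(1+E_{\mathbb P}|V_t|^2\bigr)\,dt \;\leq\; C(1+|v|^4)\,e^{-2\hat C_\eta(T-s)},
\end{equation*}
where the loss of two extra powers of $|v|$ is absorbed by Cauchy--Schwarz when passing from the pointwise gradient bound to the integrated $L^2$ bound, and the moment estimate $E_{\mathbb P}|V_t|^{2}\leq C(1+|v|^{2})$ follows from standard dissipative estimates on $V$. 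Letting $T\uparrow\infty$ with $s$ fixed yields (\ref{hedging_strategy}).

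The main obstacle is the exponential-in-$T$ gradient estimate on $y^{T,g}$. The naive linearization of the driver produces a Girsanov weight that depends on the initial condition $v$, so directly differentiating $v\mapsto y^{T,g}(v,0)$ requires either a synchronous coupling in which the \emph{same} linearizing process is used for both copies of $V$, or a PDE-based estimate on the value function exploiting the dissipative structure of $\eta$. In either route, the strict inequality $C_\eta>C_v$ is precisely what allows the exponential contraction of paths of $V$ to survive the propagation through the driver $G$, producing the positive rate $\hat C_\eta$ that drives both the convergence to the $v$-independent limit $L^g$ and the vanishing of the hedging strategy over any finite window.
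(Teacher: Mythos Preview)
Your outline captures the right high-level mechanism --- linearize the driver, change measure, and exploit the ergodicity of $V$ --- and this is indeed what the paper does. However, there are two genuine gaps, and one significant difference in the route taken for part (ii).

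First, Proposition \ref{prop:forward_estimate} concerns the factor process $V$ itself (a pathwise contraction and a coupling estimate under bounded drift perturbations), \emph{not} $y^{T,g}$. The exponential gradient decay $|\nabla_v y^{T,g}(v,t)|\le C(1+|v|)\,e^{-\hat C_\eta(T-t)}$ that you invoke is not stated anywhere in the paper and is in fact stronger than what the paper proves. The paper establishes two separate estimates on $y^{T,g}$ (Lemma \ref{lemma_11}): a \emph{uniform, non-decaying} gradient bound $|\nabla y^{T,g}|\le q+\tfrac{C_v}{\gamma(C_\eta-C_v)}$, and an \emph{oscillation} decay $|y^{T,g}(v,t)-y^{T,g}(\bar v,t)|\le C(1+|v|^2+|\bar v|^2)\,e^{-\hat C_\eta(T-t)}$ whose right-hand side carries no factor $|v-\bar v|$. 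These, combined with a diagonal-subsequence argument and the tower property applied to the representation $y^{T,g}(v,0)=E_{\mathbb Q^H}[g(V_T^{0,v})]$, give (\ref{limit-Markovian})--(\ref{convergence_rate}).

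Second, your linearization step requires $\tilde q$ to be bounded, but the Lipschitz constant of $\bar z\mapsto G(v,z,\bar z)$ is of order $C(1+|\bar z|)$ because of the quadratic term in $F$; hence you need $Z^{-\xi_T}$ to be bounded \emph{pointwise}, not merely in $\mathcal L^2_{BMO}$. The paper obtains this a priori bound through a separate truncation argument on the auxiliary BSDE (\ref{Auxilary_BSDE}) (Lemma \ref{lemma} in Appendix A.2), which you have omitted entirely. Without it, your change of measure to $\tilde{\mathbb P}$ and everything downstream are unjustified.

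For part (ii) the paper does not use any pointwise estimate on $z^{T,g}$. Instead it applies It\^o's formula to $|y^{T,g}(V_s,s)-L^g|^2$, producing an energy identity that bounds $E_{\mathbb P}\bigl[\int_0^s|Z_t^{-\xi_T}|^2\,dt\bigr]$ in terms of $|y^{T,g}(V_u,u)-L^g|^2$ along the path, and then inserts the decay (\ref{convergence_rate_11}) together with moment bounds on $V$. Your route via a pointwise gradient decay would be more direct \emph{if} that estimate held, but you have not proved it --- and it is precisely the obstacle you yourself flag in your final paragraph without resolving.
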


\begin{remark}\label{Remark} In order to study the long-maturity behavior of the
solution $Y^{-\xi_T}_0$ to the BSDE (\ref{BSDERepresentation}), it
is natural to relate (\ref{BSDERepresentation}) with an ergodic
BSDE, given below, and investigate the proximity of their solutions.

To this end, we may consider the ergodic BSDE
\begin{equation}
{P}_{t}={P}_{s}+\int_{t}^{s}\left( G(V_{u},Z_{u},%
{Q}_{u})-\hat{\lambda}\right) ds-\int_{t}^{s}({Q}%
_{u})^{tr}dW_{u}  \label{BSDERepresentation 22}
\end{equation}%
for $0\leq t\leq s<\infty $, {and examine the approximation of }$%
Y_{0}^{-\xi_T}$ by $P_{0}+\hat{\lambda} T$, for large $T.$

We stress that the driver of the ergodic BSDE
(\ref{BSDERepresentation 22}) depends on the solution $Z$ of the
ergodic BSDE (\ref{EQBSDE2}) of the forward performance process.
This causes various technical issues. The driver $G(v,z(v),\bar{z})$ of the ergodic BSDE (\ref%
{BSDERepresentation 22}) \textit{depends} on the function $z(v)$.
Although, due to the boundedness of the function $z(\cdot )$, the
driver $G$ satisfies the locally Lipschitz estimate (\ref{driver1})
in $\bar{z}$, it may not satisfy the locally Lipschitz estimate
(\ref{driver0}) in $v$, and hence the existence and uniqueness
result in \cite{LZ} might not apply. Moreover, it is not even clear
whether the ergodic BSDE (\ref{BSDERepresentation 22}) is well-posed
or not. For this, as we mention in the proof of Theorem
\ref{thm:long_term_price}, we work with the function
$y^{T,g}(\cdot,\cdot)$ directly.
\end{remark}

\section{Proofs of the main results}
\subsection{Proof of Theorem \ref{theorem}}

We first recall that in the proof of Proposition 4.1 in \cite{LZ},
two key inequalities were used, which follow from Assumption
\ref{assumption1} and the Lipschitz property of the distance
function. Specifically, it can be shown that there exist constants
$C_{v}>0$ and $C_{z}>0$ such that
\begin{equation}
|F(v_1,z)-F({v}_2,z)|\leq C_{v}(1+|z|)|v_1-{v}_2|\label{driver0}
\end{equation}%
and
\begin{equation}
|F(v,z_1)-F(v,{z}_2)|\leq C_{z}(1+|z_1|+|{z}_2|)|z_1-{z}_2|\text{,}
\label{driver1}
\end{equation}%
for any $v,v_1,{v}_2,z,z_1,{z}_2\in \mathbb{R}^{d}$.

\textbf{Proof of (i)}: First note that, for $t\in \left[ 0,T\right]
,$ $G(v,Z_{t},\bar{z})$ is locally Lipschitz continuous in
$\bar{z},$ since a.s.
\begin{equation*}
|G(v,Z_{t},\bar{z}_{1})-G(v,Z_{t},\bar{z}_{2})|\leq C_{z}(1+2|Z_{t}|+\gamma |%
\bar{z}_{1}|+\gamma |\bar{z}_{2}|)|\bar{z}_{1}-\bar{z}_{2}|\text{,}
\end{equation*}%
with $Z$ being uniformly bounded. Using, furthermore, that $\xi
_{T}\in \mathcal{L}^{\infty }\left( \mathcal{F}_{T}\right) $, the
assertion follows from Theorems 2.3 and 2.6 of \cite{Kobylanski} and
Theorem 7 of \cite{HU0}.

\textbf{Proof of (ii)}: Using (\ref{ExponentialForwardUtility}) and
that\textbf{\ }$\rho _{t}(\xi _{T})\in \mathcal{F}_{t},$ $t\in
\left[ 0,T\right] ,$ we have that
\begin{equation}
{u}^{\xi_T}(x+\rho_{t}(\xi_T),t)=e^{-\gamma \rho _{t}(\xi
_{T})}{u}^{\xi_T}(x,t)\text{.} \label{stochastic_control}
\end{equation}%
To obtain ${u}^{\xi_T}(x,t)$, we work as follows. Define, for $s\in
\lbrack t,T],$ the process
\begin{equation}
R_{s}^{\pi }:=-\exp\left(-\gamma \left( x+\int_{t}^{s}\pi _{u}^{tr}(
\theta (V_{u})du+dW_{u}) \right) +Y_{s}-\lambda s+\gamma
Y_{s}^{-\xi_T}\right). \label{auxilary_process}
\end{equation}%
We first show that it is a supermartingale for any $\pi \in \mathcal{A}%
_{[t,T]}$, and becomes a martingale for
\begin{equation}
\pi _{s}^{\ast ,\xi_T}=Proj_{\Pi }\left(
Z_{s}^{-\xi_T}+\frac{Z_{s}+\theta (V_{s})}{\gamma }\right) .
\label{aux-optimal}
\end{equation}%
Indeed, for $0\leq t\leq r\leq s\leq T$, observe that the exponent in (\ref%
{auxilary_process}) satisfies%
\begin{eqnarray*}
&&-\gamma ( x+\int_{t}^{s}\pi _{u}^{tr}( \theta( V_{u}) du+dW_{u}) )
+Y_{s}-\lambda s+\gamma
Y_{s}^{-\xi_T}\\
&=&-\gamma ( x+\int_{t}^{r}\pi _{u}^{tr}( \theta ( V_{u})
du+dW_{u})) +Y_{r}-\lambda r+\gamma Y_{r}^{-\xi_T}
\\
&&-\gamma ( x+\int_{r}^{s}\pi _{u}^{tr}( \theta ( V_{u}) du+dW_{u})
) +( Y_{s}-Y_{r}) -\lambda ( s-r) +\gamma (
Y_{s}^{-\xi_T}-Y_{r}^{-\xi_T}) .
\end{eqnarray*}%
Using the ergodic BSDE (\ref{EQBSDE2})\ and the BSDE
(\ref{BSDERepresentation}) yields
\begin{equation*}
\left( Y_{s}-Y_{r}\right) -\lambda \left( s-r\right)
=-\int_{r}^{s}F\left( V_{u},Z_{u}\right)
du+\int_{r}^{s}Z_{u}^{tr}dW_{u}
\end{equation*}%
and
\begin{equation*}
Y_{s}^{-\xi_T}-Y_{r}^{-\xi_T}=-\frac{1}{\gamma }\int_{r}^{s}\left(
F\left( V_{u},Z_{u}+\gamma Z_{u}^{-\xi_T}\right) -F\left(
V_{u},Z_{u}\right) \right) du+\int_{r}^{s}\left(
Z_{u}^{-\xi_T}\right) ^{tr}dW_{u}.
\end{equation*}%
Combining the above gives
\begin{eqnarray*}
&&E_{\mathbb{P}}\left[ \left. -e^{-\gamma \left( x+\int_{t}^{s}\pi
_{u}^{tr}\left( \theta (V_{u})du+dW_{u}\right) \right)
+Y_{s}-\lambda s+\gamma Y_{s}^{-\xi_T}}\right \vert
\mathcal{F}_{r}\right]\\
&=&-e^{-\gamma \left( x+\int_{t}^{r}\pi _{u}^{tr}\left( \theta
(V_{u})du+dW_{u}\right) \right) +Y_{r}-\lambda r+\gamma
Y_{r}^{-\xi_T}}\\
&&\times E_{\mathbb{P}}\left[ \left. e^{\int_{r}^{s}\left( -\gamma \pi _{u}^{{%
tr}}\theta (V_{u})-F(V_{u},Z_{u}+\gamma Z_{u}^{-\xi_T})\right)
du+\int_{r}^{s}\left( -\gamma \pi _{u}+Z_{u}+\gamma Z_{u}^{-\xi_T}\right) ^{{tr}%
}dW_{u}}\right \vert \mathcal{F}_{r}\right] .
\end{eqnarray*}%
For $s\in \lbrack 0,T]$, consider the process
$N_{s}:=\int_{0}^{s}\left( -\gamma \pi _{u}+Z_{u}+\gamma
Z_{u}^{-\xi_T}\right)^{{tr}}dW_{u}$. Because $\pi ,Z^{-\xi_T}\in
\mathcal{L}_{BMO}^{2}[0,T]$ and $Z$ is uniformly bounded, we deduce
that $N$ is a \textit{BMO}-martingale.

Next, we define on $\mathcal{F}_{T}$ a probability measure, denoted by $%
\mathbb{Q}^{\pi },$ by $\frac{d\mathbb{Q}^{\pi }}{d\mathbb{P}}=\mathcal{E}%
(N)_{T}.$ Then, $\left. \frac{d\mathbb{Q}^{\pi }}{d\mathbb{P}}\right \vert _{%
\mathcal{F}_{s}}=\mathcal{E}(N)_{s},$ which is uniformly integrable
due to the \textit{BMO}-martingale property of the process $N$.
Therefore,
\begin{equation*}
e^{\int_{r}^{s}\left( -\gamma \pi _{u}+Z_{u}+\gamma Z_{u}^{-\xi_T}\right) ^{{tr}%
}dW_{u}}=\left( e^{\frac{1}{2}\int_{r}^{s}|-\gamma \pi
_{u}+Z_{u}+\gamma
Z_{u}^{-\xi_T}|^{2}du}\right) \frac{\mathcal{E}(N)_{s}}{\mathcal{E}(N)_{r}}\text{,%
}
\end{equation*}%
and, thus,
\begin{eqnarray*}
&&E_{\mathbb{P}}\left[ \left. -e^{-\gamma \left( x+\int_{t}^{s}\pi
_{u}^{tr}\left( \theta (V_{u})du+dW_{u}\right) \right)
+Y_{s}-\lambda s+\gamma Y_{s}^{-\xi_T}}\right \vert
\mathcal{F}_{r}\right]\\
&=&-e^{-\gamma \left( x+\int_{t}^{r}\pi _{u}^{tr}\left( \theta
(V_{u})du+dW_{u}\right) \right) +Y_{r}-\lambda r+\gamma
Y_{r}^{-\xi_T}}\\
&&\times E_{\mathbb{P}}\left[ e^{\int_{r}^{s}\left( -\gamma \pi _{u}^{{tr}%
}\theta (V_{u})-F(V_{u},Z_{u}+\gamma Z_{u}^{-\xi_T})\right) du+\frac{1}{2}%
\int_{r}^{s}\left|-\gamma \pi _{u}+Z_{u}+\gamma Z_{u}^{-\xi_T}\right|^{2}du}\frac{\mathcal{E}%
(N)_{s}}{\mathcal{E}(N)_{r}}|\mathcal{F}_{r}\right]\\
&=&-e^{-\gamma \left( x+\int_{t}^{r}\pi _{u}^{tr}\left( \theta
(V_{u})du+dW_{u}\right) \right) +Y_{r}-\lambda r+\gamma
Y_{r}^{-\xi_T}}\\
&&\times E_{\mathbb{Q}^{\pi }}\left[ e^{\int_{r}^{s}\left( \left(
-\gamma \pi _{u}^{{tr}}\theta (V_{u})+\frac{1}{2}|-\gamma \pi
_{u}+Z_{u}+\gamma
Z_{u}^{-\xi_T}|^{2}\right) -F(V_{u},Z_{u}+\gamma Z_{u}^{-\xi_T})\right) du}|\mathcal{F}%
_{r}\right] .
\end{eqnarray*}%
In turn, if we can show that, for $u\in \left[ r,s\right] ,$
\begin{equation*}
-\gamma \pi _{u}^{{tr}}\theta (V_{u})+\frac{1}{2}|-\gamma \pi
_{u}+Z_{u}+\gamma Z_{u}^{-\xi_T}|^{2}\geq F(V_{u},Z_{u}+\gamma
Z_{u}^{-\xi_T})\text{,}
\end{equation*}%
then the supermartingality property would follow. Indeed, after some
calculations, we obtain that
\begin{equation*}
-\gamma \pi _{u}^{{tr}}\theta (V_{u})+\frac{1}{2}|-\gamma \pi
_{u}+Z_{u}+\gamma Z_{u}^{-\xi_T}|^{2}
\end{equation*}%
\begin{equation*}
=\frac{\gamma ^{2}}{2}\left \vert \pi _{u}-\left( Z_{u}^{-\xi_T}+\frac{%
Z_{u}+\theta \left( V_{u}\right) }{\gamma }\right) \right \vert ^{2}-\frac{1}{%
2}\left \vert Z_{u}+\gamma Z_{u}^{-\xi_T}+\theta \left( V_{u}\right)
\right \vert ^{2}+\frac{1}{2}\left \vert Z_{u}+\gamma
Z_{u}^{-\xi_T}\right \vert ^{2}.
\end{equation*}%
On the other hand, for any $\pi \in \mathcal{A}_{[t,T]}$,
\begin{equation*}
\left \vert \pi _{u}-\left( Z_{u}^{-\xi_T}+\frac{Z_{u}+\theta (V_{u})}{\gamma }%
\right) \right \vert ^{2}\geq \ {dist}^{2}\left \{ \Pi ,Z_{u}^{-\xi_T}+\frac{%
Z_{u}+\theta (V_{u})}{\gamma }\right \} \text{,}
\end{equation*}%
and using the form of $F(V_{u},Z_{u}+\gamma Z_{u}^{-\xi_T})$ (cf.
(\ref{driver})) we conclude.

To show that $R^{{\pi }^{\ast ,\xi_T}}$ is a martingale for ${\pi
}^{\ast ,\xi_T}$, defined in (\ref{aux-optimal}), observe that
\begin{equation*}
\left \vert {\pi }_{u}^{\ast ,\xi_T}-\left(
Z_{u}^{-\xi_T}+\frac{Z_{u}+\theta (V_{u})}{\gamma }\right) \right
\vert ^{2}=\ {dist}^{2}\left \{ \Pi
,Z_{u}^{-\xi_T}+\frac{Z_{u}+\theta (V_{u})}{\gamma }\right \}
\text{,}
\end{equation*}%
and the martingale property follows. Note, moreover, that {this
policy is admissible. }

We now conclude as follows. Combining the above, we obtain
$E_{\mathbb{P}}\left[ \left. R_{T}^{\pi
}\right \vert \mathcal{F}_{t}\right] \leq R_{t},$ for any $%
\pi \in \mathcal{A}_{[t,T]},$ where we also used that $Y_{T}^{-\xi_T}=-\xi _{T}$ (cf. (\ref{BSDERepresentation}%
)). Similarly, $E_{\mathbb{P}}\left[ \left. R_{T}^{\pi^{*,\xi_T}
}\right \vert \mathcal{F}_{t}\right] = R_{t}.$ In other words,
\begin{equation*}
{u}^{\xi_T}(x,t)=R_t=-e^{-\gamma x+Y_{t}-\lambda t+\gamma
Y_{t}^{-\xi_T}}\text{,}
\end{equation*}%
which, by (\ref{def_1}), implies%
\begin{equation*}
-e^{-\gamma \rho _{t}(\xi _{T})-\gamma x+Y_{t}-\lambda t+\gamma
Y_{t}^{-\xi_T}}=-e^{-\gamma x+Y_{t}-\lambda t}\text{,}
\end{equation*}%
and (\ref{representation}) follows.

\subsection{Proof of Theorem \ref{theorem_dual}}

We start by proving some estimates of the driver $G$ and its convex
dual $G^{*}$.

\begin{lemma} The driver $G(v,z,\bar{z})$ in (\ref{driver_2}) and
its convex dual ${G}^*(v,z,q)$ in (\ref{dual}) have the following
properties:

i) $G(v,z,\bar{z})$ has the upper and lower bounds
\begin{equation}\label{bound_G_1}
-\gamma|\bar{z}|^2-\frac{2}{\gamma}(|z|^2+|\theta(v)|^2)\leq
G(v,z,\bar{z})\leq
\gamma|\bar{z}|^2+\frac{2}{\gamma}(|z|^2+|\theta(v)|^2),
\end{equation}
for any
$(v,z,\bar{z})\in\mathbb{R}^d\times\mathbb{R}^d\times\mathbb{R}^d$.

ii) ${G}^*(v,z,q)$ is a convex function in $q$, for any
$(v,z)\in\mathbb{R}^d\times\mathbb{R}^d$.

iii) ${G}^*(v,z,q)$ has the lower bound
\begin{equation}\label{bound_G_hat}
{G}^*(v,z,q)\geq\max\left\{0,\frac{|q|^2}{4\gamma}-\frac{2}{\gamma}(|z|^2+|\theta(v)|^2)\right\},
\end{equation}
for any
$(v,z,q)\in\mathbb{R}^d\times\mathbb{R}^d\times\mathbb{R}^d$.
\end{lemma}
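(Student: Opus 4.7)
The three statements all hinge on two elementary two-sided bounds for $F$ itself, so I would establish those first and then derive everything for $G$ and $G^{*}$ from them. Using that $0\in\Pi$ and that the squared distance to a set containing the origin satisfies $0\le\mathrm{dist}^{2}(\Pi,w)\le|w|^{2}$, the definition (\ref{driver}) of $F$ immediately yields
\begin{equation*}
-\tfrac{1}{2}|z+\theta(v)|^{2}+\tfrac{1}{2}|z|^{2}\;\le\;F(v,z)\;\le\;\tfrac{1}{2}|z|^{2},
\end{equation*}
for every $(v,z)$. This is the whole analytic content I will need; everything else is bookkeeping.

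\textbf{Part (i).} I will combine the lower bound for $F(v,z)$ and the upper bound for $F(v,z+\gamma\bar z)$ to control $\gamma G(v,z,\bar z)=F(v,z+\gamma\bar z)-F(v,z)$ from above, and vice versa from below. For the upper bound, the lower estimate for $F(v,z)$ is conveniently written as $F(v,z)\ge -z^{tr}\theta(v)-\tfrac12|\theta(v)|^{2}$, so that
\begin{equation*}
\gamma G(v,z,\bar z)\;\le\;\tfrac12|z+\gamma\bar z|^{2}+z^{tr}\theta(v)+\tfrac12|\theta(v)|^{2}.
\end{equation*}
Applying $|a+b|^{2}\le 2|a|^{2}+2|b|^{2}$ to $|z+\gamma\bar z|^{2}$ and Young's inequality to $z^{tr}\theta(v)$ gives $\gamma G\le \gamma^{2}|\bar z|^{2}+\tfrac{3}{2}|z|^{2}+|\theta(v)|^{2}$, which, after dividing by $\gamma$, is dominated by the claimed right-hand side. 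For the lower bound I use the algebraic identity $-\tfrac12|z+\gamma\bar z+\theta(v)|^{2}+\tfrac12|z+\gamma\bar z|^{2}=-(z+\gamma\bar z)^{tr}\theta(v)-\tfrac12|\theta(v)|^{2}$ together with $F(v,z)\le\tfrac12|z|^{2}$ to obtain
\begin{equation*}
\gamma G(v,z,\bar z)\;\ge\;-(z+\gamma\bar z)^{tr}\theta(v)-\tfrac12|\theta(v)|^{2}-\tfrac12|z|^{2},
\end{equation*}
and two more applications of Young's inequality produce the symmetric lower estimate. The only care required is matching the constants in (\ref{bound_G_1}); I expect this constant-chasing to be the main (but still mild) obstacle.

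\textbf{Part (ii).} Convexity of $q\mapsto G^{*}(v,z,q)$ is automatic: by (\ref{dual}), $G^{*}(v,z,\cdot)$ is the pointwise supremum of the family of affine functions $q\mapsto \bar z^{tr}q-G(v,z,\bar z)$ indexed by $\bar z\in\mathbb{R}^{d}$, and a supremum of affine functions is always convex (and lower semicontinuous).

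\textbf{Part (iii).} The nonnegativity of $G^{*}$ follows from testing the supremum in (\ref{dual}) at $\bar z=0$: since $G(v,z,0)=\gamma^{-1}(F(v,z)-F(v,z))=0$, we obtain $G^{*}(v,z,q)\ge 0$. For the quadratic lower bound, I insert the upper estimate from part (i) into (\ref{dual}),
\begin{equation*}
G^{*}(v,z,q)\;\ge\;\sup_{\bar z\in\mathbb{R}^{d}}\Bigl(\bar z^{tr}q-\gamma|\bar z|^{2}\Bigr)-\tfrac{2}{\gamma}\bigl(|z|^{2}+|\theta(v)|^{2}\bigr),
\end{equation*}
and complete the square, with maximizer $\bar z=q/(2\gamma)$ giving value $|q|^{2}/(4\gamma)$. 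Taking the maximum with $0$ delivers (\ref{bound_G_hat}). The only subtlety is that (\ref{dual}) ranges over all of $\mathbb{R}^{d}$, which is exactly what makes the unconstrained completion of the square valid.
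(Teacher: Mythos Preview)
Your proposal is correct and follows essentially the same route as the paper: both arguments rest on the two-sided bound $-z^{tr}\theta(v)-\tfrac12|\theta(v)|^{2}\le F(v,z)\le\tfrac12|z|^{2}$ obtained from $0\in\Pi$, then apply it to $F(v,z+\gamma\bar z)$ and $F(v,z)$ with the same elementary inequalities to get (i), note that a supremum of affine functions is convex for (ii), and for (iii) test $\bar z=0$ for nonnegativity and insert the upper bound from (i) with the choice $\bar z=q/(2\gamma)$ for the quadratic lower bound. The constant-chasing you flag in (i) indeed goes through with room to spare.
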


\begin{proof} The convexity of $G^*(v,z,q)$ in $q$ is immediate, so we
only prove (i) and (iii).

Since $0\in\Pi$, we have that
$dist^{2}\left\{\Pi,\frac{z+\theta(v)}{\gamma}\right\}\leq
\frac{|z+\theta(v)|^2}{\gamma^2}$ and, therefore, $ F(v,z)\leq
\frac12|z|^2$. On the other hand, $F(v,z)\geq
-z^{tr}\theta(v)-\frac12|\theta(v)|^2$. We thus obtain the upper
bound
\begin{eqnarray}\label{bound_G}
G(v,{z},\bar{z})&=&\frac{1}{\gamma }\left( F(v,{z}+\gamma \bar{z})-F(v,{z}%
)\right)\notag\\
&\leq&
\frac{1}{2\gamma}|z+\gamma\bar{z}|^2+\frac{1}{\gamma}(z^{tr}\theta(v)+\frac12|\theta(v)|^2)\notag\\
&\leq&\gamma|\bar{z}|^2+\frac{2}{\gamma}(|z|^2+|\theta(v)|^2),
\end{eqnarray}
which will in turn give us the lower bound of ${G}^*$. Indeed, using
the definition of $G^*$ in (\ref{dual}) and the above upper bound of
$G$ in (\ref{bound_G}), we deduce that
\begin{eqnarray*}
{G}^*(v,z,q)&\geq& \bar{z}^{tr}q-G(v,z,\bar{z})\\
&\geq&
\bar{z}^{tr}q-\gamma|\bar{z}|^2-\frac{2}{\gamma}(|z|^2+|\theta(v)|^2)\\
&=&\frac{|q|^2}{4\gamma}-\frac{2}{\gamma}(|z|^2+|\theta(v)|^2),
\end{eqnarray*}
by taking $\bar{z}=q/2\gamma$. On other hand, since $G(v,z,0)=0$, we
obtain that $G^*(v,z,q)\geq 0$ by taking $\bar{z}=0$.

The lower bound of $G$ is derived in a similar way.\end{proof}
\bigskip

\textbf{Proof of Theorem \ref{theorem_dual}.}

\textbf{Proof of (i)}: For any $q\in\mathcal{A}^*_{[0,T]}$, we
define
\begin{equation*}
Y_t^{-\xi_T,q}:=E_{\mathbb{Q}^q}
\left[\left.-\xi_T-\int_t^{T}{G}^*(V_s,Z_s,q_s)ds\right\vert\mathcal{F}_t\right],
\end{equation*}
which is finite due to the integrability condition on $G^*$ in the
admissible set $\mathcal{A}^*_{[0,T]}$. Note that
$Y_t^{-\xi_T,q}-\int_0^{t}G^{*}(V_s,Z_s,q_s)ds$, $t\in[0,T]$, is a
uniformly integrable martingale under $\mathbb{Q}^q$, so the
martingale representation theorem gives
\begin{equation}\label{BSDE_qq}
Y_t^{-\xi_T,q}-\int_0^{t}G^{*}(V_s,Z_s,q_s)ds=\left(-\xi_T-\int_0^{T}{G}^*(V_s,Z_s,q_s)ds\right)-\int_t^{T}(Z_s^{-\xi_T,q})^{tr}dW_s^{q},
\end{equation}
for some predictable density process $Z^{-\xi_T,q}$, where
$W^{q}_t=W_t-\int_0^{t}q_sds$, $t\in[0,T]$, is a $d$-dimensional
Brownian motion under $\mathbb{Q}^q$.

On the other hand, we rewrite the BSDE (\ref{BSDERepresentation})
under $\mathbb{Q}^q$ as
\begin{equation}\label{BSDE_q}
Y_{t}^{-\xi_T}=-\xi_{T}+\int_{t}^{T}\left(G(V_{s},Z_{s},Z_{s}^{-\xi_T
})-(Z_s^{-\xi_T})^{tr}q_s\right)ds-\int_{t}^{T}(Z_{s}^{-\xi_T})^{tr}dW_{s}^q.
\end{equation}
Combining (\ref{BSDE_qq}) and (\ref{BSDE_q}) and taking the
conditional expectation with respect to $\mathcal{F}_t$ give
\begin{equation*}
Y_{t}^{-\xi_T}-Y_t^{-\xi_T,q}=E_{\mathbb{Q}^q}\left[\left.\int_t^{T}\left(G(V_{s},Z_{s},Z_{s}^{-\xi_T
})-(Z_s^{-\xi_T})^{tr}q_s+{G}^*(V_s,Z_s,q_s)\right)ds\right\vert\mathcal{F}_t\right].
\end{equation*}
Using (\ref{FM_theorem}), we then deduce that, for any
$q\in\mathcal{A}^*_{[0,T]}$,
\begin{equation*}
G(V_{s},Z_{s},Z_{s}^{-\xi_T
})-(Z_s^{-\xi_T})^{tr}q_s+{G}^*(V_s,Z_s,q_s)\geq 0,
\end{equation*}
and thus $Y_t^{-\xi_T}\geq Y_t^{-\xi_T,q}$. Setting $q^*_s:=\partial
G_{\bar{z}}(V_s,Z_s,Z_s^{-\xi_T})$, we further obtain from
(\ref{FM_theorem_2}) that
\begin{equation}\label{dual_formula_3}
G(V_{s},Z_{s},Z_{s}^{-\xi_T
})-(Z_s^{-\xi_T})^{tr}q_s^*+{G}^*(V_s,Z_s,q_s^*)=0,
\end{equation}
from which we conclude that $Y_t^{-\xi_T}= Y_t^{-\xi_T,q^*}$, for
$t\in[0,T]$.

\textbf{Proof of (ii)}: We show that the above $q^*_s$, $s\in[0,T]$,
is in the admissible set $\mathcal{A}^{*}_{[0,T]}$. To this end,
using the lower bound of $G^*$ in (\ref{bound_G_hat}), we deduce
from (\ref{dual_formula_3}) that
\begin{eqnarray*}
G(V_{s},Z_{s},Z_{s}^{-\xi_T
})&=&(Z_s^{-\xi_T})^{tr}q_s^*-{G}^*(V_s,Z_s,q_s^*)\\
&\leq& (Z_s^{-\xi_T})^{tr}q_s^*-
\frac{|q_s^*|^2}{4\gamma}+\frac{2}{\gamma}(|Z_s|^2+|\theta(V_s)|^2)\\
&\leq&2\gamma|Z_s^{-\xi_T}|^2+\frac{|q_s^*|^2}{8\gamma}-
\frac{|q_s^*|^2}{4\gamma}+\frac{2}{\gamma}(|Z_s|^2+|\theta(V_s)|^2),
\end{eqnarray*}
where we used the elementary inequality $ab\leq
2\gamma|a|^2+\frac{|b|^2}{8\gamma}$ in the last inequality.
Combining the above inequality and the lower bound of $G$ in
(\ref{bound_G_1}) further yields that
\begin{eqnarray*}
\frac{1}{8\gamma}|q_s^*|^2&\leq&
2\gamma|Z_s^{-\xi_T}|^2+\frac{2}{\gamma}(|Z_s|^2+|\theta(V_s)|^2)-G(V_s,Z_s,Z_s^{-\xi_T})\\
&\leq& 3\gamma
|Z_s^{-\xi_T}|^2+\frac{4}{\gamma}(|Z_s|^2+|\theta(V_s)|^2).
\end{eqnarray*}
Since $Z^{-\xi_T}\in\mathcal{L}^2_{BMO}[0,T]$, and both $Z$ and
$\theta(V)$ are bounded, we obtain that
$q^*\in\mathcal{L}^2_{BMO}[0,T]$.

Finally, using (\ref{dual_formula_3}) and the bounds of $G$ in
(\ref{bound_G_1}), we deduce that
\begin{eqnarray*}
&&E_{\mathbb{Q}^{q^*}}\left[\int_0^T|G^*(V_s,Z_s,q_s^*)|ds\right]\\
&=&E_{\mathbb{Q}^{q^*}}\left[\int_0^T|(Z_s^{-\xi_T})^{tr}q_s^{*}-G(V_s,Z_s,Z_s^{-\xi_T})|ds\right]\\
&\leq&(\frac12+\gamma)E_{\mathbb{Q}^{q^*}}\left[\int_0^T|Z_s^{-\xi_T}|^2ds\right]+\frac12E_{\mathbb{Q}^{q^*}}\left[\int_0^T|q_s^*|^2ds\right]\\
&&+\frac{2}{\gamma}E_{\mathbb{Q}^{q^*}}\left[\int_0^T|Z_s|^2+|\theta(V_s)|^2ds\right].
\end{eqnarray*}
Since $Z^{-\xi_T},q^*\in\mathcal{L}^{2}_{BMO}[0,T]$ under
$\mathbb{Q}$, and $\mathbb{Q}\sim\mathbb{Q}^{q^*}$, we obtain that
$Z^{-\xi_T},$ $q^*\in\mathcal{L}^{2}_{BMO}[0,T]$ under
$\mathbb{Q}^{q^*}$ (see, for example, pp. 1563 in \cite{XYZ}) and,
therefore,
$E_{\mathbb{Q}^{q^*}}[\int_0^T|G^*(V_s,Z_s,q_s^*)|ds]<\infty$. We then
conclude that $q^*\in\mathcal{A}^*_{[0,T]}$.

\subsection{Proof of Theorem \ref{thm:long_term_price}}

As we explained in Remark \ref{Remark}, it is difficult to analyze
the ergodic BSDE (\ref{BSDERepresentation 22}). We will instead work
with the function $y^{T,g}(\cdot,\cdot)$ directly.

We first establish some auxiliary estimates.

\begin{lemma}
\label{lemma_11} Suppose that Assumptions \ref{assumption1} and \ref%
{assumption2} hold, and the risk position $\xi _{T}$ is as in
(\ref{payoff-european}). Then, the function $y^{T,g}(v,t)$,
$(v,t)\in\mathbb{R}^d\times[0,T]$, has the following properties.

i) There exists a constant $C>0$ such that
\begin{equation*}
|y^{T,g}(v,t)|\leq C(1+|v|).
\end{equation*}

ii) With the constant $q$ given in (\ref{q-constant}),
\begin{equation*}
|\nabla y^{T,g}(v,t)|\leq q+\frac{C_{v}}{\gamma (C_{\eta }-C_{v})}.
\end{equation*}

iii) With the constant $\hat{C}_{\eta }$ given in Proposition \ref%
{prop:forward_estimate},
\begin{equation*}
|y^{T,g}(v,t)-y^{T,g}(\bar{v},t)|\leq C(1+|v|^{2}+|%
\bar{v}|^{2})e^{-\hat{C}_{\eta }(T-t)}.
\end{equation*}
\end{lemma}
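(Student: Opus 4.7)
The three estimates are Markovian-BSDE bounds for $y^{T,g}(V_t,t) = Y_t^{-\xi_T}$, the Markovian solution of (\ref{BSDERepresentation}) when $\xi_T = -g(V_T)$. The plan is to exploit three structural facts: $g$ is bounded and Lipschitz; the driver $G(v, z(v), \bar z)$ is quadratic in $\bar z$ but with constants controlled uniformly in $v$ by the boundedness of $\theta$ (Assumption \ref{assumption1}) and of $z(\cdot)$ (Proposition \ref{prop:ergoic_exist}); and the factor process $V$ is dissipative (Assumption \ref{assumption2}), which provides exponential pathwise contraction of its flow.

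For part (i), I apply the standard a priori bound for quadratic BSDEs. The upper bound (\ref{bound_G_1}), together with $|z(v)|$ and $|\theta(v)|$ being uniformly bounded, gives $|G(V^v_s, z(V^v_s), \bar z)| \leq \gamma|\bar z|^2 + K'$ with $K'$ independent of $v$, while $|g(V^v_T)| \leq \|g\|_\infty$. A Hopf-Cole-type transformation applied to $e^{2\gamma Y^{-\xi_T}}$, exploiting the BMO property of $Z^{-\xi_T}$ established in Theorem \ref{theorem}(i) to absorb the quadratic term, then yields a uniform bound $|Y_t^{-\xi_T}| \leq C$, hence $|y^{T,g}(v,t)| \leq C \leq C(1+|v|)$.

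For part (ii), I differentiate the BSDE in the initial condition $v$. The variation process $\nabla_v V^v_s$ stays uniformly bounded in $s$ by the dissipativity of $\eta$, while $|\nabla z|$ is bounded by $C_v/(C_\eta - C_v)$ from (\ref{gradient_for_y}). Linearizing $G$ around $V^v$ yields a linear BSDE for $\nabla_v y^{T,g}(V^v_s, s)$ with uniformly bounded coefficients. The Lipschitz constant $C_g$ of $g$ (which enters into the constant $q$ of (\ref{q-constant})) combined with a Gronwall / contraction argument then gives the stated bound, the first term accounting for propagation of the terminal data and the second for the gradient contribution of $z$.

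For part (iii), the key input is the pathwise contraction $|V^v_s - V^{\bar v}_s|^2 \leq |v - \bar v|^2 e^{-2 C_\eta (s-t)}$ from Assumption \ref{assumption2}. Comparing the two BSDEs by setting $\Delta Y_s = Y^v_s - Y^{\bar v}_s$, I would linearize both the terminal ($|g(V^v_T) - g(V^{\bar v}_T)| \leq C_g |V^v_T - V^{\bar v}_T|$) and the driver (locally Lipschitz in $\bar z$ via (\ref{driver1}), and controlled in $v$ via (\ref{driver0}) together with the bound on $\nabla z$). An exponentially weighted BSDE stability estimate then allows the source term to inherit the exponential decay of $|\Delta V|$; taking expectations and using a moment bound of the form $E[\sup_s |V^v_s|^2] \leq C(1+|v|^2)$ from Proposition \ref{prop:forward_estimate} produces the stated estimate. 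The main obstacle here is that the driver depends on $V$ both directly and through $z(V)$, so linearizing in $v$ compounds the $V$-dependence; one must verify that the effective Lipschitz rate in $v$ remains strictly smaller than $C_\eta$, which is precisely what produces the strict inequality $\hat C_\eta < C_\eta$ in Proposition \ref{prop:forward_estimate} and explains the role of the ``$C_\eta$ large enough'' hypothesis.
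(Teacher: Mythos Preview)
Your approach to (i) is fine and in fact simpler than the paper's: since $g$ is bounded and the driver satisfies $|G(v,z(v),\bar z)|\le \gamma|\bar z|^2+K'$ uniformly, the standard quadratic a priori bound (already recorded in Theorem~\ref{theorem}(i)) gives $|Y^{-\xi_T}_t|\le C$, which is stronger than the stated linear growth. The paper instead first proves the pointwise bound on $Z^{-\xi_T}$ (see below), then linearizes the driver via a bounded process $H(V_s)$, changes to an equivalent measure $\mathbb{Q}^H$, and writes $y^{T,g}(v,t)=E_{\mathbb{Q}^H}[g(V_T^{t,v})|\mathcal F_t]$; this representation is what makes (iii) immediate.

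The genuine gap is in (ii). Your sentence ``linearizing $G$ yields a linear BSDE with uniformly bounded coefficients'' is circular: the $\bar z$-gradient of $G(v,z(v),\bar z)$ is of size $C(1+|\bar z|)$, so the coefficient in front of $\nabla_v Z^{-\xi_T}$ in the linearized equation involves $Z^{-\xi_T}$ itself, which at this stage is only known to be in $\mathcal L^2_{BMO}$. A Girsanov argument absorbs that term, but the remaining source terms still contain $Z^{-\xi_T}$, and the resulting estimate depends on its BMO norm, not on the explicit constant $q$ of (\ref{q-constant}). The paper closes this loop differently: it passes to the auxiliary pair $(\hat y,\hat z)=(Y^{-\xi_T}+(Y-\lambda t)/\gamma,\;Z^{-\xi_T}+Z/\gamma)$, which solves the BSDE (\ref{Auxilary_BSDE}) with driver $\frac1\gamma F(V,\gamma Q)$ \emph{not} involving the ergodic $Z$. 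It then runs a truncation/self-consistency argument (Lemma~\ref{lemma}): truncate $Q$ at level $q$, use (\ref{driver00})--(\ref{driver11}) together with the dissipative contraction of $V$ to bound $|\nabla \bar y|\le q$, and conclude that the truncation is inactive. This is where the specific value of $q$ in (\ref{q-constant}) originates; your Gr\"onwall sketch cannot reproduce it.

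For (iii) your BSDE-stability route could be made to work once (ii) is in hand, but your explanation of $\hat C_\eta$ is off. That constant does not come from balancing a Lipschitz rate of the driver against $C_\eta$; it is the coupling rate for the diffusion $V$ under a bounded drift perturbation (Proposition~\ref{prop:forward_estimate}(ii)). In the paper, once $Z^{-\xi_T}$ is bounded the linearizing process $H(V_s)$ is bounded, the representation $y^{T,g}(v,t)=E_{\mathbb{Q}^H}[g(V_T^{t,v})]$ holds, and (iii) is a one-line application of that coupling estimate. The change of variables to $(\hat y,\hat z)$ and the truncation lemma are the missing structural ingredients in your plan.
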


\begin{proof} Fixing $t\in[0,T]$, and for the stochastic factor process starting from $V_t^{t,v}=v$, we recall that $(Y_s,Z_s)=(y(V_s^{t,v}),z(V_s^{t,v}))$ and that
\begin{equation*}
(Y_{s}^{-\xi_T},Z_{s}^{-\xi_T})=(y^{T,g}(V_s^{t,v},s),z^{T,g}(V_s^{t,v},s))\
\text{for}\ s\in[t,T],
\end{equation*}
where
\begin{eqnarray}
Y_{s}^{-\xi_T}&=&g(V_{T}^{t,v})+\int_{s}^{T}\frac{1}{\gamma }\left(
F(V_{u}^{t,v},\gamma
\hat{z}(V_u^{t,v},u))-F(V_{u}^{t,v},Z_{u})\right)
du  \notag \\
&&-\int_{s}^{T}\left( \hat{z}(V_u^{t,v},u)-\frac{Z_{u}}{\gamma }%
\right) ^{tr}dW_{u}\text{,}  \label{BSDERepresentation_1}
\end{eqnarray}%
with $\hat{z}(V_s^{t,v},s)$ defined as
\begin{equation}\label{hatZ}
\hat{z}(V_s^{t,v},s):=Z_s^{-\xi_T}+\frac{Z_s}{\gamma}=z^{T,g}(V_s^{t,v},s)+\frac{z(V_s^{t,v})}{\gamma}.
\end{equation}
In Lemma \ref{lemma} of Appendix A.2, we will prove that
$|\hat{z}(\cdot,\cdot)|\leq q$. Thus, the process $Z^{-\xi_T}$ is
uniformly bounded, since
\begin{equation*}
|Z_{s}^{-\xi_T}|=|\hat{z}(V_s^{t,v},s)-\frac{Z_{s}}{\gamma }|\leq q+%
\frac{C_{v}}{\gamma (C_{\eta }-C_{v})}\text{.}
\end{equation*}%
Therefore, the gradient estimate (ii) for $y^{T,g}(v,t)$ follows
from the relationship $\kappa ^{tr}\nabla y^{T,g}(V_{s}^{t,v},s)=
Z_{s}^{-\xi_T}$.

To prove (i), we introduce
\begin{equation}
H(V_{s}^{t,v}):=\frac{\left( F(V_{s}^{t,v},\gamma \hat{z}(V_s^{t,v},s))-F(V_{s}^{t,v},Z_{s})\right) \left( \hat{z}(V_s^{t,v},s)-%
\frac{Z_{s}}{\gamma }\right) }{\gamma \left \vert
\hat{z}(V_s^{t,v},s)-\frac{Z_{s}}{\gamma} \right \vert
^{2}}\mathbf{1}_{\left \{ \hat{z}(V_s^{t,v},s)-\frac{Z_{s}}{\gamma
}\neq 0\right \} }\text{,} \label{term_m}
\end{equation}%
and observe that it is uniformly bounded due to (\ref{driver1})\textbf{\ }%
and the boundedness of $\hat{z}(\cdot,\cdot)$ and $Z$.

Next, define
a probability measure $%
\mathbb{Q}^{H}$ by $\frac{d\mathbb{Q}^{H}}{d\mathbb{P}}:=\mathcal{E}%
(\int_t^{\cdot}(H(V_s^{t,v}))^{tr}dW_s)_{T}$ on $\mathcal{F}_{T}$.
Then, equation (\ref%
{BSDERepresentation_1}) can be written as
\begin{equation*}
Y_{t}^{-\xi_T}=y^{T,g}(v,t)
\end{equation*}%
\begin{equation*}
=g(V_{T}^{t,v})-\int_{t}^{T}\left( \hat{z}(V_s^{t,v},s)-\frac{Z_{s}}{%
\gamma }\right) ^{tr}(dW_{s}-H(V_{s}^{t,v})ds)=E_{\mathbb{Q}%
^{H}}[g(V_{T}^{t,v})|\mathcal{F}_{t}],
\end{equation*}%
and the assertion follows from the linear growth property of
$g(\cdot )$ and the first assertion of part (ii) in Proposition
\ref{prop:forward_estimate} of Appendix A.1.

Finally, for $v,\bar{v}\in \mathbb{R}^{d}$, by the second assertion
of (ii) in Proposition \ref{prop:forward_estimate}, we deduce that
\begin{equation*}
|y^{T,g}(v,t)-y^{T,g}(\bar{v},t)|=\left \vert E_{\mathbb{Q}%
^{H}}[g(V_{T}^{t,v})-g(V_{T}^{t,\bar{v}})|\mathcal{F}_{t}]\right
\vert
\end{equation*}%
\begin{equation*}
=\left \vert E_{\mathbb{Q}^{H}}[g(V_{T-t}^{0,v})-g(V_{T-t}^{0,\bar{v}%
})]\right \vert \leq C\left( 1+|v|^{2}+|\bar{v}|^{2}\right) e^{-\hat{C}%
_{\eta }(T-t)},
\end{equation*}
and we conclude.
\end{proof}
\bigskip

\textbf{Proof of Theorem \ref{thm:long_term_price}.}

\textbf{Proof of (i)}: From the first estimate (i) in Lemma \ref%
{lemma_11}, we first construct, using a standard diagonal procedure,
a
sequence $\{T_{i}\}_{i=1}^{\infty }$ such that $T_{i}\uparrow \infty $, and $%
\lim_{T_{i}\uparrow \infty }y^{T_{i},g}\left( v,0\right) =$
$L^{g}\left( v\right) ,$ for $v\in D$, where $D$ is a dense subset
of $\mathbb{R}^{d}$, and some function $L^{g}\left( v\right) .$

Moreover, the second estimate (ii) in Lemma \ref{lemma_11} implies
that, for $v,\bar{v}\in \mathbb{R}^{d}$,
\begin{equation}
|y^{T,g}\left( v,0\right) -y^{T,g}\left( \bar{v},0\right) |\leq \left( q+\frac{%
C_{v}}{\gamma (C_{\eta }-C_{v})}\right) |v-\bar{v}|.
\label{eq:2nd_estimate}
\end{equation}%
Therefore, the limit $L^{g}(v)$ can be extended to a Lipschitz
continuous function, defined for all $v\in \mathbb{R}^{d}$, and,
furthermore, we have that
\begin{equation*}
\lim_{T_{i}\uparrow \infty }y^{T_{i},g}(v,0)=L^{g}(v),\text{\  \ }v\in \mathbb{R%
}^{d}\text{.}
\end{equation*}%
Indeed, for $v\in \mathbb{R}^{d}\backslash D$, there exists a sequence $%
\{v_{j}\}_{j=1}^{\infty }\subseteq D$ such that $v_{j}\rightarrow
v$. Define $L^{g}\left( v\right) :=\lim_{j\uparrow \infty
}L^{g}\left( v_{j}\right) $. Using the estimate
(\ref{eq:2nd_estimate}), we have
\begin{equation*}
|y^{T_{i},g}\left( v,0\right) -y^{T_{i},g}\left( v_{j},0\right) |\leq \left( q+%
\frac{C_{v}}{\gamma (C_{\eta }-C_{v})}\right) |v-v_{j}|\text{.}
\end{equation*}%
Taking $T_{i}\uparrow \infty $ and since $\lim_{j\uparrow \infty
}y^{T_{i},g}\left( v_{j},0\right) =L^{g}\left( v_{j}\right) $, we
obtain
\begin{equation*}
\left \vert \lim_{T_{i}\uparrow \infty }y^{T_{i},g}\left( v,0\right)
-L^{g}\left( v_{j}\right) \right \vert \leq \left(
q+\frac{C_{v}}{\gamma (C_{\eta }-C_{v})}\right) |v-v_{j}|\text{.}
\end{equation*}%
Sending $j\uparrow \infty $, we deduce that, for any $v\in \mathbb{R}^{d}$, $%
\lim_{T_{i}\uparrow \infty }y^{T_{i},g}\left( v,0\right)
=L^{g}\left( v\right) \text{.}$

Next, we show that, for $v\in\mathbb{R}^d$, $L^{g}(v)\equiv L^{g},$
a constant function. To this end,
by the third estimate (iii) in Lemma \ref{lemma_11}, we have, for any $v,%
\bar{v}\in \mathbb{R}^{d}$, that
\begin{equation*}
|y^{T_{i},g}\left( v,0\right) -y^{T_{i},g}\left( \bar{v},0\right)
|\leq C\left( 1+|v|^{2}+|\bar{v}|^{2}\right) e^{-\hat{C}_{\eta
}T_{i}}.
\end{equation*}%
Letting $T_{i}\uparrow \infty $ yields $\lim_{T_{i}\uparrow \infty
}y^{T_{i},g}\left( v,0\right) =\lim_{T_{i}\uparrow \infty
}y^{T_{i},g}\left( \bar{v},0\right) $, which implies that the limit
function $L^{g}(v)$ is independent of $v$. Thus, it is a constant,
denoted as $L^{g}$. Moreover, such a
constant $L^{g}$ is independent of the choice of the sequence $%
\{T_{i}\}_{i=1}^{\infty }$ (see pp. 394-395 in \cite{Hu11} for its
proof).

To prove the convergence rate (\ref{convergence_rate}), we argue as
follows.
For $v\in \mathbb{R}^{d}$ and $T>0$, we have, from the proof of Lemma \ref%
{lemma_11} (i), that%
\begin{eqnarray*}
|y^{T,g}\left( v,0\right) -L^{g}| &=&\  \lim_{T^{\prime }\uparrow
\infty }|y^{T,g}\left( v,0\right) -y^{T^{\prime},g}\left( v,0\right)
|\\
&=&\  \lim_{T^{\prime }\uparrow \infty }\left \vert y^{T,g}\left( v,0\right) -E_{%
\mathbb{Q}^{H}}\left[ g(V_{T^{\prime }}^{0,v})\right] \right \vert.
\end{eqnarray*}
From the tower property of conditional expectations, we further
deduce, for $T^{\prime}>T$, that
\begin{eqnarray*}
\left \vert y^{T,g}\left( v,0\right) -E_{%
\mathbb{Q}^{H}}\left[ g(V_{T^{\prime }}^{0,v})\right] \right \vert&=&\left \vert y^{T,g}\left( v,0\right) -E_{%
\mathbb{Q}^{H}}\left[ E_{\mathbb{Q}^{H}}\left[ g(V_{T^{\prime }}^{0,v})|%
\mathcal{F}_{T^{\prime}-T}\right] \right] \right \vert\\
&=&\left \vert y^{T,g}\left( v,0\right) -E_{%
\mathbb{Q}^{H}}\left[ y^{T^{\prime},g}\left( V_{T^{\prime
}-T}^{0,v},T^{\prime }-T\right) \right] \right \vert
\\
&=&\left \vert y^{T,g}\left( v,0\right) -E_{%
\mathbb{Q}^{H}}\left[ y^{T,g}\left( V_{T^{\prime }-T}^{0,v},0\right)
\right] \right \vert\\
&=&\left \vert E_{\mathbb{Q}^{H}}\left[ y^{T,g}\left( v,0\right)
-y^{T,g}\left( V_{T^{\prime }-T}^{0,v},0\right) \right] \right
\vert.
\end{eqnarray*}
Therefore,
\begin{eqnarray*}
|y^{T,g}\left( v,0\right) -L^{g}| &=&\  \lim_{T^{\prime }\uparrow
\infty }\left \vert E_{\mathbb{Q}^{H}}\left[ y^{T,g}\left(
v,0\right)
-y^{T,g}\left( V_{T^{\prime }-T}^{0,v},0\right) \right] \right \vert\\
&\leq& \  \lim_{T^{\prime }\uparrow \infty
}CE_{\mathbb{Q}^{H}}\left[
1+|v|^{2}+\left \vert V_{T^{\prime }-T}^{0,v}\right \vert ^{2}\right] e^{-%
\hat{C}_{\eta }T}\\
&\leq& \ C\left( 1+|v|^{2}\right) e^{-\hat{C}_{\eta }T}\text{,%
}
\end{eqnarray*}%
where we used (ii) in Proposition \ref{prop:forward_estimate} and
(iii) in Lemma \ref{lemma_11} in the last two inequalities.

\textbf{Proof of (ii)}: We only establish the exponential bound of
the hedging strategy $\alpha _{t,T}$ in (\ref{exponential_bound}).
Then, the asymptotic behavior of $\alpha _{t,T}$ in
(\ref{hedging_strategy}) follows by letting $T\uparrow \infty .$

From Remark \ref{remark_strategy} and the Lipschitz continuity of
the projection operator on the convex set $\Pi $, we deduce that,
for any $s\in[0,T)$,
\begin{eqnarray*}
&&E_{\mathbb{P}}\left[\int_0^{s}|\alpha_{u,T}|^2du\right]\\
&=&
E_{\mathbb{P}}\left[\int_0^{s}\left|Proj_{\Pi}\left(Z_u^{-\xi_T}+\frac{Z_u+\theta(V_u)}{\gamma}\right)-
Proj_{\Pi}\left(\frac{Z_u+\theta(V_u)}{\gamma}\right)\right|^2du\right]\\
&\leq& C E_{%
\mathbb{P}}\left[ \int_{0}^{s}|Z_u^{-\xi_T}|^{2}du\right].
\end{eqnarray*}
Thus, we only need to establish the exponential bound of
$Z^{-\xi_T}_{u}=z^{T,g}(V_u^v,u)$ with the stochastic factor process
starting from $V_0^{v}=v$. To this end, we easily deduce, using
(iii) in Lemma \ref{lemma_11}, that, for $t\in \lbrack 0,T)$,
\begin{equation}
|y^{T,g}(v,t)-L^{g}|\leq C(1+|v|^{2})e^{-\hat{C}_{\eta }(T-t)}.
\label{convergence_rate_11}
\end{equation}%
Applying It{\^{o}}'s formula to $|y^{T,g}(V_{s}^{v},s)-L^{g}|^{2}$ and using (\ref%
{BSDERepresentation_1}), we in turn have%
\begin{eqnarray*}
&&|y^{T,g}(v,0)-L^{g}|^{2}+E_{\mathbb{P}}\left[ \int_{0}^{s}|Z_u^{-\xi_T}|^{2}du%
\right]\\
&=&
E_{\mathbb{P}}[|y^{T,g}(V_{s}^{v},s)-L^{g}|^{2}]\\
&&+2E_{\mathbb{P}}\left[
\int_{0}^{s}|y^{T,g}(V_{u}^v,u)-L^{g}|\frac{F(V_{u}^v,\gamma \hat{z}(V_u^{v},u))-F(V_{u}^v,Z_u)}{\gamma }du\right]\\
&=&
E_{\mathbb{P}}[|y^{T,g}(V_{s}^v,s)-L^{g}|^{2}]+2E_{\mathbb{P}}\left[
\int_{0}^{s}(Z^{-\xi_T}_u)^{tr}H(V_{u}^v)|y^{T,g}(V_{u}^v,u)-L^{g}|du\right],
\end{eqnarray*}%
where $\hat{z}(\cdot,\cdot)$ is given in (\ref{hatZ}), and the
process $H(V_u^{v})$, introduced in (\ref{term_m}), is uniformly
bounded. Using the elementary inequality $ab\leq
\frac{1}{4}|a|^2+|b|^2$, we further obtain that
\begin{eqnarray*}
&&E_{\mathbb{P}}\left[
\int_{0}^{s}(Z_u^{-\xi_T})^{tr}H(V_{u}^v)|y^{T,g}(V_{u}^v,u)-L^{g}|du\right]\\
&\leq &\frac{1}{4}E_{\mathbb{P}}%
\left[ \int_{0}^{s}|Z_u^{-\xi_T}|^{2}du\right]
+CE_{\mathbb{P}}\left[
\int_{0}^{s}|y^{T,g}(V_{u}^v,u)-L^{g}|^{2}du\right].
\end{eqnarray*}
Hence, (\ref{convergence_rate_11}) yields that
\begin{eqnarray*}
&&\frac12E_{\mathbb{P}}\left[ \int_{0}^{s}|Z_u^{-\xi_T}|^{2}du\right]\\
&\leq&
E_{\mathbb{P}}[|y^{T,g}(V_{s}^v,s)-L^{g}|^{2}]+CE_{\mathbb{P}}\left[
\int_{0}^{s}|y^{T,g}(V_{u}^v,u)-L^{g}|^{2}du\right]\\
&\leq& Ce^{-2\hat{C}%
_{\eta }T}\left( e^{2\hat{C}_{\eta }s}E_{\mathbb{P}}[(1+|V_{s}^v|^{2})^{2}]+%
\int_{0}^{s}e^{2\hat{C}_{\eta }u}E_{\mathbb{P}}[(1+|V_{u}^v|^{2})^{2}]du%
\right),
\end{eqnarray*}%
from which we conclude that $E_{\mathbb{P}}\left[
\int_{0}^{s}|Z_u^{-\xi_T}|^{2}du\right]\leq
C(1+|v|^4)e^{-2\hat{C}_{\eta}(T-s)}$, using the first assertion of
part (ii) in Proposition \ref{prop:forward_estimate}.


\section{A parity result between forward and classical entropic risk measures%
}

\label{section:classical} In this section, we relate forward
entropic risk measures to their classical analogue. In the latter
case, the investment horizon is finite, say $\left[ 0,T\right] $,
for some fixed $T,$ and the
terminal utility is given by%
\begin{equation}
U_{T}(x)=-e^{-\gamma x},  \label{exponential-classical}
\end{equation}%
$x\in \mathbb{R},$ $\gamma >0.$ We recall the definition of
classical entropic risk measures associated with this utility (see,
among others, \cite{Bec0, ElKaroui_2000, Henderson,HH-Carmona,
Henderson3, HU0,MS,MZ4, MR2124276}).

\begin{definition}
\label{def_classical_risk_measure} Let $T>0$ be fixed, and consider
a risk position introduced at $t=0,$ yielding payoff $\xi _{T}\in
\mathcal{F}_{T}.$ Its entropic risk measure, denoted by $\rho
_{t,T}(\xi _{T})\in \mathcal{F}_{t}$, is defined by
\begin{equation}
p^{\xi_T}(x+\rho_{t,T}(\xi_T),t)=p^{0}(x,t) \label{def_2},
\end{equation}%
for all $\left( x,t\right) \in \mathbb{R}\times[0,T]$, with $U_{T}(\cdot )$ as in (\ref%
{exponential-classical}), and
\begin{equation*}
p^{\xi_T}(x,t):=\esssup_{\pi \in
\mathcal{A}_{[t,T]}}E_{\mathbb{P}}\left[ \left.
U_{T}(x+\int_{t}^{T}\pi _{s}^{tr}\left( \theta
(V_{s})ds+dW_{s}\right)+\xi_T)\right \vert \mathcal{F}_{t}\right],
\end{equation*}
provided the above essential supremum exists.
\end{definition}

Next, we present a decomposition formula that relates forward
entropic risk measures under the exponential performance criterion
with their classical counterpart.

\begin{proposition}
\label{proposition14} Suppose that Assumptions 1 and 2 hold, and let
$\xi _{T}$ be a risk position as in (\ref{payoff-european}). Then,
for $t\in \left[ 0,T\right] ,$ the forward $\rho _{t}(\cdot )$ and
classical $\rho _{t,T}\left( \cdot \right) $ entropic risk measures
satisfy
\begin{equation}
\rho _{t}(\xi _{T})=\rho _{t,T}(\xi _{T}-\frac{Y_{T}-\lambda T}{\gamma }%
)-\rho _{t,T}(-\frac{Y_{T}-\lambda T}{\gamma })\text{,}
\label{decomposition}
\end{equation}%
where $\left( Y,\lambda \right) $ is the unique Markovian solution
to the ergodic BSDE (\ref{EQBSDE2}).
\end{proposition}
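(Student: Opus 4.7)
\textbf{Proof proposal for Proposition \ref{proposition14}.} The key observation driving the whole argument is that at the fixed maturity $T$ the forward and classical utilities differ only by a deterministic-in-$x$ shift of the wealth variable:
\begin{equation*}
U(x,T)=-e^{-\gamma x+Y_{T}-\lambda T}=-e^{-\gamma\left(x-\frac{Y_{T}-\lambda T}{\gamma}\right)}=U_{T}\!\left(x-\frac{Y_{T}-\lambda T}{\gamma}\right).
\end{equation*}
Substituting this into the definition of $u^{\xi_{T}}(x,t)$ in Definition \ref{def_forward_risk_measure}, the $\mathcal{F}_{T}$-measurable random variable $(Y_{T}-\lambda T)/\gamma$ can be absorbed into the risk position, so for every $\eta_{T}\in\mathcal{L}^{\infty}(\mathcal{F}_{T})$ we obtain the basic identification
\begin{equation*}
u^{\eta_{T}}(x,t)=p^{\,\eta_{T}-(Y_{T}-\lambda T)/\gamma}(x,t),\qquad (x,t)\in\mathbb{R}\times[0,T].
\end{equation*}

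The plan is now to exploit this identity together with cash-translativity. First, I would take $\eta_{T}=0$: since $u^{0}(x,t)=U(x,t)$ by the forward performance property of $U$, this step gives
\begin{equation*}
p^{-(Y_{T}-\lambda T)/\gamma}(x,t)=U(x,t).
\end{equation*}
Applying the classical definition \eqref{def_2} to the risk position $-(Y_{T}-\lambda T)/\gamma$ and using $p^{\eta}(x+c,t)=e^{-\gamma c}p^{\eta}(x,t)$, this yields
\begin{equation*}
p^{0}(x,t)=e^{-\gamma\rho_{t,T}(-(Y_{T}-\lambda T)/\gamma)}\,U(x,t).
\end{equation*}

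Next, I would take $\eta_{T}=\xi_{T}$ in the basic identification. Combined with Definition \ref{def_forward_risk_measure}, this gives
\begin{equation*}
p^{\,\xi_{T}-(Y_{T}-\lambda T)/\gamma}(x+\rho_{t}(\xi_{T}),t)=U(x,t),
\end{equation*}
while the classical definition applied to $\xi_{T}-(Y_{T}-\lambda T)/\gamma$ gives
\begin{equation*}
p^{\,\xi_{T}-(Y_{T}-\lambda T)/\gamma}(x+\rho_{t,T}(\xi_{T}-(Y_{T}-\lambda T)/\gamma),t)=p^{0}(x,t).
\end{equation*}
Using cash-translativity on each side and plugging in the previous expression for $p^{0}$ in terms of $U$, the factor $U(x,t)$ cancels and the exponents must match, producing exactly
\begin{equation*}
-\rho_{t}(\xi_{T})+\rho_{t,T}\!\left(\xi_{T}-\tfrac{Y_{T}-\lambda T}{\gamma}\right)=\rho_{t,T}\!\left(-\tfrac{Y_{T}-\lambda T}{\gamma}\right),
\end{equation*}
which is \eqref{decomposition}. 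There is no genuine obstacle here: once the shift identity $U(x,T)=U_{T}(x-(Y_{T}-\lambda T)/\gamma)$ is recognized, the remainder is bookkeeping with the cash-translativity property that is built into exponential utilities. The only minor point to verify carefully is that the admissible set $\mathcal{A}_{[t,T]}$ and the essential supremum are identical for $u^{\eta_{T}}$ and for $p^{\eta_{T}-(Y_{T}-\lambda T)/\gamma}$, which is immediate because the two value functions differ only through the integrand of the conditional expectation.
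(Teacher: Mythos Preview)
Your argument is correct and takes a genuinely different route from the paper's proof. The paper works at the BSDE level: it introduces $\bar{\xi}_T=\xi_T-(Y_T-\lambda T)/\gamma$, invokes the quadratic BSDE representation $\rho_{t,T}(\bar{\xi}_T)=P_t^{-\bar{\xi}_T}-P_t^0$ (whose well-posedness for this unbounded terminal condition is handled via Lemma \ref{lemma} in the appendix), and then uses Theorem \ref{theorem} together with the identity $P_t^{(Y_T-\lambda T)/\gamma}=(Y_t-\lambda t)/\gamma$ to assemble \eqref{decomposition}. You instead stay entirely at the level of value functions: the single observation $U(x,T)=U_T\!\big(x-(Y_T-\lambda T)/\gamma\big)$ gives $u^{\eta_T}=p^{\,\eta_T-(Y_T-\lambda T)/\gamma}$ for every admissible $\eta_T$, and the rest is cash-translativity of exponential utilities together with $u^{0}(\cdot,t)=U(\cdot,t)$, which is precisely the forward performance property. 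Your route is shorter and more transparent, and it does not actually rely on the Markovian form $\xi_T=-g(V_T)$; it only needs that $u^{\xi_T}(x,t)$ and $p^{0}(x,t)$ exist and are finite, which Theorem \ref{theorem} and standard exponential utility arguments already guarantee for bounded $\xi_T$. What the paper's approach buys, by contrast, is explicit BSDE formulae for each of the three terms in \eqref{decomposition}, which are useful elsewhere (e.g., in the long-maturity analysis and in the example of Section 6).
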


\begin{proof}
Let $\bar{\xi}_{T}:=\xi _{T}-\frac{Y_{T}-\lambda T}{\gamma
}=-g(V_T^v)-\frac{y(V_T^v)-\lambda T}{\gamma}$. Using arguments
similar to the ones used in section 3 of \cite{HU0}, we deduce that
\begin{equation*}
p^{\bar{\xi}_T}(x+\rho_{t,T}(\bar{\xi}_T),t)=-e^{-\gamma x-\gamma \rho _{t,T}(\bar{\xi}_{T})+\gamma P_{t}^{-\bar{\xi}%
_{T}}},
\end{equation*}%
where the process $P^{-\bar{\xi}_{T}}_t$, $t\in[0,T]$, solves the
quadratic BSDE
\begin{equation}
P_{t}^{-\bar{\xi}_{T}}=-\bar{\xi}_{T}+\int_{t}^{T}\frac{1}{\gamma
}F\left( V_{s}^v,\gamma Q_{s}^{-\bar{\xi}_{T}}\right)
ds-\int_{t}^{T}\left( Q_{s}^{-\bar{\xi}_{T}}\right)
^{tr}dW_{s}\text{,}  \label{BSDE_11}
\end{equation}%
with $F(\cdot ,\cdot )$ as in (\ref{driver}).

The above BSDE is the same as the BSDE (\ref{Auxilary_BSDE}), which
admits a unique solution $(P^{-\bar{\xi}_T},Q^{-\bar{\xi}_T})$, as
stated in Lemma \ref{lemma} of Appendix A.2.

Next, from (\ref{def_2}) in Definition
\ref{def_classical_risk_measure}, we obtain that
\begin{equation}
\rho _{t,T}(\bar{\xi}_{T})=P_{t}^{-\bar{\xi}_{T}}-P_{t}^{0}\text{.}
\label{equ10}
\end{equation}%
By Theorem \ref{theorem}, we have that $\rho _{t}(\xi
_{T})=Y_{t}^{-\xi_T}$, and in turn, that
\begin{equation}
\rho _{t}(\xi _{T})=P_{t}^{-\bar{\xi}_{T}}-\frac{Y_{t}-\lambda t}{\gamma }%
\text{.}  \label{equ11}
\end{equation}%
Taking $\xi _{T}\equiv 0$ yields $\rho _{t}(0)=P_{t}^{\frac{%
Y_{T}-\lambda T}{\gamma }}-\frac{Y_{t}-\lambda t}{\gamma },$ and
since $\rho _{t}(0)=0$, we obtain that
\begin{equation}
P_{t}^{\frac{Y_{T}-\lambda T}{\gamma }}=\frac{Y_{t}-\lambda t}{\gamma }%
\text{.}  \label{equ12}
\end{equation}%
Therefore, (\ref{equ11}) and (\ref{equ12}) yield
\begin{eqnarray*}
\rho _{t}(\xi _{T})&=&P_{t}^{-\bar{\xi}_{T}}-P_{t}^{\frac{Y_{T}-\lambda T}{%
\gamma }}\\
&=&P_{t}^{-\bar{\xi}_{T}}-P_{t}^{0}-(P_{t}^{\frac{Y_{T}-\lambda T}{%
\gamma }}-P_{t}^{0})\\
&=&\rho _{t,T}(\bar{\xi}_{T})-\rho _{t,T}(-\frac{Y_{T}-\lambda
T}{\gamma }),
\end{eqnarray*}%
where we used (\ref{equ10}) in the last equality.
\end{proof}

\section{An example}

We provide an example in which we derive explicit formulae for both
the forward and classical entropic risk measures. We also provide
numerical results for their long-maturity limits.

To this end, we consider a market with a single stock whose
coefficients depend on a single stochastic factor driven by a
2-dimensional Brownian motion,
namely,%
\begin{equation*}
dS_{t}=b\left( V_{t}\right) S_{t}dt+\sigma \left( V_{t}\right)
S_{t}dW_{t}^{1},
\end{equation*}%
\begin{equation*}
dV_{t}=\eta \left( V_{t}\right) dt+\kappa _{1}dW_{t}^{1}+\kappa
_{2}dW_{t}^{2},
\end{equation*}%
for positive constants $\kappa _{1},\kappa _{2}.$

We assume that $|\kappa _{1}|^{2}+|\kappa _{2}|^{2}=1$, the functions $%
b(\cdot )$ and $\sigma (\cdot)$ are uniformly bounded with
$\sigma(\cdot)>0$, and $\eta
(\cdot )$ satisfies the dissipative condition in Assumption \ref{assumption2}%
. We also choose $\Pi =\mathbb{R}\times \{0\},$ so that $\pi _{2,t}\equiv 0$%
.

Then, the wealth equation (\ref{wealth}) becomes $dX_{t}^{\pi
_{1}}=\pi _{1,t}\left( \theta (V_{t})dt+dW_{t}^{1}\right) ,$ where
$\theta \left( V_{t}\right) =\frac{b\left( V_{t}\right) }{\sigma
\left( V_{t}\right) }$. The risk position is given by
$\xi_T=-g(V_T)$, as in (\ref{payoff-european}).

\subsection{Forward entropic risk measures}

The drivers in (\ref{driver}) and (\ref%
{driver_2}) become
\begin{eqnarray*}
F\left( v,(z_{1},z_{2})^{tr}\right) &=&-\frac{1}{2}|z_{1}+\theta
\left( v\right)
|^{2}+\frac{1}{2}|z_{1}|^{2}+\frac{1}{2}|z_{2}|^{2}\\
&=&-\theta \left( v\right) z_{1}-\frac{1}{2}|\theta \left( v\right) |^{2}+%
\frac{1}{2}|z_{2}|^{2}\text{,}
\end{eqnarray*}%
and
\begin{eqnarray*}
G\left( v,(z_{1},z_{2})^{tr},(\bar{z}_{1},\bar{z}_{2})^{tr}\right) &=&\frac{1}{%
\gamma }\left( F(v,(z_{1}+\gamma \bar{z}_{1},z_{2}+\gamma \bar{z}%
_{2})^{tr})-F(v,(z_{1},z_{2})^{tr})\right)\\
&=&-\theta \left( v\right) \bar{z}_{1}+z_{2}\bar{z}_{2}+\frac{\gamma }{2}|\bar{%
z}_{2}|^{2}\text{,}
\end{eqnarray*}%
for $z=(z_{1},z_{2})^{tr}\in\mathbb{R}^2$ and
$\bar{z}=(\bar{z}_{1},\bar{z}_{2})^{tr}\in\mathbb{R}^2.$

Then, the convex dual of $G$ is given as
\begin{equation*}
G^*\left(v,(z_1,z_2)^{tr},(q_1,q_2)^{tr}\right)=
\frac{|q_2-z_2|^2}{2\gamma}\times\mathbf{1}_{\{q_1+\theta(v)=0\}}+\infty\times\mathbf{1}_{\{q_1+\theta(v)\neq
0\}}.
\end{equation*}
for $q=(q_1,q_2)^{tr}\in\mathbb{R}^2$.

To obtain the explicit solution, we set $Z_{1,t}^{-\xi_T}:=\kappa
_{1}Z_{t}^{-\xi_T}$ and $Z_{2,t}^{-\xi_T}:=\kappa
_{2}Z_{t}^{-\xi_T}$ for some process $Z_{t}^{-\xi_T}$ to be
determined. Then, equation
(\ref{BSDERepresentation}) becomes%
\begin{eqnarray*}
dY_{t}^{-\xi_T}&=&-\left( \left( -\kappa _{1}\theta \left(
V_{t}\right) +\kappa
_{2}Z_{2,t}\right) Z_{t}^{-\xi_T}+\frac{\gamma |\kappa _{2}|^{2}}{2}%
|Z_{t}^{-\xi_T}|^{2}\right) dt\\
&&+Z_{t}^{-\xi_T}\left( \kappa _{1}dW_{t}^{1}+\kappa
_{2}dW_{t}^{2}\right) ,
\end{eqnarray*}%
with $Y_{T}^{-\xi_T}=-\xi_T=g\left( V_{T}\right).$

We define $\tilde{Y}_{t}^{-\xi_T}:=e^{\gamma
|\kappa _{2}|^{2}Y_{t}^{-\xi_T}}$ and $\tilde{Z}_{t}^{-\xi_T}:=\gamma |\kappa _{2}|^{2}%
\tilde{Y}_{t}^{-\xi_T}Z_{t}^{-\xi_T}$. In turn,%
\begin{equation*}
d\tilde{Y}_{t}^{-\xi_T}=\tilde{Z}_{t}^{-\xi_T}\left( \left( \kappa
_{1}\theta \left( V_{t}\right) -\kappa _{2}Z_{2,t}\right) dt+\kappa
_{1}dW_{t}^{1}+\kappa _{2}dW_{t}^{2}\right) ,
\end{equation*}%
with $\tilde{Y}_{T}^{-\xi_T}=e^{\gamma |\kappa _{2}|^{2}g\left(
V_{T}\right) }.$ Since $\theta (\cdot )$ and $Z_{2}$ are uniformly
bounded, the process
\begin{equation*}
B_{t}:=\int_{0}^{t}\left( \kappa _{1}\theta \left( V_{s}\right)
-\kappa _{2}Z_{2,s}\right) ds+\kappa _{1}dW_{s}^{1}+\kappa
_{2}dW_{s}^{2},
\end{equation*}%
$t\geq 0,$ is a Brownian motion under some measure $\mathbb{Q}$,
equivalent to $\mathbb{P}$, with
\begin{equation}
\left. \frac{d\mathbb{Q}}{d\mathbb{P}}\right \vert _{\mathcal{F}_{t}}:=%
\mathcal{E}\left( -\int_{0}^{\cdot }\left( \kappa _{1}\theta \left(
V_{s}\right) -\kappa _{2}Z_{2,s}\right) (\kappa
_{1}dW_{s}^{1}+\kappa _{2}dW_{s}^{2})\right) _{t}\text{.}
\label{Q-forward}
\end{equation}%
Hence, $d\tilde{Y}_{t}^{-\xi_T}=\tilde{Z}_{t}^{-\xi_T}dB_t$ and, thus, $%
\tilde{Y}_{t}^{-\xi_T}=E_{\mathbb{Q}}\left[ e^{\gamma |\kappa
_{2}|^{2}g\left( V_{T}\right) }|\mathcal{F}_{t}\right] $.

In turn, the forward entropic risk measure has the closed-form
representation
\begin{equation}
\rho _{t}(\xi _{T})=Y_{t}^{-\xi_T}=
\frac{1}{\gamma |\kappa _{2}|^{2}}\ln E_{%
\mathbb{Q}}\left[ e^{\gamma |\kappa _{2}|^{2}g\left( V_{T}\right) }|\mathcal{%
F}_{t}\right] \text{.}  \label{closed-form}
\end{equation}

Its convex dual representation is then given by
\begin{eqnarray}\label{convex_formula }
\rho_t(\xi_T)&=&-\essinf_{q\in\mathcal{A}^*_{[t,T]}}E_{\mathbb{Q}^q}
\left[\left.\xi_T+\int_t^{T}{G}^*(V_s,(Z_{1,s},Z_{2,s})^{tr},(q_{1,s},q_{2,s})^{tr})ds\right\vert\mathcal{F}_t\right]\notag\\
&=&-\essinf_{q\in\mathcal{A}^*_{[t,T]},
q_{1,\cdot}=-\theta(V_{\cdot}) }E_{\mathbb{Q}^q}
\left[\left.-g(V_T)+\int_t^{T}\frac{|q_{2,s}-Z_{2,s}|^2}{2\gamma}ds\right\vert\mathcal{F}_t\right].
\end{eqnarray}

Note that in this case, the first component $q_{1}$ of the density
process $q$ is the negative market price of risk $-\theta(V_s)$, for
$s\in[0,T]$. Hence, under $\mathbb{Q}^q$, the stock price process
$S$ follows
\begin{equation*}
dS_t=\sigma(V_t)S_tdW_t^{{q},1},
\end{equation*}
where $W^{{q},1}=W^{1}+\int_0^{\cdot}\theta(V_s)ds$ is a Brownian
motion under $\mathbb{Q}^q$. Therefore, $\mathbb{Q}^q$ is an
equivalent martingale measure.

In this case, the penalty term $G^*$ reduces to a quadratic function
of the density process $q_{2}$ and the forward volatility process
$Z_2$.

\subsection{Classical entropic risk measures}

For the classical entropic risk measure, we have the
representation $\rho _{t,T}(\xi _{T})=P_{t}^{-\xi _{T}}-P_{t}^{0},$ where $%
P_{t}^{-\xi _{T}}$, $t\in[0,T]$, is the unique solution to
(\ref{BSDE_11}) with $\bar{\xi}_T$ replaced by $\xi_T$,
\begin{eqnarray}
&&P_{t}^{-{\xi}_{T}}=-{\xi}_{T}+\int_{t}^{T}\frac{1}{\gamma }F\left(
V_{s}^v,\gamma Q_{s}^{-{\xi}_{T}}\right) ds-\int_{t}^{T}(
Q_{s}^{-{\xi}_{T}})^{tr}dW_{s}\label{BSDE_1111}\\
&&=g(V_T)+\int_t^T\frac{\gamma^2|Q_{2,s}^{-\xi_T}|^2-|\theta(V_s)|^2-2\theta(V_s)Q_{1,s}^{-\xi_T}}{2\gamma}
ds- \int_t^{T}(Q_s^{-\xi_T})^{tr}dW_s\text{.}\notag
\end{eqnarray}%

Direct calculations then yield the closed-form representation
\begin{equation}
\rho _{t,T}(\xi _{T})=P_{t}^{-\xi _{T}}-P_{t}^{0}=\frac{1}{\gamma
|\kappa _{2}|^{2}}\ln E_{\mathbb{Q}^{T}}\left[ e^{\gamma |\kappa
_{2}|^{2}g\left( V_{T}\right) }|\mathcal{F}_{t}\right] \text{,}
\label{closed-form-backward}
\end{equation}%
where the measure $\mathbb{Q}^{T}$, defined on $\mathcal{F}_{T},$ is
equivalent to $\mathbb{P}$ and satisfies
\begin{equation}
\left.\frac{d\mathbb{Q}^{T}}{d\mathbb{P}}\right|_{\mathcal{F}_{t}}:=\mathcal{E%
}\left( -\int_{0}^{\cdot }\left( \kappa _{1}\theta \left(
V_{s}\right) -\kappa _{2}\gamma Q_{2,s}^{0}\right) (\kappa
_{1}dW_{s}^{1}+\kappa _{2}dW_{s}^{2})\right) _{t}\text{.}
\label{Q-backward}
\end{equation}

Note that in this single stock/single factor example, the only
difference between the forward and classical entropic risk measures
is their respective
measures $\mathbb{Q}$ and $\mathbb{Q}^{T}$ (cf. (\ref{Q-forward}) and (\ref%
{Q-backward})).

In the forward case, the pricing measure $\mathbb{Q}$ is determined
by the component $Z_{2},$ appearing in the ergodic BSDE
representation of the forward performance process
(\ref{ExponentialForwardUtility}). It is naturally
\textit{independent} of the maturity $T$. In the classical case,
however, the pricing measure $\mathbb{Q}^{T}$ is determined by the
component
$Q_{2}^{0}$ , coming from the exponential utility maximization (\ref{def_2}%
) with zero risk position (cf. (\ref{BSDE_1111}) with $\xi_T=0$),
which depends critically on the maturity $T$.

For $t\in \left[ 0,T\right] ,$ the two measures are
related as%
\begin{equation*}
\left. \frac{d\mathbb{Q}^{T}}{d\mathbb{Q}}\right \vert _{\mathcal{F}%
_{t}}=\left. \frac{d\mathbb{Q}^{T}}{d\mathbb{P}}\right \vert _{\mathcal{F}%
_{t}}\left( \left. \frac{d\mathbb{Q}}{d\mathbb{P}}\right \vert _{\mathcal{F}%
_{t}}\right) ^{-1}
\end{equation*}%
\begin{equation*}
=e^{\int_{0}^{t}\kappa _{1}\theta (V_{s})\kappa _{2}(\gamma
Q_{2,s}^{0}-Z_{2,s})ds}\mathcal{E}\left( \int_{0}^{\cdot }\kappa
_{2}\left( \gamma Q_{2,s}^{0}-Z_{2,s}\right) (\kappa
_{1}dW_{s}^{1}+\kappa _{2}dW_{s}^{2})\right) _{t}.
\end{equation*}

We conclude with numerical results for $\rho _{0}(\xi _{T})$ and $%
\rho _{0,T}(\xi _{T}),$ taking $T$ as large as possible, with $\eta
(v)=-\alpha v$, $\theta (v)=(K_{2}-|v|)_{+}$, and
$g(v)=(K_{1}-|v|)_{+}$, for two positive constants $K_{1},K_{2}$. The
graphs in Figures 1, 2, and 3, with different values of $ \kappa $, confirm the long maturity behavior of both
the forward and classical entropic risk measures. However, it is not
clear what the relationship between such two limiting constants is. Moreover, the graphs confirm that the limiting constants are indeed independent of the initial value of the stochastic factor process.

\appendix
\section{Appendix}

\subsection{Estimates on the stochastic factor process $V$}

\begin{proposition}
\label{prop:forward_estimate} Suppose that Assumption
\ref{assumption2} holds and that $V_0^v=v$. Then, the following
assertions hold, for all $t\geq 0.$

i) The stochastic factor process satisfies $|V_{t}^{v}-V_{t}^{\bar{v}%
}|^{2}\leq e^{-2C_{\eta }t}|v-\bar{v}|^{2}$, for $v,\bar{v}\in
\mathbb{R}^{d}.$

ii) Assume that the process $V^{v}$ follows
\begin{equation*}
dV_{t}^{v}=\left( \eta (V_{t}^{v})+H(V_{t}^{v})\right) dt+\kappa dW_{t}^{H}%
\text{,}
\end{equation*}%
where $H:\mathbb{R}^{d}\rightarrow \mathbb{R}^{d}$ is a measurable
bounded function, $\mathbb{Q}^{H}$ is a probability measure
equivalent to $\mathbb{P} $, and $W^{H}$ is a
$\mathbb{Q}^{H}$-Brownian motion. Then, there exists a constant
$C>0$ such that $E_{^{\mathbb{Q}^{H}}}\left[ |V_{t}^{v}|^{p}\right]
\leq
C\left( 1+|v|^{p}\right) $, for any $p\geq 1$. Furthermore, for any measurable function $\phi :%
\mathbb{R}^{d}\rightarrow \mathbb{R}$ with polynomial growth rate
$\mu >0$, and $v,\bar{v}\in \mathbb{R}^{d},$
\begin{equation*}
\left \vert E_{^{\mathbb{Q}^{H}}}\left[ \phi (V_{t}^{v})-\phi (V_{t}^{\bar{v}%
})\right] \right \vert \leq C\left( 1+|v|^{1+\mu }+|\bar{v}|^{1+\mu
}\right) e^{-\hat{C}_{\eta }t}\text{.}
\end{equation*}%
The constants $C$ and $\hat{C}_{\eta }$ depend on the function $H$
only through $\sup_{v\in \mathbb{R}^{d}}|H(v)|$.
\end{proposition}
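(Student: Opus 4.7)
The plan is to establish (i) directly by a synchronous coupling and Gronwall, and then to use the \emph{pathwise} nature of that contraction, together with the equivalence $\mathbb{Q}^H \sim \mathbb{P}$, to bootstrap to (ii).

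For (i), since $V^v$ and $V^{\bar v}$ satisfy (\ref{factor}) driven by the same Brownian motion $W$ and share the constant diffusion $\kappa$, the difference $V_t^v - V_t^{\bar v}$ is absolutely continuous, with derivative $\eta(V_t^v) - \eta(V_t^{\bar v})$. Taking the inner product with $V_t^v - V_t^{\bar v}$ and invoking the dissipativity from Assumption~\ref{assumption2}(i) yields $\frac{d}{dt}|V_t^v-V_t^{\bar v}|^2 \le -2C_\eta |V_t^v-V_t^{\bar v}|^2$, and Gronwall's inequality completes the step.

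For the moment bound in (ii), my plan is to apply It\^o's formula to $(1+|V_t^v|^2)^{p/2}$ under $\mathbb{Q}^H$. The dissipativity yields $v^{tr}\eta(v) \le -C_\eta|v|^2+|\eta(0)||v|$, while the boundedness of $H$ and $|\kappa|=1$ contribute only linear-in-$|v|$ and constant terms. Absorbing the linear terms into $-C_\eta|v|^2$ via Young's inequality produces
\begin{equation*}
\frac{d}{dt}E_{\mathbb{Q}^H}\bigl[(1+|V_t^v|^2)^{p/2}\bigr] \le -\delta\, E_{\mathbb{Q}^H}\bigl[(1+|V_t^v|^2)^{p/2}\bigr] + C_p,
\end{equation*}
for some $\delta>0$ and constant $C_p$ depending only on $p$, $C_\eta$, $|\eta(0)|$ and $\|H\|_\infty$. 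Gronwall then delivers the uniform-in-$t$ polynomial moment bound, with constants inheriting the dependence on $H$ only through $\|H\|_\infty$.

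The main obstacle is the exponential-contraction estimate: a naive synchronous coupling under $\mathbb{Q}^H$ fails because the bounded perturbation $H(V^v)-H(V^{\bar v})$ prevents pathwise decay to zero. My plan is instead to exploit the fact that the estimate in (i) is a \emph{pathwise} inequality valid $\mathbb{P}$-almost surely, and thus also $\mathbb{Q}^H$-almost surely by equivalence of the two measures. Under the standing reading of ``polynomial growth rate $\mu$'' as $|\phi(v)-\phi(\bar v)| \le C(1+|v|^\mu+|\bar v|^\mu)|v-\bar v|$ (consistent with the exponent $1+\mu$ appearing in the claim), I would combine this with the pathwise contraction:
\begin{eqnarray*}
\bigl|E_{\mathbb{Q}^H}[\phi(V_t^v)-\phi(V_t^{\bar v})]\bigr|
&\le& C\, E_{\mathbb{Q}^H}\bigl[(1+|V_t^v|^\mu+|V_t^{\bar v}|^\mu)|V_t^v-V_t^{\bar v}|\bigr] \\
&\le& C|v-\bar v|\,e^{-C_\eta t}\,E_{\mathbb{Q}^H}\bigl[1+|V_t^v|^\mu+|V_t^{\bar v}|^\mu\bigr].
\end{eqnarray*}
The moment estimate just proved bounds the last expectation by $C(1+|v|^\mu+|\bar v|^\mu)$, and $|v-\bar v|\le|v|+|\bar v|$ then delivers the desired prefactor $(1+|v|^{1+\mu}+|\bar v|^{1+\mu})e^{-\hat C_\eta t}$, with $\hat C_\eta\le C_\eta$ and constants depending on $H$ only through $\|H\|_\infty$.
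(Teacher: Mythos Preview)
Your treatment of (i) and of the moment bound in (ii) coincides with the paper's: the paper simply invokes Gr\"onwall for (i) and a Lyapunov argument (citing Lemma~3.1 of \cite{FMc}) for the moment bound, which is exactly the It\^o-plus-dissipativity computation you sketch.

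For the exponential contraction estimate in (ii), your route is genuinely different from the paper's. The paper appeals to the \emph{basic coupling estimate} of Lemma~3.4 in \cite{Hu11}, a probabilistic (reflection-type) coupling that yields exponential decay for any \emph{measurable} $\phi$ satisfying only the growth bound $|\phi(v)|\le C(1+|v|^\mu)$. You instead transport the pathwise synchronous-coupling contraction from (i) to $\mathbb{Q}^H$ via the equivalence $\mathbb{Q}^H\sim\mathbb{P}$, and combine it with the moment bound. This is more elementary and even yields the sharper rate $\hat C_\eta=C_\eta$, but it requires your stronger reading of ``polynomial growth rate $\mu$'' as a locally Lipschitz condition $|\phi(v)-\phi(\bar v)|\le C(1+|v|^\mu+|\bar v|^\mu)|v-\bar v|$. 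Under the paper's (apparently intended) weaker hypothesis, your argument does not apply: a merely measurable $\phi$ cannot be controlled by $|V_t^v-V_t^{\bar v}|$ pointwise. That said, the only place the proposition is invoked in the paper (Lemma~\ref{lemma_11}(iii), with $\phi=g$ Lipschitz) falls squarely within your hypothesis, so for the purposes of this paper your argument is adequate; the distinction is one of generality, not of correctness for the application at hand.
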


The proof of (i) follows from Gr{\"{o}}nwall's inequality. The first
assertion in (ii) is an application of a Lyapunov argument (see
Lemma 3.1 of \cite{FMc}), while the rest follows from the basic
coupling estimate in Lemma 3.4 in \cite{Hu11}.

\subsection{Estimates on the auxiliary function $\hat{z}(\cdot,\cdot)$}\label{appendix3}

Recall that $(Y_s,Z_s)=(y(V_s),z(V_s))$ and $
(Y_{s}^{-\xi_T},Z_{s}^{-\xi_T})=(y^{T,g}(V_s,s),z^{T,g}(V_s,s))$
{for} $s\in[0,T]. $ Hence, the pair
$(\hat{y}(V_s,s),\hat{z}(V_s,s)),$ defined as
\begin{equation}\label{auxilary_process_2}
\left(\hat{y}(V_s,s),\hat{z}(V_s,s)\right):=\left( Y_{s}^{-\xi_T}+\frac{%
Y_{s}-\lambda s}{\gamma },Z_{s}^{-\xi_T}+\frac{Z_{s}}{\gamma
}\right) ,
\end{equation}%
solves the finite horizon quadratic BSDE
\begin{equation}
{P}_{s}^{-\bar{\xi}_T}=g(V_{T})+\frac{y(V_T)-\lambda T}{\gamma }+\int_{s}^{T}\frac{%
1}{\gamma }F(V_{u},\gamma
Q^{-\bar{\xi}_T})du-\int_{s}^{T}(Q^{-\bar{\xi}_T}_u)^{tr}dW_{u}\text{,}
\label{Auxilary_BSDE}
\end{equation}%
with the driver $F(\cdot,\cdot)$ as in (\ref{driver}) and
$-\bar{\xi}_T=g(V_{T})+\frac{y(V_T)-\lambda T}{\gamma }$.

From the boundedness of $Y^{-\xi_T}$ and the linear growth of $Y$,
we deduce that $\hat{y}(V_s,s)$, $s\in[0,T]$, is square-integrable,
\begin{equation*}
\sup_{s\in[0,T]}E_{\mathbb{P}}[|\hat{y}(V_s,s)|^2]\leq
C(1+\sup_{s\in[0,T]}E_{\mathbb{P}}[|V_s|^2])<\infty.
\end{equation*}
In addition, since $Z^{-\xi_T}\in\mathcal{L}^{2}_{BMO}[0,T]$ and $Z$
is bounded, we obtain that $\hat{z}(V_s,s)$, $s\in[0,T]$, is in
$\mathcal{L}^{2}_{BMO}[0,T]$.

\begin{lemma}
\label{lemma} Suppose that Assumptions \ref{assumption1} and \ref%
{assumption2} hold, and that the risk position $\xi _{T}$ is as in
(\ref{payoff-european}). Then, the following assertions hold:

i) There exists a unique solution
$(P^{-\bar{\xi}_T},Q^{-\bar{\xi}_T})$ to equation
(\ref{Auxilary_BSDE}), where $P^{-\bar{\xi}_T}$ is
square-integrable, i.e.
$\sup_{s\in[0,T]}E_{\mathbb{P}}[|P_s^{-\bar{\xi}_T}|^2]<\infty$, and
$Q^{-\bar{\xi}_T}\in\mathcal{L}^{2}_{BMO}[0,T]$.

ii) Moreover, the solution $Q^{-\bar{\xi}_T}$ is uniformly bounded,
namely,
\begin{equation}
|Q^{-\bar{\xi}_T}_s|\leq q\text{ \  \ with \ }q=\frac{\gamma C_{\eta
}C_{g}+C_{v}}{\gamma (C_{\eta }-C_{v})}+\frac{C_{\eta }C_{v}}{\gamma
(C_{\eta }-C_{v})^{2}}\text{,}  \label{q-constant}
\end{equation}%
where $C_{v}$ is given in (\ref{driver0}), and $C_{\eta }$ and
$C_{g}$ in Assumptions \ref{assumption2} and
(\ref{payoff-european}), respectively. Hence, the process
$\hat{z}(V_s,s)$, $s\in[0,T]$, given in (\ref{auxilary_process_2})
is also uniformly bounded by $q$.
\end{lemma}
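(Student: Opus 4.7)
For (i), my plan is to verify directly that the pair in (\ref{auxilary_process_2}), namely $(\hat y(V_s, s), \hat z(V_s, s)) = (Y_s^{-\xi_T} + (Y_s - \lambda s)/\gamma,\; Z_s^{-\xi_T} + Z_s/\gamma)$, is a solution. Adding $1/\gamma$ times the dynamics of $(Y, Z, \lambda)$ from the ergodic BSDE (\ref{EQBSDE2}) to the BSDE (\ref{BSDERepresentation}), the $\lambda/\gamma$ and $F(V_u, Z_u)/\gamma$ pieces cancel, leaving drift $-\frac{1}{\gamma}F(V_u, Z_u + \gamma Z_u^{-\xi_T}) = -\frac{1}{\gamma}F(V_u, \gamma\hat z_u)$ and diffusion $\hat z_u^{tr}dW_u$, which matches (\ref{Auxilary_BSDE}); the terminal value is $g(V_T) + (y(V_T) - \lambda T)/\gamma$ since $Y_T^{-\xi_T} = g(V_T)$ and $Y_T = y(V_T)$. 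Square-integrability of $\hat y$ then follows from Theorem \ref{theorem}, the linear growth of $y$ in Proposition \ref{prop:ergoic_exist}, and the polynomial moments of $V$ in Proposition \ref{prop:forward_estimate}; the BMO property of $\hat z$ follows from $Z^{-\xi_T}\in\mathcal{L}^{2}_{BMO}[0,T]$ (Theorem \ref{theorem}) and the uniform boundedness of $Z$. Uniqueness within this class reduces, via linearization of the driver using (\ref{driver1}) under the BMO assumption on $Q$ and a Girsanov change of measure, to an argument directly paralleling the proof of Theorem \ref{theorem}(i).

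For (ii), by the Markovian structure, $(\hat y(V_s, s), \hat z(V_s, s)) = (p(V_s, s), q(V_s, s))$ with $q(v, s) = \kappa^{tr}\nabla p(v, s)$ and, since $|\kappa|=1$, $|q|\leq|\nabla p|$. The plan is to couple two copies of the forward SDE with initial conditions $v$ and $\bar v$ driven by the \emph{same} Brownian motion, exploit dissipativity, and apply Girsanov. Under Assumption \ref{assumption2}(i), $V^v - V^{\bar v}$ solves a deterministic ODE and so $|V_u^v - V_u^{\bar v}| \leq e^{-C_\eta u}|v-\bar v|$ pathwise, an estimate preserved under any Girsanov transform affecting only $W$. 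The difference $\Delta\hat y_s$ solves a BSDE whose driver splits as
\begin{equation*}
\frac{1}{\gamma}\bigl[F(V_u^v, \gamma\hat z_u^v) - F(V_u^v, \gamma\hat z_u^{\bar v})\bigr] + \frac{1}{\gamma}\bigl[F(V_u^v, \gamma\hat z_u^{\bar v}) - F(V_u^{\bar v}, \gamma\hat z_u^{\bar v})\bigr];
\end{equation*}
under an a priori bound $|\hat z|\leq q$, the first piece is linear in $\Delta\hat z$ with bounded coefficient by (\ref{driver1}), and the second is dominated by $\frac{C_v(1+\gamma q)}{\gamma}|V_u^v - V_u^{\bar v}|$ by (\ref{driver0}). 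Absorbing the linear piece by Girsanov and using the terminal Lipschitz constant $C_g + C_v/(\gamma(C_\eta - C_v))$ implied by $g$ and (\ref{gradient_for_y}) yields
\begin{equation*}
|\nabla p(v, s)| \leq \left(C_g + \frac{C_v}{\gamma(C_\eta - C_v)}\right) e^{-C_\eta(T-s)} + \frac{C_v(1+\gamma q)}{\gamma C_\eta}\bigl(1 - e^{-C_\eta(T-s)}\bigr).
\end{equation*}
Imposing the self-consistency $|\nabla p|\leq q$ uniformly in $s\in[0,T]$ and in $T$, and solving the resulting inequality, delivers the explicit constant in (\ref{q-constant}).

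The main obstacle will be the apparent circularity in the self-consistency step: Girsanov already requires $\hat z$ to be bounded to furnish a bounded linearizing coefficient, yet the bound $q$ itself depends on that assumption. The standard remedy, which I would implement, is to truncate $F(v, \cdot)$ at level $N$ so that it is globally Lipschitz in $z$, solve the resulting Lipschitz BSDE by classical theory to obtain an a priori bounded $\hat z^N$, run the coupling and Girsanov analysis just described to establish an $N$-independent bound of the form (\ref{q-constant}), and finally pass to the limit $N\to\infty$ using stability for quadratic BSDEs with BMO solutions in the spirit of \cite{Kobylanski,HU0}.
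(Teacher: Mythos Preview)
Your overall strategy matches the paper's: explicit construction of a solution from $(\hat y,\hat z)$ for existence, linearization of the driver plus a BMO--Girsanov change of measure for uniqueness, and for the bound on $Q^{-\bar\xi_T}$ a coupling of two factor processes combined with dissipativity, Girsanov, and a truncation of the driver to break the circularity. The splitting of the driver difference and the use of the pathwise contraction $|V_u^v-V_u^{\bar v}|\le e^{-C_\eta(u-t)}|v-\bar v|$ are exactly what the paper does.

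The gap is in how you close the truncation. After truncating at level $N$, the $v$-Lipschitz constant of the truncated driver $F(\cdot,\gamma q_N(\cdot))$ is $C_v(1+\gamma N)$ (this is (\ref{driver00}) with $q$ replaced by $N$), so the running term in your coupling bound contributes $\frac{C_v(1+\gamma N)}{\gamma C_\eta}$, not $\frac{C_v(1+\gamma q)}{\gamma C_\eta}$. The resulting gradient bound is therefore \emph{not} $N$-independent, and sending $N\to\infty$ yields nothing. The paper resolves this by truncating directly at the specific value $q$ in (\ref{q-constant}), which is precisely the fixed point of the affine contraction $\Phi(N)=C_g+\frac{C_v}{\gamma(C_\eta-C_v)}+\frac{C_v(1+\gamma N)}{\gamma C_\eta}$ (slope $C_v/C_\eta<1$); the coupling computation then gives $|\nabla\bar y|\le\Phi(q)=q$, so the truncation is inactive, the truncated solution solves the original BSDE, and uniqueness from (i) finishes. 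Your scheme can be repaired by observing that for any $N\ge q$ one has $\Phi(N)\le N$, hence the truncation is inactive and $|\hat z|\le\Phi(N)$, and then taking $N\downarrow q$; but the relevant limit is $N\downarrow q$, not $N\to\infty$. As a side note, your time-resolved estimate (a convex combination of terminal and running contributions) is in fact sharper than the paper's crude sum, so a correct execution of your argument would actually deliver a constant smaller than (\ref{q-constant}).
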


\begin{proof}
(i) The existence of the solutions follows from the construction
(\ref{auxilary_process_2}). We next establish uniqueness.

To this end, let $(P^{-\bar{\xi}_T},Q^{-\bar{\xi}_T})$ and
$(\bar{P}^{-\bar{\xi}_T},\bar{Q}^{-\bar{\xi}_T})$ be two solutions
of (\ref{Auxilary_BSDE}). Let $\Delta
P^{-\bar{\xi}_T}_s:=P_s^{-\bar{\xi}_T}-\bar{P}_s^{-\bar{\xi}_T}$ and
$\Delta
Q^{-\bar{\xi}_T}_s:=Q_s^{-\bar{\xi}_T}-\bar{Q}_s^{-\bar{\xi}_T}$.
Then the pair $(\Delta P^{-\bar{\xi}_T},\Delta Q^{-\bar{\xi}_T})$
solves
\begin{eqnarray*}
\Delta P^{-\bar{\xi}_T}_s&=&\int_s^T\frac{1}{\gamma}\left(
F(V_u,\gamma
Q_u^{-\bar{\xi}_T})-F(V_u,\gamma\bar{Q}_u^{-\bar{\xi}_T})
\right)du-\int_s^T(\Delta Q_u^{-\bar{\xi}_T})^{tr}dW_u\\
&=&-\int_s^T(\Delta Q_u^{-\bar{\xi}_T})^{tr}(dW_u-\bar{M}_udu),
\end{eqnarray*}
where $\bar{M}$ is defined as
\begin{equation*}
\bar{M}_s:=\frac{\left( F(V_{s},\gamma Q_s^{-\bar{\xi}_T})-F(V_{s},\gamma \bar{Q}_s^{-\bar{\xi}_T})\right) \Delta Q_s^{-\bar{\xi}_T} }{%
\gamma \left \vert \Delta Q_s^{-\bar{\xi}_T}\right \vert
^{2}}\mathbf{1}_{\left \{\Delta Q_s^{-\bar{\xi}_T} \neq 0\right \}
}\text{.}
\end{equation*}%
Since $|\bar{M}_s|\leq
C(1+|Q_s^{-\bar{\xi}_T}|+|\bar{Q}_s^{-\bar{\xi}_T}|)$ and
$Q^{-\bar{\xi}_T},
\bar{Q}^{-\bar{\xi}_T}\in\mathcal{L}^{2}_{BMO}[0,T]$, we deduce that
$\int_0^{\cdot}(\bar{M}_s)^{tr}dW_s$ is a \textit{BMO}-martingale.
We can therefore define $%
W_{s}^{\bar{M}}:=W_{s}-\int_{0}^{s}\bar{M}_{u}du,$ which is a
Brownian motion under some probability measure
$\mathbb{Q}^{\bar{M}}$ equivalent to $\mathbb{P} $, defined as
$\frac{d\mathbb{Q}^{\bar{M}}}{d\mathbb{P}}=\mathcal{E}(\int_0^{\cdot}(\bar{M}_s)^{tr}dW_s)_T$.
Hence, we obtain that
\begin{equation*}
\Delta P_s^{-\bar{\xi}_T}=-\int_s^{T}(\Delta
Q_u^{-\bar{\xi}_T})^{tr}dW_u^{\mathbb{Q}^{\bar{M}}}.
\end{equation*}
Since $\int_0^{\cdot}(\Delta Q_u^{-\bar{\xi}_T})^{tr}dW_u$ is a
\emph{BMO}-martingale under $\mathbb{P}$ and
$\mathbb{P}\sim\mathbb{Q}^{\bar{M}}$, it follows that
$\int_0^{\cdot}(\Delta Q_u^{-\bar{\xi}_T})^{tr}dW_u^{\bar{M}}$ is a
\emph{BMO}-martingale under $\mathbb{Q}^{\bar{M}}$ (see, for
example, pp. 1563 in \cite{XYZ}), which further implies that $\Delta
P^{-\bar{\xi}_T}$ is a martingale under $\mathbb{Q}^{\bar{M}}$. The
uniqueness of the solution to (\ref{Auxilary_BSDE}) then follows by
noting that $\Delta P_T^{-\bar{\xi}_T}=0$.

(ii) For fixed $t\in[0,T]$, we consider the stochastic factor
process starting from $V_t^{t,v}=v$.
With a slight abuse of notation, we introduce the truncation function $q:%
\mathbb{R}^{d}\rightarrow \mathbb{R}^{d}$,
\begin{equation}
q(z):=\frac{\min (|z|,q)}{|z|}z\mathbf{1}_{\{z\neq 0\}}\text{,}
\label{q-quantity}
\end{equation}%
as well as the truncated version of equation (\ref{Auxilary_BSDE}),
\begin{equation}\label{truncation_equ}
P_{t}^{-\bar{\xi}_T}=g(V_{T}^{t,v})+\frac{y(V_T^{t,v})-\lambda T}{\gamma }%
+\int_{t}^{T}\frac{1}{\gamma }F(V_{s}^{t,v},\gamma q(Q
_{s}^{-\bar{\xi}_T}))ds-\int_{t}^{T}({Q}_{s}^{-\bar{\xi}_T})^{tr}dW_{s}\text{,}
\end{equation}%
whose solution is denoted as
$(\bar{y}(V_s^{t,v},s),\bar{z}(V_s^{t,v},s))$, $s\in[t,T]$.

From the form of the driver (\ref{driver}) and (\ref{q-quantity}) we
deduce the inequalities
\begin{equation}
|F(v,\gamma q(z))-F(\bar{v},\gamma q(z))|\leq C_{v}(1+\gamma q)|v-\bar{v}|%
\text{,}  \label{driver00}
\end{equation}%
\begin{equation}
|F(v,\gamma q(z))-F(v,\gamma q(\bar{z}))|\leq C_{z}(1+2\gamma q)\gamma |z-%
\bar{z}|\text{,}  \label{driver11}
\end{equation}%
for any $v,\bar{v},z,\bar{z}\in \mathbb{R}^{d}$. Next, we consider
the above truncated equation (\ref{truncation_equ}) with different
starting points $V_t^{t,v}=v$ and $V_t^{t,\bar{v}}=\bar{v}$,
\begin{eqnarray*}
&&\bar{y}(V_t^{t,v},t)-\bar{y}(V_t^{t,\bar{v}},t)\\
&=&g(V_{T}^{t,v})-g(V_{T}^{t,\bar{%
v}})+\frac{1}{\gamma }(y(V_T^{t,v})-y(V_T^{t,\bar{v}}))\\
&&+\int_{t}^{T}\frac{F(V_{s}^{t,v},\gamma q(\bar{z}(V_s^{t,v},s)
))-F(V_{s}^{t,\bar{v}},\gamma q(\bar{z}(V_s^{t,\bar{v}},s)))%
}{\gamma} ds\\
&&-\int_{t}^{T}\left( \bar{z}(V_s^{t,v},s)-\bar{z}(V_s^{t,\bar{v}},s)%
\right) ^{tr}dW_{s}.
\end{eqnarray*}
For $s\in \left[ t,T\right],$ we define the process $M_{s} $ as
\begin{equation*}
M_{s}:=\frac{\left( F(V_{s}^{t,\bar{v}},\gamma
q(\bar{z}(V_s^{t,{v}},s)))-F(V_{s}^{t,\bar{v}},\gamma
q(\bar{z}(V_s^{t,\bar{v}},s)
))\right) \left( \bar{z}(V_s^{t,{v}},s)-\bar{z}(V_s^{t,\bar{v}},s)\right)}{%
\gamma \left \vert
\bar{z}(V_s^{t,{v}},s)-\bar{z}(V_s^{t,\bar{v}},s)\right \vert ^{2}},
\end{equation*}%
whenever $\bar{z}(V_s^{t,{v}},s)-\bar{z}(V_s^{t,\bar{v}},s)\neq 0$,
and $0$ otherwise. In turn,
\begin{eqnarray*}
&&\bar{y}(V_t^{t,v},t)-\bar{y}(V_t^{t,\bar{v}},t)\\
&=&g(V_{T}^{t,v})-g(V_{T}^{t,\bar{v}})+\frac{1}{\gamma }%
(y(V_T^{t,v})-y(V_{T}^{t,\bar{v}}))\\
&&+\int_{t}^{T}\frac{F(V_{s}^{t,v},\gamma q(\bar{z}(V_s^{t,v},s)
))-F(V_{s}^{t,\bar{v}},\gamma q(\bar{z}(V_s^{t,{v}},s)))%
}{\gamma} ds\\
&&-\int_{t}^{T}\left( \bar{z}(V_s^{t,v},s)-\bar{z}(V_s^{t,\bar{v}},s)%
\right)^{tr}(dW_{s}-M_{s}ds).
\end{eqnarray*}%

Note, however, that $M_{s}$ is bounded as it
follows from inequality\textbf{\ }(\ref{driver11}). Thus, we can define $%
W_{s}^{M}:=W_{s}-\int_{0}^{s}M_{u}du,$ which is a Brownian motion
under some measure $\mathbb{Q}^{M}$ equivalent to $\mathbb{P} $,
defined on $\mathcal{F}_{T}.$ In turn,
\begin{eqnarray*}
&& |\bar{y}(v,t)-\bar{y}(\bar{v},t)|\\
&\leq& C_{g}E_{\mathbb{Q}^{M}}\left[ |V_{T}^{t,v}-V_{T}^{t,\bar{v}}||%
\mathcal{F}_{t}\right] +\frac{C_{v}}{\gamma (C_{\eta }-C_{v})}E_{\mathbb{Q}%
^{M}}\left[ |V_{T}^{t,v}-V_{T}^{t,\bar{v}}||\mathcal{F}_{t}\right]\\
&&+\frac{C_{v}(1+\gamma q)}{\gamma }E_{\mathbb{Q}^{M}}\left[ \left.
\int_{t}^{T}|V_{s}^{t,v}-V_{s}^{t,\bar{v}}|ds\right \vert \mathcal{F}_{t}%
\right]\\
&\leq& \left( C_{g}+\frac{C_{v}}{\gamma (C_{\eta }-C_{v})}+\frac{%
C_{v}(1+\gamma q)}{\gamma C_{\eta }}\right) |v-\bar{v}|\text{,}
\end{eqnarray*}%
where we used the Lipschitz continuity conditions on $g(v)$, $y(v)$ and $%
F(v,\gamma q(z))$ with respect to $v$ (cf. (\ref{payoff-european}), (\ref%
{gradient_for_y}) and (\ref{driver00}) respectively), and the
exponential ergodicity condition (i) in Proposition
\ref{prop:forward_estimate}.

From the relationship $\kappa ^{tr}\nabla
\bar{y}(V_{s}^{t,v},s)=\bar{z}(V_s^{t,v},s)$, we further deduce that
$q(\bar{z}(V_s^{t,v},s))=\bar{z}(V_s^{t,v},s)$, and that
$|\bar{z}(V_s^{t,v},s)|\leq q$. In other words, the truncation does
not play a role, and the pair
$(\bar{y}(V_{s}^{t,v},s),\bar{z}(V_{s}^{t,v},s))$, $s\in[t,T]$, also
solves (\ref{Auxilary_BSDE}). Therefore, $q$ is the uniform bound of
$\hat{z}(V_{s}^{t,v},s)$.
\end{proof}


\newpage

\begin{figure}\label{fig:1}
\begin{center}
\begin{subfigure}[b]{0.95\textwidth}
\begin{center}
\includegraphics[width=\textwidth]{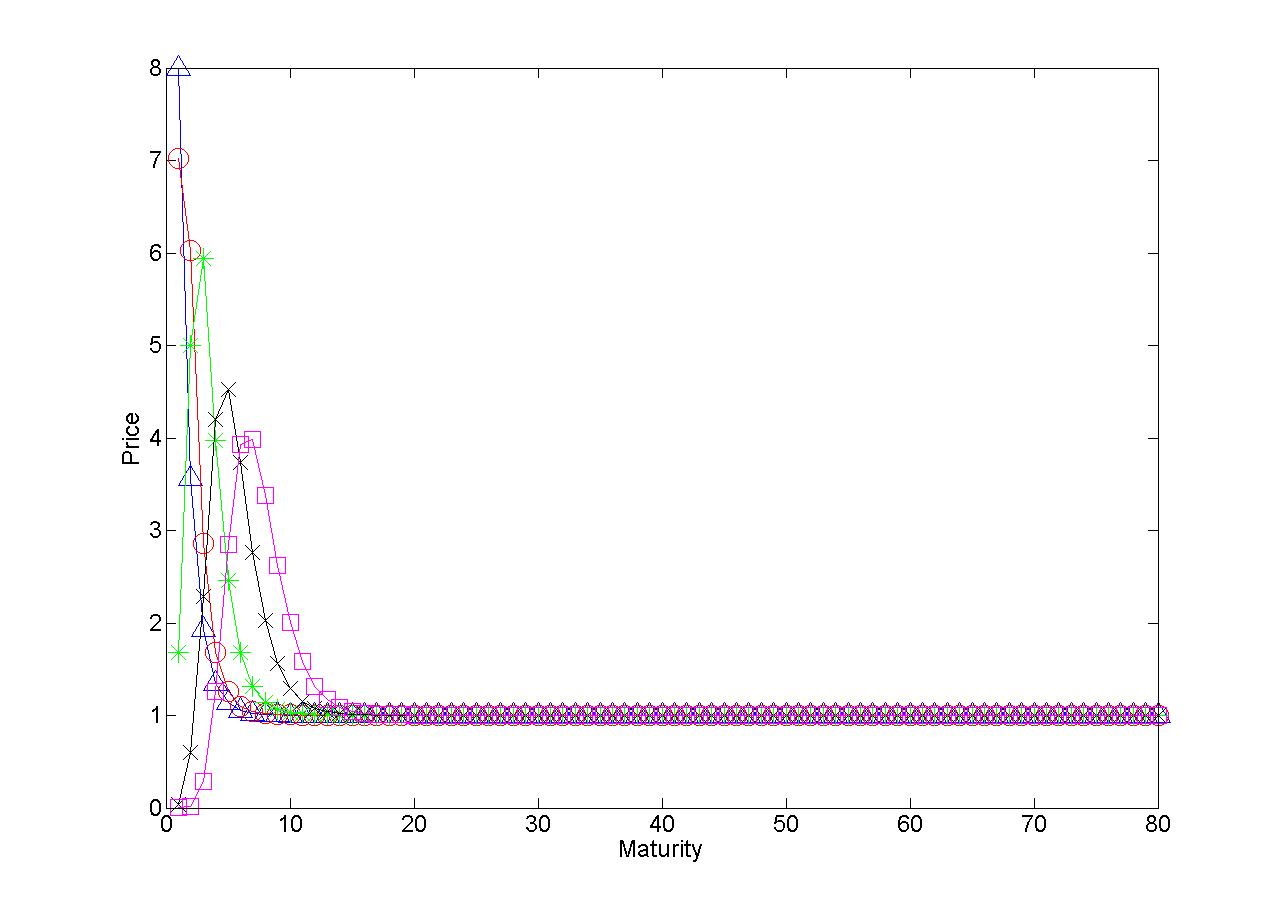}
\caption[Forward]%
{{\small Forward entropic risk measure}}
\end{center}
\end{subfigure}
\quad
\begin{subfigure}[b]{0.95\textwidth}
\begin{center}
\includegraphics[width=\textwidth]{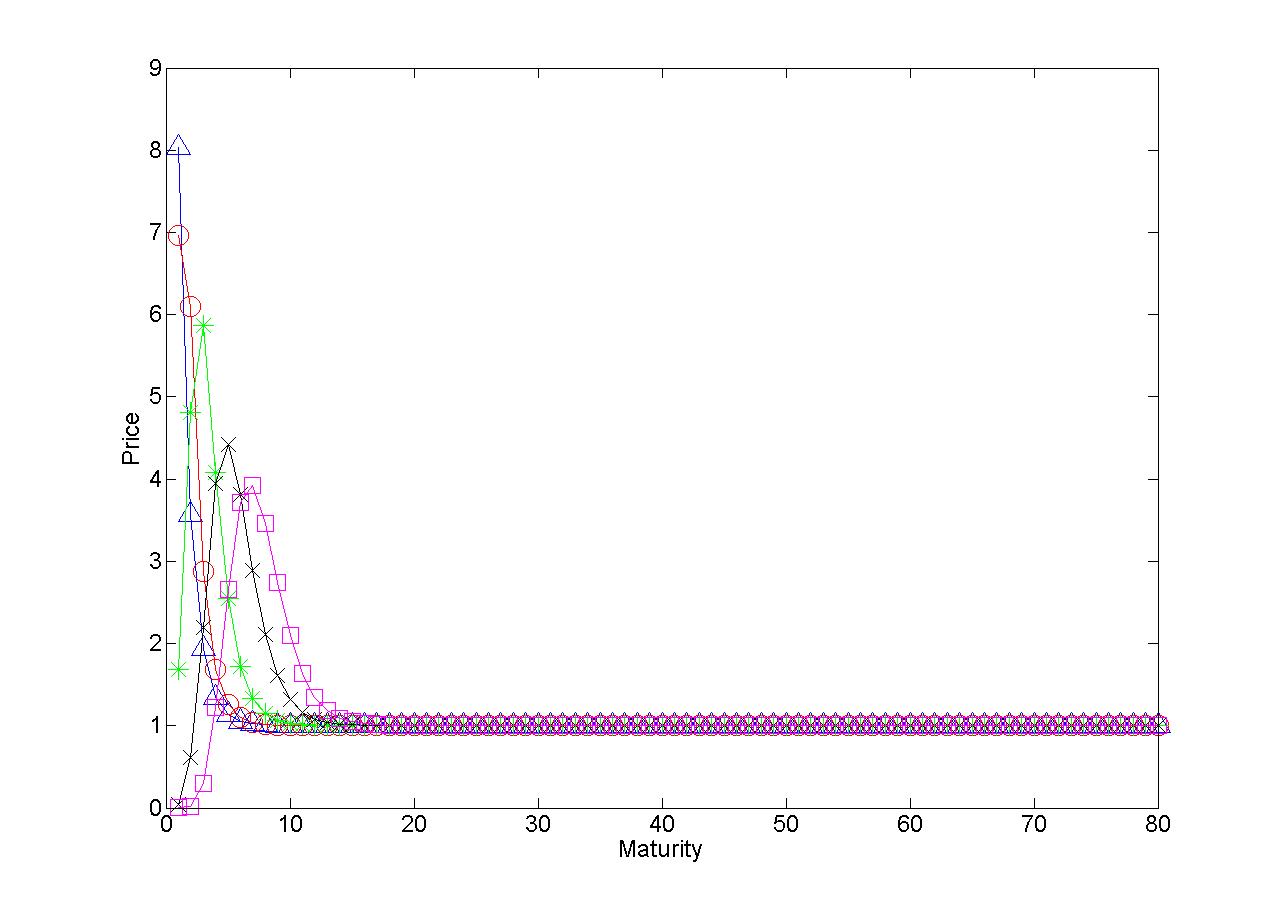}
\caption[Classical]%
{{\small Classical entropic risk measure}}
\end{center}
\end{subfigure}
\caption[$ T $-normalized forward entropic risk measure against the maturity $ T $]
{\small Forward and classical entropic risk measures against the maturity $ T $, with $ \gamma=1 $, $ \alpha=0.1 $, $ K_1=K_2=10 $, $ \kappa_1=0.9 $, $ \kappa_2=0.1 $; blue upward-pointing triangle for $ v_0=5 $, red circle for $ v_0=7.5 $, green asterisk for $ v_0=10 $, black cross for $ v_0=12.5 $ and magenta square $ v_0=15 $.}
\end{center}
\end{figure}

\newpage

\begin{figure}\label{fig:2}
\begin{center}
\begin{subfigure}[b]{0.95\textwidth}
\begin{center}
\includegraphics[width=\textwidth]{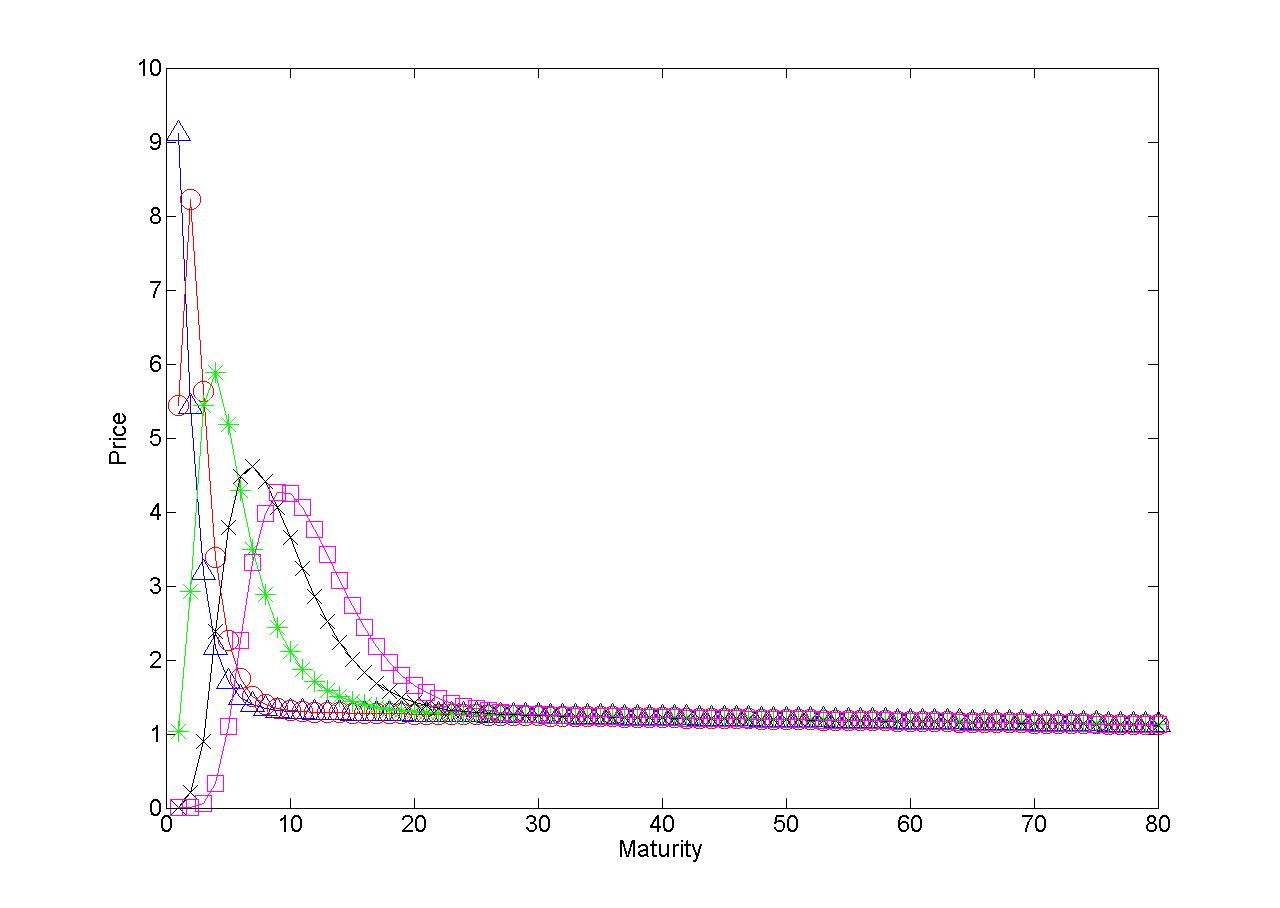}
\caption[Forward]%
{{\small Forward entropic risk measure}}
\end{center}
\end{subfigure}
\quad
\begin{subfigure}[b]{0.95\textwidth}
\begin{center}
\includegraphics[width=\textwidth]{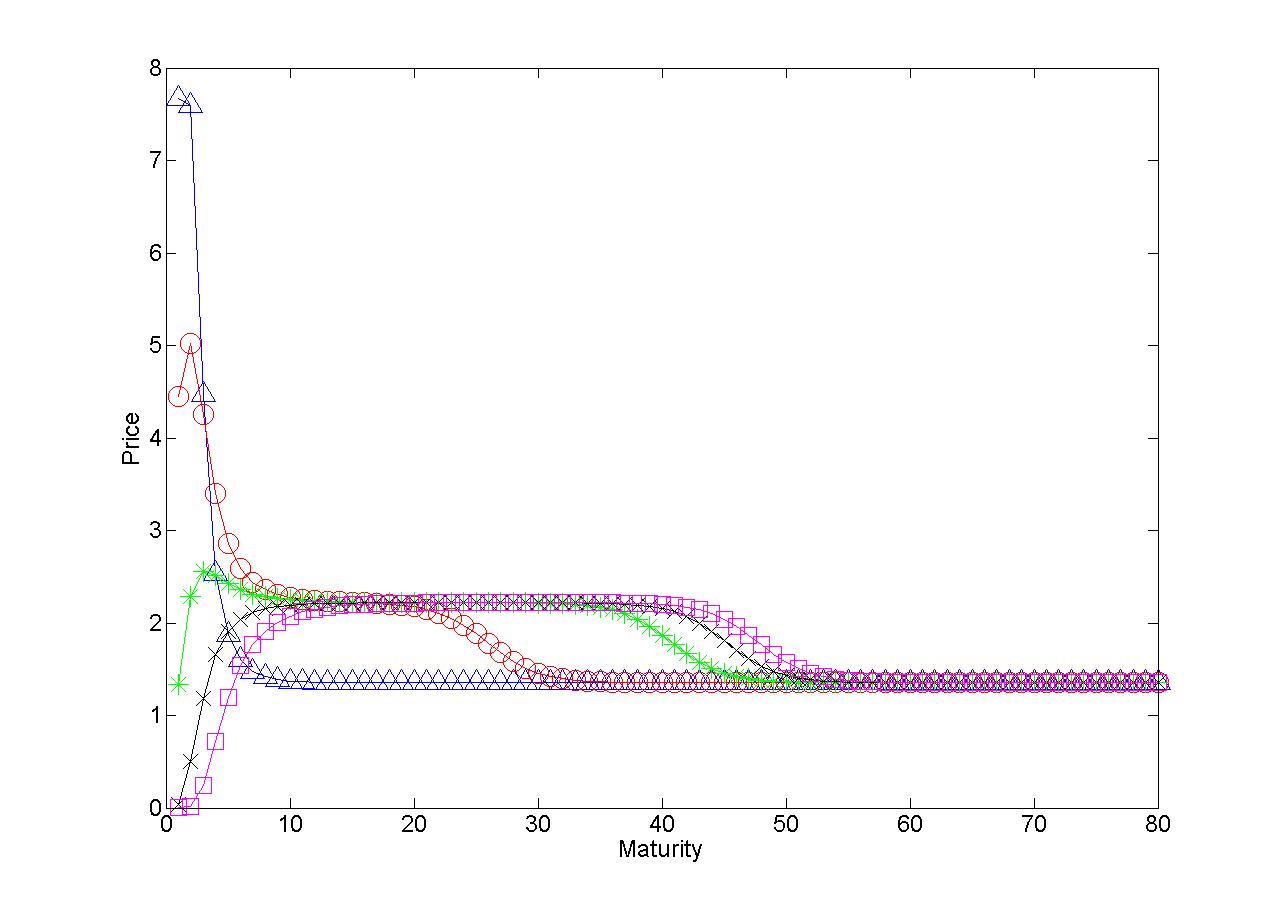}
\caption[Classical]%
{{\small Classical entropic risk measure}}
\end{center}
\end{subfigure}
\caption[$ T $-normalized forward entropic risk measure against the maturity $ T $]
{\small Forward and classical entropic risk measures against the maturity $ T $, with $ \gamma=1 $, $ \alpha=0.1 $, $ K_1=K_2=10 $, $ \kappa_1=0.5 $, $ \kappa_2=0.5 $; blue upward-pointing triangle for $ v_0=5 $, red circle for $ v_0=7.5 $, green asterisk for $ v_0=10 $, black cross for $ v_0=12.5 $ and magenta square $ v_0=15 $.}
\end{center}
\end{figure}
\newpage

\begin{figure}\label{fig:3}
\begin{center}
\begin{subfigure}[b]{0.95\textwidth}
\begin{center}
\includegraphics[width=\textwidth]{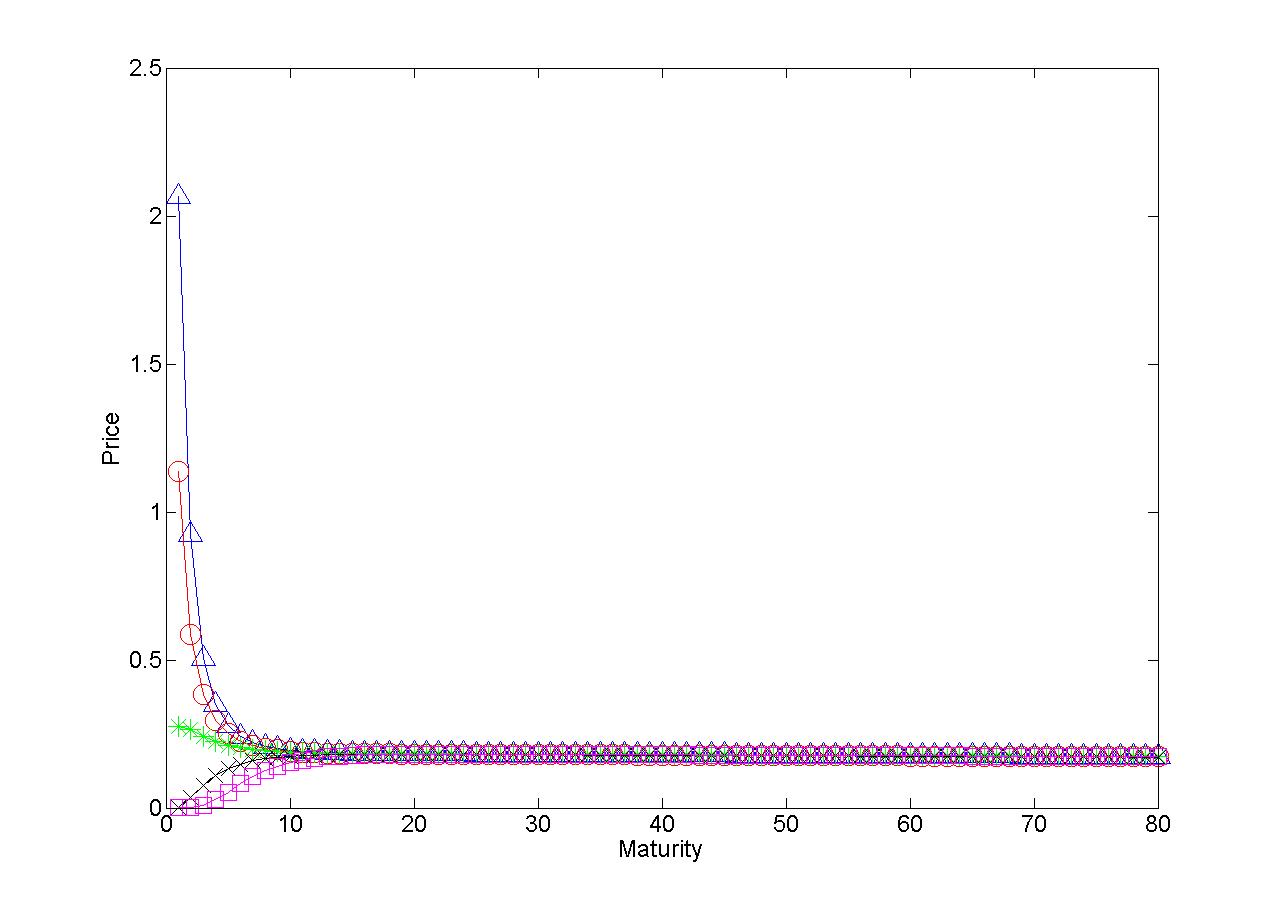}
\caption[Forward]%
{{\small Forward entropic risk measure}}
\end{center}
\end{subfigure}
\quad
\begin{subfigure}[b]{0.95\textwidth}
\begin{center}
\includegraphics[width=\textwidth]{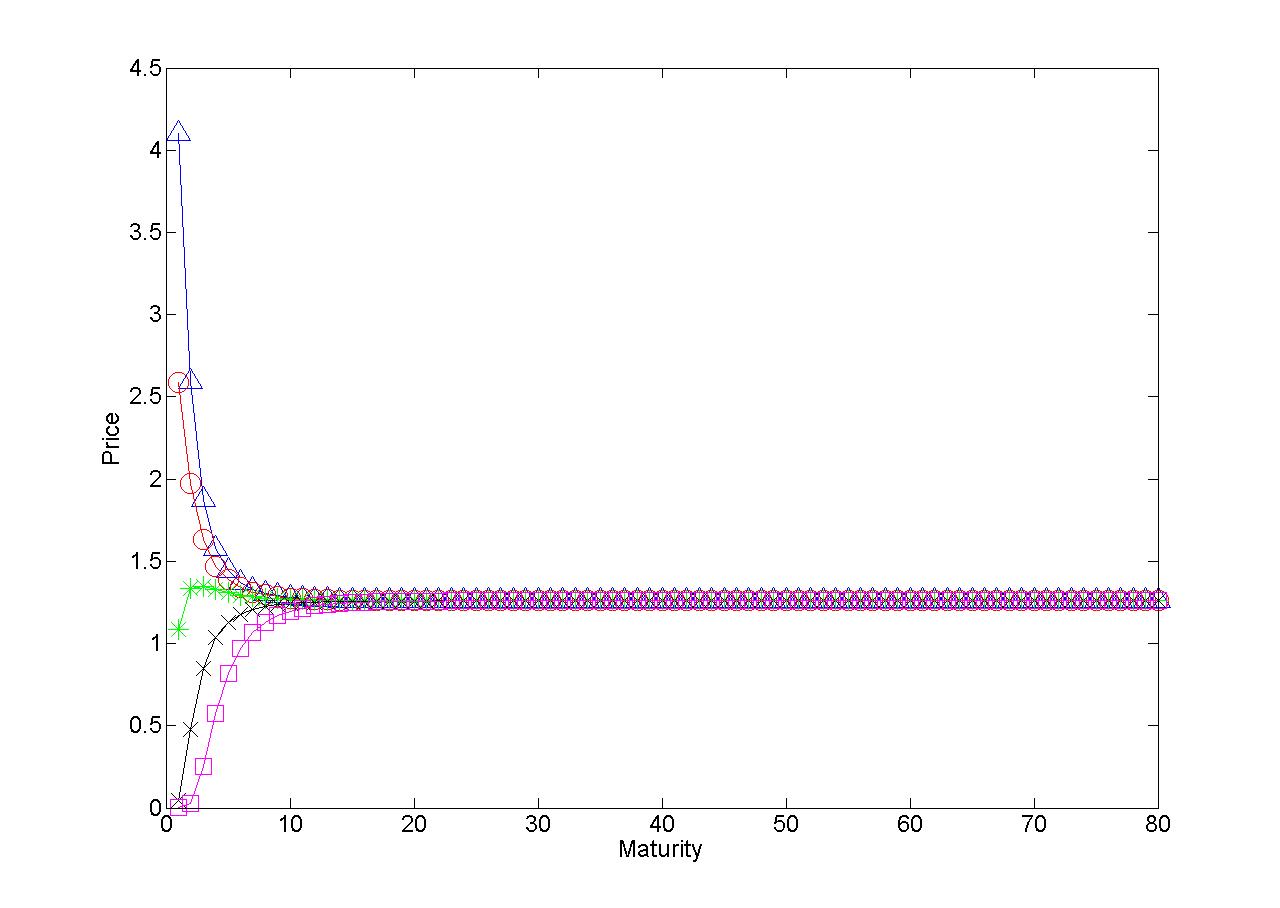}
\caption[Classical]%
{{\small Classical entropic risk measure}}
\end{center}
\end{subfigure}
\caption[$ T $-normalized forward entropic risk measure against the maturity $ T $]
{\small Forward and classical entropic risk measures against the maturity $ T $, with $ \gamma=1 $, $ \alpha=0.1 $, $ K_1=K_2=10 $, $ \kappa_1=0.0 $, $ \kappa_2=1.0 $; blue upward-pointing triangle for $ v_0=5 $, red circle for $ v_0=7.5 $, green asterisk for $ v_0=10 $, black cross for $ v_0=12.5 $ and magenta square $ v_0=15 $.}
\end{center}
\end{figure}

\end{document}